\def\re{{\rm Re}}
\def\im{{\rm Im}}
\def\dim{\rm dim}
\def\Ext{\rm Ext}
\def\wt#1{{{\widetilde #1} }}
\def\wh#1{{{\,\widehat #1\,} }}
\def\graph{{\rm gr\,}}
\def\ran{{\rm ran\,}}
\def\dom{{\rm dom\,}}
\def\ker{{\rm ker\,}}
\def\supp{{\rm supp\,}}
\def\diag{{\rm diag\,}}
\newcommand\dN{{\mathbb{N}}}
\newcommand\dR{{\mathbb{R}}}
\newcommand\dC{{\mathbb{C}}}
\newcommand{\bO}{{\mathbb{O}}}
\newcommand\dZ{{\mathbb{Z}}}
\newcommand\gotD{{\mathfrak{D}}}
\newcommand\gotH{{\mathfrak{H}}}
\newcommand\gotK{{\mathfrak{K}}}
\newcommand\gotN{{\mathfrak{N}}}
\newcommand\gotS{{\mathfrak{S}}}
\newcommand\gotT{{\mathfrak{T}}}
\newcommand{\ga}{{\alpha}}
\newcommand{\gb}{{\beta}}
\newcommand{\gd}{{\delta}}
\newcommand{\gD}{{\Delta}}
\newcommand{\gga}{{\gamma}}
\newcommand{\gG}{{\Gamma}}
\newcommand{\gl}{{\lambda}}
\newcommand{\gO}{{\Omega}}
\newcommand{\gs}{{\sigma}}
\newcommand\gS{{\Sigma}}
\newcommand{\gT}{{\Theta}}
\newcommand{\gt}{\tau}
\newcommand\cB{{\mathcal{B}}}
\newcommand\cC{{\mathcal{C}}}
\newcommand\cD{{\mathcal{D}}}
\newcommand\cH{{\mathcal{H}}}
\newcommand\cN{{\mathcal{N}}}
\newcommand\cT{{\mathcal{T}}}
\newcommand\cZ{{\mathcal{Z}}}
\newtheorem{theorem}{Theorem}[section]
\newtheorem{proposition}[theorem]{Proposition}
\newtheorem{corollary}[theorem]{Corollary}
\newtheorem{lemma}[theorem]{Lemma}
\newtheorem{definition}[theorem]{Definition}
\newtheorem{example}[theorem]{Example}
\newtheorem{remark}[theorem]{Remark}
\newcommand{\ba}{\begin{array}}
\newcommand{\ea}{\end{array}}
\newcommand{\bea}{\begin{eqnarray}}
\newcommand{\eea}{\end{eqnarray}}
\newcommand{\bead}{\begin{eqnarray*}}
\newcommand{\eead}{\end{eqnarray*}}
\newcommand{\be}{\begin{equation}}
\newcommand{\ee}{\end{equation}}
\newcommand{\bed}{\begin{displaymath}}
\newcommand{\eed}{\end{displaymath}}
\newcommand{\bl}{\begin{lemma}}
\newcommand{\el}{\end{lemma}}
\newcommand{\bt}{\begin{theorem}}
\newcommand{\et}{\end{theorem}}
\newcommand{\bc}{\begin{corollary}}
\newcommand{\ec}{\end{corollary}}
\newcommand{\bd}{\begin{definition}}
\newcommand{\ed}{\end{definition}}
\newcommand{\bspi}{\begin{split}}
\newcommand{\espi}{\end{split}}
\newcommand{\la}{\label}
\newenvironment{proof}%
{\begin{sloppypar}\noindent{\bf Proof.}}%
{\hspace*{\fill}$\square$\end{sloppypar}}
\newcommand{\slim}{\,\mbox{\rm s-}\hspace{-2pt} \lim}
\newcommand{\transpose}[1]{\ensuremath{#1^{\scriptscriptstyle t}}}
\numberwithin{equation}{section}
\title{Boundary triplets, tensor products\\ and point contacts to reservoirs}
\author{A.A.~Boitsev
\and
J.F.~Brasche
\and
M.M.~Malamud
\and
H.~Neidhardt
\and
I.Yu.~Popov
}
\begin{document}

\maketitle

\begin{abstract}
\noindent
We consider symmetric operators of the form $S := A\otimes I_{\gotT} + I_{\gotH} \otimes
T$ where $A$ is symmetric and $T = T^*$ is (in general) unbounded. Such operators
naturally arise in problems of simulating  point contacts to
reservoirs. We construct a boundary triplet $\Pi_S$ for $S^*$ preserving
the tensor structure. The corresponding $\gamma$-field and Weyl function are
expressed by means of the $\gamma$-field and Weyl function corresponding to the
boundary  triplet $\Pi_A$ for  $A^*$ and the spectral measure of $T$.
Applications to 1-D Schr\"odinger and Dirac operators  are given. A model of electron
transport through a quantum dot assisted by cavity photons is proposed. In this
model the boundary operator is chosen to be the well-known Jaynes-Cumming operator which is regarded
as the Hamiltonian of the quantum dot.\\

\noindent
Mathematics Subject Classification: 47A80, 47B25, 81Q05, 81Q37\\

\noindent
Keywords: point contacts, reservoirs, symmetric operators, self-adjoint extensions, boundary triplet, Weyl function, quantum transport
\end{abstract}

\newpage
\tableofcontents

\section{{Introduction}}

In the following we are interested in the description of point contacts of quantum systems to quantum reservoirs.
Let us recall the general philosophy of modeling of point contacts in quantum mechanics. Let $\{\gotK,S_0\}$ be  a quantum system
where $S_0$ is a self-adjoint operator acting on the separable Hilbert space $\gotK$. To describe point interactions one restricts the self-adjoint operator $S_0$ to a densely defined
closed symmetric operator $S$ and extends it subsequently to another self-adjoint operator $S'$. The new self-adjoint operator $S'$ is regarded as the Hamiltonian
taking into account point interactions. Which extension one has to choose depends on the physical problem. Typical examples for instance are $\gd$ and $\gd'$-point interactions, cf.
\cite{AlbGeszHoegHolden2005}.
From the mathematical point of view it is interesting to note that the problem of describing point interactions fits into the framework of extension theory for symmetric operators.

To describe point contacts of a quantum system with a reservoir one has to specify the approach. At first, one considers the
compound system consisting of the quantum system $\{\gotH,A_0\}$ and the reservoir $\{\gotT,T\}$ where $A_0$ and $T$ are self-adjoint operators on the separable Hilbert spaces
$\gotH$ and $\gotT$, respectively. Its Hamiltonian is given
by the self-adjoint operator
\be\la{eq:1.1}
S_0 := A_0 \otimes I_{\gotT} + I_\gotH \otimes T
 \ee
where $S_0$ acts in the Hilbert space $\gotK := \gotH \otimes \gotT$. To model a contact
to  the quantum reservoir,  the Hamiltonian $S_0$ is usually additively perturbed in a
suitable manner, cf. \cite{Davies1976,DerezJaks2004,DerezFrueb2006}. On the other hand,
to model point contacts to reservoirs we use the restricting-extension procedure: We
restrict the  operator $A_0 = A_0^*$ to a densely defined closed symmetric operator $A$
and consider the closed symmetric operator
\be\la{eq:1.2}
S := A \otimes I_{\gotT} + I_\gotH \otimes T \subset S_0.
\ee
From the physical point of view, the restriction of $A_0$ to $A$ and the subsequent extension to a self-adjoint operator $S'$
can be regarded as the opening of the quantum system $\{\gotH,A_0\}$ and the subsequent coupling of it
to a reservoir. The self-adjoint extension $S'$ should be different from $S_0$.
However,   self-adjoint extensions preserving tensor product form $\widetilde S =
\widetilde A \otimes I_{\gotT} + I_{\gotH} \otimes T$ with $\widetilde A = \widetilde
A^*$ being an extension of $A$, do not describe any interaction with the reservoir. From
the physical point of view it is very important to describe all those extensions, which
really describe point interactions with the reservoir.

In this paper  we investigate  operator  \eqref{eq:1.2}  in the framework of boundary
triplets and the corresponding Weyl functions. This is a new approach to the extension
theory of symmetric operators that has been  developed during the last three decades
(see e.g. \cite{DHMS12,DM87,DM91,DM95,GG91,Koch79,Schmuedgen2012}).

A boundary triplet $\Pi_A = \{\cH^A,\gG^A_0,\gG^A_1\}$ for the adjoint operator $A^*$ of
a densely defined closed symmetric operator $A$ consists of an auxiliary Hilbert space
$\cH^A$ and linear mappings $\gG_0^A,\gG_1^A: \dom(A^*) \longrightarrow \cH^A$ such that
the abstract Green's identity
\bed (A^*f,g) - (f,A^*g) = (\gG^A_1f,\gG^A_0g) - (\gG^A_0f,\gG^A_1g),\qquad f,g\in
\dom(A^*), \eed
holds  and the mapping
\bed \gG^A :=
\begin{pmatrix}
\gG^A_0\\
\gG^A_1
\end{pmatrix}: \dom(A^*)  \longrightarrow
\begin{matrix}
\cH^A\\
\oplus\\
\cH^A
\end{matrix}
\eed
is surjective.

A boundary triplet for $A^*$ exists whenever $A$ has equal deficiency indices.
It  plays the role of a "coordinate system" for the quotient space
$\dom(A^\ast)/ \dom(A)$  and leads to a natural  parametrization   of the
self-adjoint extensions of $A$ by means of self-adjoint linear relations (multi-valued
operators) in $\cH$, see \cite{GG91} and \cite{Schmuedgen2012} for details.
More precisely,  any  self-adjoint extension $\wt A$ of $A$ defines  a self-adjoint
relation $\gT := \gG^A \dom(\wt A)$ in $\cH^A$ and vice versa. We write $A_{\gT} = \wt
A$, i.e.
   \bed%\la{1.5}
\dom(\wt A) = \dom(A_{\gT}) :=\{f \in \dom(A^*): \gG^A f \in \gT\}.
  \eed
If $\gT$ is an operator $\gT= B$, this relation  takes the form
   \be\la{1.5A}
\dom(\wt A) = \dom(A_{B}) :=\{f \in \dom(A^*): \gG^A_1 f = B\gG^A_0 f \}
  \ee
and  looks like  an abstract boundary condition. Among all self-adjoint extensions
there are two augmented ones: $A_0 := A^*\upharpoonright\ker(\gG^A_0)$ and $A_1 :=
A^*\upharpoonright\ker(\gG^A_1)$ which correspond to self-adjoint relations $\gT_0 :=
(0,f)^t$ and $\gT_1 := (f,0)^t$, $f \in \cH$, respectively. Clearly, $\gT^{-1}_1 =
\gT_0$.

The main analytical tool in this approach is  the abstract Weyl function $M^A(\cdot)$
which was introduced and studied in \cite{ DM91,DM87}. This abstract  Weyl
function $M^A(\cdot)$ plays a similar role in the theory of  boundary triplets as the
classical Weyl-Titchmarsh function does  it in the  theory of Sturm-Liouville operators. In
particular, it allows one to investigate spectral properties of extensions (see
\cite{BMN02, DM91, Mal92, MalNei09}).  The Weyl function is defined by
\bed
M^A(z) := \gG_1^A\gga^A(z), \quad z \in \rho(A_0),
\eed
where
   \bed
\gga^A(z) := (\gG_0\upharpoonright \gotN_z)^{-1}\quad, \quad \gotN_z := \ker(A^* -
z),\quad z \in \rho(A_0).
   \eed
Here $\gga^A(\cdot)$ is the so-called $\gamma$-field,  the second important quantity
related to a boundary triplet $\Pi_A$.

Emphasize that a boundary triplet for $A^*$ is not unique. Its role in extension theory
is similar to that of a coordinate system in analytic geometry. The problem is to
construct an adequate  ("good") boundary triplet such that the corresponding Weyl
function and the boundary operator corresponding to the extension of interest have
"good" properties. To demonstrate the later point we mention boundary triplets for
direct sum of maximal  1-D Schr\"odinger operators $-d^2/x^2 +q$ in $L^2[x_{n-1}, x_n]$
constructed in  \cite{KosMal10,KosMalNat2016, CarlMalPos2013}, where the boundary
operator corresponding to 1-D Schr\"odinger operator with infinitely many point
interactions in $L^2(\dR_+) = \bigoplus_{n=1}^{\infty} L^2[x_{n-1}, x_n]$ is a
Jacobi matrix.

This approach has successfully been applied to the characterization of the absolutely
continuous spectrum of self-adjoint realizations  \cite{BMN02, MalNei09}, as well as to
the investigation of the spectral properties of  1-D Schr\"odinger and 1-D Dirac
operators with  point interactions
\cite{EckKosMalTesch2014,KosMal10,KosMal2013,KosMal2014,KosMalNat2016, CarlMalPos2013},
3-D Schr\"odinger operators with point interactions \cite{MalSchmu2012}, and  elliptic
boundary value problems in domains with compact boundaries \cite{Gru68, Gru09, Gru11},
\cite{BGW2009, BMNW2008},  \cite{Mal10}, \cite{BLL13, BLLLP10}, to the scattering theory
\cite{BehMalNei08, BehMaNei2017}, etc. Especially we mention the works \cite{Gor71,
GG91,MalNei2012} (see also the literature quoted therein) where the
Sturm-Liouville operator with unbounded operator potential was treated
as an operator admitting the tensor structure \eqref{eq:1.2}.

Our goal in this paper is to  apply  the boundary triplet  approach to the
problem of coupling of a quantum system to a reservoir by point interactions. More
precisely, the mathematically rigorous problem is: Given a boundary triplet $\Pi_A =
\{\cH^A,\gG^A_0,\gG^A_1\}$ for $A^*$   one should construct an adequate ("good")
boundary triplet $\Pi_S$ for $S^*$ with $S$ given by \eqref{eq:1.2}  and such that
\bed
S_0 := S^*\upharpoonright\ker(\gG_0^S) = A_0 \otimes I_{\gotT} + I_\gotH \otimes T
\eed
and compute the corresponding Weyl function and
$\gamma$-field.

So starting with a given boundary triplet $\Pi_A$ for $A^*$ a ``good'' candidate for a
boundary triplet $\Pi_S := \{\cH^S,\gG^S_0,\gG^S\}$ for $S^*$ would be
       \be\la{eq:1.8}
\cH^S = \cH^A \otimes \gotT, \quad  \gG^S_0 := \gG^A_0 \otimes I_{\gotT}
\quad \gG^S_1 := \gG^A_1 \otimes I_{\gotT}.
   \ee
This triplet feels   the tensor structure of the problem.  For instance, according to
\eqref{1.5A} and \eqref{eq:1.8}  an  extension  $S' = A' \otimes I_{\gotT} + I_{\gotH}
\otimes T\in \Ext_S$ admits a representation $S' = S_{B'_S}$ with the boundary operator
$B'_S = (B'_S)^*$ having  the tensor form,  $B'_S = B'_A \otimes I_{\gotT}$, where $B'_A
= (B'_A)^*$ is the boundary operator of  $A' = A'^*\in \Ext_A$ in the triplet $\Pi_A$,
i.e. $A' = A_{B'_A }$. In particular, we have $S_0 = A_0 \otimes I_\gotT + I_\gotH \otimes T$.
Hence any  extension $S_{B'} = S_{B'}^*\in \Ext_S$ with the boundary operator $B'$ not
admitting tensor structure, can be regarded as  a Hamiltonian  describing  a point
interaction with a reservoir.

It is shown in \cite{BoiNeiPop2013,MalNei2012}
that the triplet \eqref{eq:1.8} is a boundary triplet for $S^*$ whenever $T$  is bounded.
However this fails for  unbounded $T$,  a case naturally arising  in physical
problems. This case requires new ideas and is much more technically involved.

Let  $\Pi_A = \{\cH^A,\gG^A_0,\gG^A_1\}$ be a boundary triplet for $A^{*}$ and let
$M^A(\cdot)$ and $\gamma^A(\cdot)$  be the corresponding  Weyl function and
$\gamma$-field, respectively.  Using the regularization procedure introduced and
developed in \cite{MalNei2012,KosMal10, CarlMalPos2013} we construct a special boundary
triplet $\Pi_S=  \{\cH^S,\gG^S_0,\gG^S_1\}$  for $S^*$ such that  $S_0 :=
S^*\upharpoonright\ker(\gG^S_0) = A_0 \otimes I_{\gotT} + I_{\gotH} \otimes T$.
Moreover, we show in  Theorem \ref{th:2.5} that the corresponding   $\gamma$-field
$\gga^S (\cdot)$ and Weyl function $M^S(\cdot)$ are given by
  \be\la{Gamma-field_for_S_Intro}
\gga^S (z)f  := \int_\dR \left(\gga^A(z-\gl)\ \frac{1}{\sqrt{\im(M^A(i-\gl))}} \otimes
I_{\gotT}\right) \widehat E_T(d\gl)f,
  \ee
and
  \be\la{WFunc_for_S_Intro}
\begin{split}
M^S(z)f  := \int_\dR \left(L^A(z-\gl,i-\gl)\otimes I_{\gotT}\right)\widehat E_T(d\gl)f
\qquad z \in \dC_\pm.
\end{split}
\ee
where
  \bed\la{L-A_Intro}
L^A (z,\zeta) := \frac{1}{\sqrt{\im(M^A(\zeta))}}(M^A(z)
-\re(M^A(\zeta)))\frac{1}{\sqrt{\im(M^A(\zeta))}},
     \eed
$z \in \dC_\pm$, $\zeta \in \dC_+$, and $\widehat E_T(\cdot) := I_{{\mathcal H}^A} \otimes E_T(\cdot)$, where  $E_T(\cdot)$ is the spectral
measure of $T=T^*$. Here both improper integrals exist for every  $f \in \cH^A \otimes \gotT$.

We apply  formula \eqref{WFunc_for_S_Intro} for $M^S(\cdot)$  to show that for
non-negative $A\ge 0$ and $T \ge 0$  the Friedrichs and Krein extensions of $S\ge 0$ are
given by
\begin{equation}\label{Intro_Fried_Kre_ext}
\widehat S_F =  \widehat A_F \otimes I_{\gotT} + I_\gotH \otimes T,  \qquad  \widehat
S_K = \widehat A_K \otimes I_{\gotT} + I_\gotH \otimes T,
    \end{equation}
where $\widehat A_F$ and $\widehat A_K$ are the Friedrichs and Krein extensions of $A$,
respectively. In turn, we apply these formulas to show that if $T\in\mathcal B(\gotT)$,
then the operator $S$ has LSB-property (each semibounded boundary operator $B$ defines a
semibounded extension $S_B$ of $S$) if and only if the operator $A\ge 0$ has the LSB-property.

This  approach  can be used to propose a model describing rigorously the electron
transport through a quantum dot assisted by photons, a topic which is of great interests
for physicists, cf.
\cite{AbduTangManoGudm2016,Hu1993,KnouJauhMcCoDixoMcEuNazaVaarFoxon1994,PederHolmBuett1994}
etc. In this case we start from  the  operator \eqref{eq:1.1} with $A_0$ being
Sturm-Liouville operator on the line with piece-wise constant potential and unbounded
$T$ given by $T= b^*b,$ where $b^*$ and $b$ are the creation and annihilation operators,
respectively. We define $A$ as a restriction of  $A_0$ to the domain $\dom(A) =
W^{2,2}_0(\dR_-)\oplus W^{2,2}_0(\dR_+)$  and then define the operator $S$  by
\eqref{eq:1.2}.   We construct a boundary triplet for $S^*$ feeling  the tensor
structure \eqref{eq:1.2}, and compute the  Weyl function (a special case of
\eqref{WFunc_for_S_Intro}).

An interesting feature of our approach is  to define Hamiltonian describing the
point contact to the reservoir  by means of  a boundary operator.
To this end  we choose the boundary operator to be  well-known Jaynes-Cumming operator
(see  \cite{jaynes1963})
  \bed
C_{JC} = B \otimes I_\gotT+ I_{\cH^A} \otimes T + \gt V_{JC}, \quad \gt \in \dR,
  \eed
The operator  $C_{JC}$ has a physical meaning and is regarded as the Hamiltonian of the
quantum dot. From the mathematical point of view it is interesting to note that the
Jaynes-Cumming Hamiltonian is in fact a Jacobi matrix.

In this connection we mention the papers by Pavlov
\cite{Pavlov1984,Pavlov87,PavlShus88} treating several solvable physical  models in the
framework of extension theory.

In a forthcoming paper we plan to express  the scattering matrix
for a naturally related scattering system by means of the Weyl function,  using results
from \cite{BehMalNei08,BehMaNei2017}.
 Explicit knowledge of the scattering matrix allows to
calculate the current going through the quantum dot using the so-called Landauer-B\"uttiker
formula invented in \cite{Buettiker1985,Landauer1957}, see also
\cite{AschJakPauPil2007,CorNeiWilZag2014} for a mathematically rigorous proof of this
formula. Using this approach our final goal is to compute explicitly the electron and photon current
going through the quantum dot of the proposed model.

The paper is organized as follows. In Section 2 we give a short introduction into the
boundary triplet approach. In particular, we consider the case of a direct sum of
symmetric operators. Section 3 is devoted to operator-spectral integrals needed in the
following. In Section 4 we consider boundary triplets for tensor products. First we
compute explicitly the Weyl function and $\gamma$-field for the triplet \eqref{eq:1.8}
with a bounded $T\in \cB(\gotH)$ by using the functional calculus developed in Section
3. In Section 4.2 we construct an adequate  boundary triplet for $S^*$ assuming $T$ to
be unbounded and prove formulas \eqref{WFunc_for_S_Intro} and
\eqref{Gamma-field_for_S_Intro}. Section 5 is devoted to the case of non-negative
operators $A$ and $T$, a situation typical in physics. In particular, formulas
\eqref{Intro_Fried_Kre_ext} are proved here.  In Section 6  we illustrate  the abstract
results of  Sections 4,5, by typical  physical examples. In particular, we consider
Schr\"odinger and Dirac operators and a reservoir of bosons on half-line and bounded
intervals. Finally, in Section 7 we use the previous examples to propose a simple model
describing a photon assisted electronic transport through a quantum dot.

{\bf Notation.}
Let  $\gotH_1$, $\gotH_2$ be separable Hilbert spaces. By  $\cC(\gotH_1, \gotH_2)$
($\cB(\gotH_1,\gotH_2)$) we denote  the set of closed (bounded) linear operators from
$\gotH_1$ to $\gotH_2$; $\cC(\gotH) := \cC(\gotH, \gotH)$,  $\cB(\gotH) :=
\cB(\gotH,\gotH)$. By $\gotS_p(\gotH)$, $p \in (0, \infty]$, we denote the
Schatten-Neumann ideals of order $p$ on  $\gotH$; in particular, $\gotS_\infty(\gotH)$
is the ideal of  compact operators on $\gotH$. By $\dom(T)$, $\ran(T)$, $\rho(T)$ and
$\gs(T)$ we denote the domain, range resolvent set and spectrum of the operator $T$,
respectively.

\section{Preliminaries}

\subsection{Linear relations}

A linear relation $\Theta$ in $\cH$ is a closed linear subspace
of $\cH \oplus \cH$.
The set of  all linear relations in $\cH$ is
denoted by $\widetilde\cC(\cH)$.
Denote also by $\cC(\cH)$ the set of all closed linear (not necessarily densely defined) operators in $\cH$.
Identifying  each  operator  $T\in \cC(\cH)$ with its graph $\graph(T)$
we  regard  $\cC(\cH)$  as a subset of $\widetilde\cC(\cH)$.

The role of the set $\widetilde\cC(\cH)$ in extension theory becomes
clear from Proposition \ref{prop2.1}.
However, it's role in the operator theory
is substantially   motivated  by the following circumstances: in contrast to
$\cC(\cH)$, the set $\widetilde\cC(\cH)$ is closed with respect to taking
inverse  and adjoint  relations $\Theta^{-1}$ and  $\Theta^*$.
Here $\Theta^{-1}= \{\{g,f\}: \{f,g\}\in \Theta\}$  and
\bed
\Theta^*= \left\{
\begin{pmatrix} k\\k^\prime
\end{pmatrix}: (h^\prime,k)=(h,k^\prime)\,\,\text{for all}\,
\begin{pmatrix} h\\h^\prime\end{pmatrix}
\in\Theta\right\}.
\eed
A linear relation $\Theta$ is called symmetric if
$\Theta\subset\Theta^*$ and self-adjoint if $\Theta=\Theta^*$.

\subsection{Boundary triplets and proper extensions}

Following \cite{GG91} and \cite{DM91} we briefly recall some basic facts on boundary
triplets. Let $S$ be a densely defined closed symmetric operator in a separable Hilbert
space $\gotH$ and  let $n_\pm(S) := \dim(\gotN_{\pm i})$ be its deficiency indices,
where  $\gotN_z := \ker(S^* - z)$, $z \in \dC_\pm$, is the defect subspace of $S$.
\bd\la{def:2.1}
 A closed extension $\wt S$ of $S$ is called proper
if $\dom(S) \subsetneqq \dom(\wt S)  \subsetneqq \dom(S^*)$.
\ed

We denote by  $\Ext_S$ the set of all proper extensions of $S$ completed by the
non-proper extensions $S$ and $S^*$. It is known that any dissipative (accumulative), in
particular symmetric, extension of $S$ is proper.
\bd[cf. \cite{GG91}]\la{def:3.1}
{\rm
A triplet $\Pi_S = \{\cH^S,\Gamma_0^S,\Gamma_1^S\}$,
where $\cH^S$ is an auxiliary
Hilbert space and $\Gamma_0^S, \Gamma_1^S : \dom(S^{*}) \to \cH^S$ are
linear
mappings, is called a boundary triplet for $S^{*}$ if the 'abstract Green's identity'
\be\la{2.1}
(S^*f,g) - (f,S^*g) = (\gG_1^Sf,\gG_0^Sg) - (\gG_0^Sf,\gG_1^Sg), \quad f,g \in \dom(S^*),
\ee
holds  and the mapping $\gG^S := (\Gamma_0^S, \Gamma_1^S)^{t}: \dom(S^{*}) \rightarrow
(\cH^S \oplus \cH^S)^{t}$ is surjective, i.e. $\ran(\gG^S) = (\cH^S \oplus \cH^S)^{t}$.
}
\ed
A boundary triplet $\Pi_S=\{\cH^S,\gG_0^S,\gG_1^S\}$ for $S^*$ always exists
whenever  $n_+(S) = n_-(S)$. Note also that  $n_\pm(S) =
\dim(\cH^S)$ and $\ker(\Gamma_0^S) \cap \ker(\Gamma_1^S)=\dom(S)$.

The linear maps $\gG_j^S : \dom(S^*) \longrightarrow \cH^S$, $j = 0,1$, are neither
bounded nor closable. However, equipping the domain $\dom(S^*)$ with the graph norm
\bed%\la{eq:2.2a}
\|f\|^2_{S^*} := \|S^*f\|^2 + \|f\|^2, \qquad f \in \dom(S^*),
\eed
one obtains a Hilbert space  $\gotH_+(S^*)$.  and regarding

It turns out that the mappings  $\gG_j^S : \dom(S^*) \longrightarrow \cH^S$, $j \in
\{0,1\}$, as the mappings from $\gotH_+(S^*)$ into $\cH^S$ are already  bounded.

 In what follows we denote by $\wh \gG_j^S$   the operator $\gG_j^S$ treated as the
 mapping  $\wh \gG_j^S : \gotH_+(S^*) \longrightarrow \cH^S$, $j \in \{0,1\}$.
If $J_{S^*} : \gotH_+(S^*) \longrightarrow \dom(S^*)$ denotes the embedding operator,
then  $\wh \gG_j^S  = \gG_j^SJ_{S^*}$, $j \in \{0,1\}$.
It follows from Definition \ref{def:3.1} that $\ran(\wh \gG^S) = \cH^S \oplus \cH^S$,
where $\wh \gG^S := (\wh \gG_0^S,\wh \gG_1^S)^t$. Notice that the abstract Green's
identity \eqref{2.1} can be written as
\bed%\la{2.1a}
(S^*J_{S^*}f,J_{S^*}g) - (J_{S^*}f,S^*J_{S^*}g) = (\wh \gG_1^Sf,\wh \gG_0^Sg) - (\wh \gG_0^Sf,\wh \gG_1^Sg),
\eed
$f,g \in \gotH_+(S^*)$.
With any boundary triplet $\Pi_S$ one associates  two canonical
self-adjoint extensions $S_j:=S^*\!\upharpoonright\ker(\gG_j^S)$, $j \in \{0,1\}$.
Conversely, for any extension $S_0=S_0^*\in \Ext_S$ there exists
a (non-unique) boundary triplet $\Pi_S=\{\cH^S,\gG_0^S,\gG_1^S\}$ for $S^*$
such that $S_0:=S^*\!\upharpoonright\ker(\gG_0^S)$.

Using the concept of boundary triplets one can parametrize all
proper extensions of $A$ in the following way.
\begin{proposition}[cf. \cite{DM91, Mal92}]\label{prop2.1}
Let  $\Pi_S=\{\cH^S,\gG_0^S,\gG_1^S\}$  be a
boundary triplet for  $S^*.$  Then the mapping
\be\la{eq:2.2}
\Ext_S \ni \widetilde S \to  \Gamma^S \dom(\widetilde S)
=\{(\Gamma_0^S f,\Gamma_1^S f)^{t} : \  f\in \dom(\widetilde S) \} =:
\Theta \in \widetilde\cC(\cH^S)
\ee
establishes  a bijective correspondence between the sets $\Ext_S$
and  $\widetilde\cC(\cH^S)$. We write  $\wt S = S_\Theta$
if $\wt S$ corresponds to $\Theta$ by \eqref{eq:2.2}.
Moreover, the following holds:

\item[\;\;\rm (i)] $S^*_\Theta= S_{\Theta^*}$, in particular,
  $S^*_\Theta= S_\Theta$ if and only if $\Theta^* = \Theta$.

\item[\;\;\rm (ii)]  $S_\Theta$ is symmetric (self-adjoint) if
  and only if so is  $\Theta$.
\end{proposition}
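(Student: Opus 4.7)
The plan is to prove the bijective correspondence by exhibiting an explicit inverse, then derive (i) and (ii) from Green's identity. Concretely, for $\Theta \in \widetilde\cC(\cH^S)$, I define
\[
S_\Theta := S^*\!\upharpoonright \dom(S_\Theta), \qquad \dom(S_\Theta) := \{f\in\dom(S^*): (\gG_0^S f,\gG_1^S f)^t\in\Theta\},
\]
and aim to verify that the assignments $\widetilde S\mapsto \gG^S\dom(\widetilde S)$ and $\Theta\mapsto S_\Theta$ are mutually inverse.

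First, I would show the map in \eqref{eq:2.2} is well defined, i.e.\ that $\Theta:=\gG^S\dom(\widetilde S)$ is closed in $\cH^S\oplus\cH^S$. I use that $\widehat\gG^S:\gotH_+(S^*)\to\cH^S\oplus\cH^S$ is bounded and, by Definition~\ref{def:3.1}, surjective, hence open by the open mapping theorem, and $\ker(\widehat\gG^S)=\dom(S)$. Since $\widetilde S$ is a closed operator with $\dom(S)\subseteq\dom(\widetilde S)\subseteq\dom(S^*)$, the subspace $\dom(\widetilde S)$ is closed in $\gotH_+(S^*)$ and contains $\ker(\widehat\gG^S)$; therefore it is the preimage under $\widehat\gG^S$ of its own image, and that image $\Theta$ is closed in $\cH^S\oplus\cH^S$. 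Equivalently, $\widehat\gG^S$ descends to a Banach space isomorphism $\gotH_+(S^*)/\dom(S)\to\cH^S\oplus\cH^S$, under which closed subspaces correspond to closed subspaces.

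Next I would prove that the two maps are mutually inverse. The key identity is $\dom(\widetilde S)=(\widehat\gG^S)^{-1}(\Theta)$: the inclusion "$\subseteq$" is immediate; for "$\supseteq$", if $\widehat\gG^S f\in\Theta$ pick $g\in\dom(\widetilde S)$ with $\widehat\gG^S g=\widehat\gG^S f$, then $f-g\in\ker(\widehat\gG^S)=\dom(S)\subseteq\dom(\widetilde S)$, so $f\in\dom(\widetilde S)$. This shows $\widetilde S$ is determined by $\Theta$, giving injectivity. Conversely, given an arbitrary $\Theta\in\widetilde\cC(\cH^S)$, closedness of $\Theta$ and boundedness of $\widehat\gG^S$ imply that $(\widehat\gG^S)^{-1}(\Theta)$ is a closed subspace of $\gotH_+(S^*)$, so $S_\Theta=S^*\!\upharpoonright(\widehat\gG^S)^{-1}(\Theta)$ is a closed operator with $\dom(S)\subseteq\dom(S_\Theta)\subseteq\dom(S^*)$, hence $S_\Theta\in\Ext_S$, and by construction $\gG^S\dom(S_\Theta)=\Theta$ (using surjectivity of $\widehat\gG^S$).

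For (i), I would compute $S_\Theta^*$ directly: $g\in\dom(S^*)$ belongs to $\dom(S_\Theta^*)$ iff $(S_\Theta f,g)=(f,S^*g)$ for every $f\in\dom(S_\Theta)$. Inserting Green's identity \eqref{2.1}, this is equivalent to
\[
(\gG_1^S f,\gG_0^S g)-(\gG_0^S f,\gG_1^S g)=0\quad\text{for every } (\gG_0^S f,\gG_1^S f)^t\in\Theta,
\]
which by the very definition of the adjoint relation $\Theta^*$ means $(\gG_0^S g,\gG_1^S g)^t\in\Theta^*$; surjectivity of $\gG^S$ guarantees that every element of $\Theta$ is realized as $\gG^S f$. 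Hence $\dom(S_\Theta^*)=\dom(S_{\Theta^*})$, i.e.\ $S_\Theta^*=S_{\Theta^*}$. Item (ii) is then a formal consequence: using the bijectivity established above, $S_\Theta\subseteq S_\Theta^*=S_{\Theta^*}$ is equivalent to $\Theta\subseteq\Theta^*$, and $S_\Theta=S_\Theta^*$ to $\Theta=\Theta^*$.

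The main technical point I expect to have to be careful about is the multivalued case: when $\Theta$ has a nontrivial "multi-valued part" $\{(0,h'):(0,h')\in\Theta\}$, the extension $S_\Theta$ must still be defined as an operator (not a relation), which is automatic here because $S^*$ is single-valued and $\Theta$ is used only to carve out $\dom(S_\Theta)$. All the bookkeeping with $\widetilde\cC(\cH^S)$ versus $\cC(\cH^S)$ then reduces to the correct identification of $\Theta^*$ as a linear relation in the computation of $S_\Theta^*$.
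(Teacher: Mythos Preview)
The paper does not supply a proof of this proposition; it is quoted as a known preliminary result with references to \cite{DM91, Mal92}, so there is nothing in the paper to compare your argument against directly.

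That said, your proof is correct and is essentially the standard one found in the cited literature. The key ingredients---boundedness and surjectivity of $\widehat\gG^S:\gotH_+(S^*)\to\cH^S\oplus\cH^S$, the identification $\ker(\widehat\gG^S)=\dom(S)$, and the use of Green's identity \eqref{2.1} to translate the adjoint condition on operators into the adjoint condition on relations---are exactly what is used in \cite{DM91} and \cite{GG91}. Your observation that $\widehat\gG^S$ induces a Banach isomorphism $\gotH_+(S^*)/\dom(S)\to\cH^S\oplus\cH^S$ is the clean way to see the bijection between closed intermediate subspaces and closed linear relations, and your handling of the multivalued part of $\Theta$ is correct: since $S^*$ is an operator, $S_\Theta$ is automatically single-valued regardless of $\mathrm{mul}\,\Theta$. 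One minor point worth making explicit in the step for (i) is that $S_\Theta^*\subseteq S^*$ follows from $S\subseteq S_\Theta$, which is what justifies restricting attention to $g\in\dom(S^*)$ when characterizing $\dom(S_\Theta^*)$; you use this implicitly.
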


In particular, $S_j:=S^*\!\upharpoonright\ker(\gG_j^S) =
S_{\Theta_j},\ j\in \{0,1\},$ where $\gT_0:=
\begin{pmatrix} \{0\}\\ \cH^S \end{pmatrix}$ and
$\gT_1 := \begin{pmatrix} \cH^S \\ \{0\} \end{pmatrix} = \graph(\bO)$ where $\bO$
denotes the zero operator in $\cH^S$.
Note also that  the trivial linear relations $\{0\} \times \{0\}$ and $\cH^S \times
\cH^S\in \wt\cC(\cH^S)$ parametrize the extensions  $S$ and $S^*$, respectively, in any
triplet $\Pi_S$.

\subsection{$\gamma$-field and Weyl function}

It is well known that the Weyl function is an important tool in the direct and inverse
spectral theory of Sturm-Liouville operators. In \cite{DM87,DM91} the concept of Weyl
function was generalized to the case of an arbitrary symmetric operator $S$ with $n_+(S)
= n_-(S)\le \infty$. {Following} \cite{DM91},  we briefly recall basic facts on Weyl
functions and $\gamma$-fields, associated with a boundary triplet  $\Pi.$ For further
properties and applications see \cite{BGP07,DM91,DM95}, \cite{Schmuedgen2012} (and
references therein).
\bd[cf. \cite{DM87,DM91}]\label{Weylfunc}
{\em
Let $\Pi^S=\{\cH^S,\gG_0^S,\gG_1^S\}$ be a boundary triplet  for $S^*$ and
$S_0=S^*\!\upharpoonright\ker(\gG_0^S)$. The operator valued
functions $\gamma^S(\cdot) : \rho(S_0) \rightarrow  \cB(\cH^S,\gotH)$ and
$M^S(\cdot) :\ \rho(S_0)\rightarrow  \cB(\cH^S)$ defined by
\be\label{2.3A}
\gamma^S(z):=\bigl(\Gamma_0^S\!\upharpoonright\gotN_z\bigr)^{-1}
\qquad\text{and}\qquad M^S(z):=\Gamma_1^S\gamma^S(z), \quad
z\in\rho(S_0), \ee
are called the $\gamma$-field and the  Weyl function,
respectively, corresponding to the boundary triplet $\Pi_S.$
}
\ed
Clearly, the Weyl function can equivalently be defined by
\bed
M^S(z)\Gamma_0^S f_z = \Gamma_1^Sf_z,\qquad  f_z \in \gotN_z, \quad z\in\rho(S_0).
\eed
The $\gamma$-field $\gamma^S(\cdot)$ and the Weyl function $M^S(\cdot)$ in \eqref{2.3A}
are well defined. Moreover, both $\gamma^S(\cdot)$ and $M^S(\cdot)$ are holomorphic on
$\rho(S_0)$ and satisfy the following relations
\be\la{2.5}
\gamma^S(z)=\bigl(I+(z-\zeta)(S_0-z)^{-1}\bigr)\gamma^S(\zeta),
\qquad z,\zeta\in\rho(S_0),
\ee
and
\begin{equation}\label{mlambda}
M^S(z)-M^S(\zeta)^*=(z-\overline\zeta)\gamma^S(\zeta)^*\gamma^S(z), \qquad
z,\zeta\in\rho(S_0),
\end{equation}
hold. Identity \eqref{mlambda} yields that
$M^S(\cdot)$  is  an $\cB(\cH^S)$-valued Nevanlinna function ($M^S(\cdot)\in R[\cH^S]$), i.e. $M^S(\cdot)$ is an $\cB(\cH^S)$-valued holomorphic function on
$\dC_\pm$ satisfying
\bed
M^S(z)=M^S(\overline z)^*\qquad\text{and}\qquad
\frac{\im(M^S(z))}{\im(z)}\geq 0, \qquad
z\in \dC_\pm.
\eed
It follows also from \eqref{mlambda} that
$0\in \rho(\im(M^S(z)))$
for all $z\in\dC_\pm$.

Being an $R[\cH^S]$-function the Weyl function $M^S(\cdot)$ admits an integral
representation
    \begin{equation}\label{W-F+Integral_rep-n}
M^S(z)= C_0 + \int_{\dR}\left(\frac{1}{t-z} - \frac{t}{1+t^2}\right)\, d\Sigma_S(t),
\qquad \int_{\dR}\frac{d\Sigma_S(t)}{1+t^2}\in \cB(\cH^S),
    \end{equation}
where $C_0 = C_0^*$ and $\Sigma_S(\cdot)$ is a left continuous $(\Sigma_S(t)=
\Sigma_S(t-0))$ monotone operator-valued function. Emphasize that the linear term $C_1z$
is missing in  \eqref{W-F+Integral_rep-n} because the operator $A$  is densely defined
(see  \cite{DM91}).

A Weyl function $M^S(\cdot)$ is said to be of scalar type if there exists a scalar
Nevanlinna function $m^S(\cdot)$ such that the representation
\bed
M^S(z) = m^S(z)I_{\cH^S}, \quad z \in \dC_\pm,
\eed
holds where $I_{\cH^S}$ is the identity operator in $\cH^S$, see
\cite{ABMN05}. Obviously, $M^S(\cdot)$ is of scalar type if $n_\pm(A) = 1$.

Next we extract from \eqref{W-F+Integral_rep-n}  lower  and upper bounds for
$\im(M^S(i-\gl))$ which will be useful in the sequel.  It follows from
\eqref{W-F+Integral_rep-n}  that
\begin{equation}\label{W-F_Imag_part_Int_rep}
\im\bigl(M^S(i-\lambda)\bigr)=\int_{\dR}\frac{d\Sigma_S(t)}{(t-\lambda)^2+1}, \quad
\lambda\in\dR
 \end{equation}
Note that with certain positive constants $C_1, C_2>0$ the following estimate holds
    \begin{equation*}
\frac{C_1}{1 + |\lambda|^2}\le\frac{1+t^2}{(t-\lambda)^2+1}\le C_2(1+|\lambda|^2),
\qquad \lambda\in\dR.
    \end{equation*}
Combining these  estimates with the identity  $\im M(i) = \int_{\dR}(1+t^2)^{-1}{d\Sigma_S(t)}$ one derives  from \eqref{W-F_Imag_part_Int_rep}  that
      \begin{equation}\label{joh4}
C_1(1+|\lambda|^2)^{-1} \im M(i)\le \im M^S(i-\lambda) \le  C_2(1+|\lambda|^2) \im M(i),
\;\; \lambda\in\dR.
      \end{equation}
Emphasize that since the proof of estimates  \eqref{joh4} is based only on integral
representation \eqref{W-F+Integral_rep-n}, these estimates are valid for any
$R[\cH^S]$-function not necessarily being a  Weyl function.

\subsection{Krein-type formula for resolvents}\la{sec.II.1.4}

Let $\Pi_S=\{\cH^S,\gG_0^S,\gG_1^S\}$ be a boundary triplet for $S^*,$ and
$M^S(\cdot)$ and  $\gamma^S(\cdot)$ the corresponding Weyl function
and   $\gamma$-field, respectively.
For any proper (not necessarily self-adjoint)  extension
${\widetilde S}_{\Theta}\in \Ext_S$ with non-empty resolvent set
$\rho({\widetilde S}_{\Theta})$ the following Krein-type formula
holds (cf. \cite{DM87,DM91,DM95})
\be\label{2.30}
(S_\Theta - z)^{-1} - (S_0 - z)^{-1} = \gamma^S(z) (\Theta -
M^S(z))^{-1} (\gamma^{S}({\overline z}))^*,
\ee
$z\in \rho(S_0)\cap\rho(S_\Theta)$. Formula  \eqref{2.30} extends  the known Krein formula for canonical resolvents to the
case of any $S_\Theta\in \Ext_S$ with $\rho(S_\Theta)\not = \emptyset.$  Moreover, due
to relations   \eqref{eq:2.2} and \eqref{2.3A} all objects in formula  \eqref{2.30} are
expressed by means of  the boundary triplet $\Pi_S$. We emphasize, that this connection
makes it possible to apply the Krein-type formula \eqref{2.30}  to boundary value
problems.

\subsection{Normalized boundary triplets}

Let $S_n$ be a  densely defined closed symmetric operator  in $\gotH_n$, ${n\in\dZ}$,
and let $S:= \bigoplus_{n \in \dZ}S_n$. Clearly,
   \bed%\label{III.1_01}
   \begin{split}
S^* &= \bigoplus_{n \in \dZ}S^*_n,\\
\dom(S^*) &=  \left\{f = \mathlarger{\mathlarger{\mathlarger{\oplus}}}_{\dZ} f_n\in
\mathfrak{H}:\  f_n\in\dom(S^*_n),\ \  \sum^{\infty}_{n=1}\|S^*_n
 f_n\|^2<\infty\right\}.
 \end{split}
   \eed
Let $\Pi_{S_n} = \{\cH^{S_n},\Gamma^{S_n}_0,\Gamma^{S_n}_1\}$  be  a boundary triplet
for $S^*_n$,  $n \in \dZ$. Define  mappings $\Gamma_j^S$, $j\in\{0,1\},$  by setting
   \begin{equation}\label{III.1_02}
   \begin{split}
\Gamma^{S}_j  &:= \bigoplus_{n\in\dZ}  \Gamma^{S_n}_j,\\
\dom(\gG^{S}_j) &:= \left\{\mathlarger{\mathlarger{\mathlarger{\oplus}}}_{n\in\dZ}f_n
\in \dom(S^*): \sum_{n\in\dZ}\|\gG^{S_n}_jf_n\|^2 < \infty \right\}.
\end{split}
   \end{equation}
       \begin{definition}\label{def_III.1_01}
Let  $\Gamma_j^S$ be given  by \eqref{III.1_02} and $\cH^S :=
\bigoplus_{n\in\dZ}\cH^{S_n}$. A collection $\Pi_S = \{\cH^S,\gG^S_0,\gG^S_1\}$ is
called a \emph{direct sum of boundary triplets} and is  assigned as $\Pi_S =
\bigoplus_{n\in\dZ}\Pi_{S_n}$.
     \end{definition}

It was first discovered by A. Kochubei \cite{Koch79} that the direct sum $\bigoplus\Pi_n$ of
boundary triplets $\Pi_n$ is not a boundary triplet in general. Later on simple
examples were constructed in \cite{MalNei2012}, \cite{KosMal10}, \cite{CarlMalPos2013}.
Moreover, it was shown in \cite[Theorem 3.2]{KosMal10} that  $\Pi_S$ is only a
generalized boundary triplet (a boundary relation in the sense of \cite{DHMS06}).
Moreover,  according to \cite{DHMS12}  $\Pi_S$ is a so called ES-generalized boundary
triplet for $S^*$, since the operator  $S_0 := S^*\upharpoonright\ker(\gG^S_0)$ is
essentially self-adjoint.

The reason is that the domain $\dom(\gG^S_j)$, $j\in\{0,1\},$ might be narrower  than
$\dom(S^*)$
and the range of the mapping $\gG^S := (\gG^S_0, \gG^S_1)^t: \dom(S^{*}) \rightarrow
(\cH^S \oplus \cH^S)^t$ might be a proper subset of $(\cH^S \oplus \cH^S)^t$.
Nevertheless, $\dom(\gG^S_j)$, $j \in \{0,1\}$, is always dense in $\gotH_+(S^*)$  and
its range $\ran(\gG^S)$ is dense in $(\cH^S \oplus \cH^S)^t$.
Moreover, by \cite[Proposition 5.3]{DHMS06}, $\Pi_S$
is a boundary triplet whenever  $\ran(\gG^S) = (\cH^S \oplus \cH^S)^t$.  Besides, in
accordance with \cite[Proposition 3.8]{KosMal10} the conditions
  \be\label{Converg_of_Gamma-j}
\sum_{n\in\dZ}\|\gG^{S_n}_jf_n\|^2 < \infty, \quad f = \mathlarger{\mathlarger{\mathlarger{\oplus}}}_{n\in\dZ}f_n \in
\dom(S^*), \quad  j \in\{0,1\},
  \ee
imply that  $\Pi_S = \bigoplus_{n\in\dZ}\Pi_{S_n}$ is an ordinary  boundary triplet,
while the sole first condition in  \eqref{Converg_of_Gamma-j}  (with $j=0$) ensures only
that  $\Pi_S$ is a $B$-generalized boundary triplet in the sense of \cite{DM95},
\cite{DHMS12}.  Moreover,  according to \cite{DHMS12}  $\Pi_S$ is a so called
ES-generalized boundary triplet for $S^*$, since the operator  $S_0 :=
S^*\upharpoonright\ker(\gG^S_0)$ is essentially self-adjoint.

A regularization procedure described below was first proposed  in \cite{MalNei2012}
and has been applied  to construct a boundary triplet for Sturm-Liouville operators
    \bed
-d^2/dx^2 \otimes I_{\gotT}+I_{\gotH} \otimes T, \qquad  \gotH = L^2(\dR_+;\gotT)
= L^2(\dR_+)\otimes \gotT,
    \eed
with unbounded potential $T=T^*\in\cC(\gotT)$. Further generalizations of regularization
procedures as well as applications to Schr{\"o}dinger and Dirac operators with
$\delta$-interactions  were obtained in  \cite{KosMal10} and  \cite{CarlMalPos2013},
respectively.

Let $\Pi_S = \{\cH^S,\gG^S_0,\gG^S_1\}$ be a boundary triplet for $S^*$ with Weyl
function $M^S(\cdot)$. We call $\Pi_S$ a normalized boundary triplet for $S^*$ if the
condition $M^S(i) = iI_{\cH^S}$ is satisfied.
  \bl[\cite{MalNei2012}]\la{II.5}
Let $\Pi_S = \{\cH^S,\gG^S_0,\gG^S_1\}$ be a boundary triplet for $S^*$, let
$\gga^S(\cdot)$  and $M^S(\cdot)$ be the  $\gga(\cdot)$-field  and Weyl function,
respectively. Let $R_S := \sqrt{\im(M^S(i))}$ and $Q_S := \re(M^S(i))$. Then $\wt \Pi_S
= \{\wt \cH^S,\wt \gG^S_0,\wt \gG^S_1\}$, where
      \be\la{eq:2.10}
\wt \cH^S := \cH^S, \quad \wt \gG^S_0 := R_S\gG^S_0 \quad \mbox{and} \quad \wt \gG^S_1
:= R^{-1}_S(\gG^S_1 - Q_S\gG^S_0),
   \ee
is a normalized boundary triplet for $S^*$ such that
\bed%\la{eq:2.11}
S_0 := S^*\upharpoonright\ker(\gG^S_0) = S^*\upharpoonright\ker(\wt \gG^S_0). \eed
The  $\gamma$-field $\wt \gga^S(\cdot)$ and  Weyl function $\wt M^S(\cdot)$
corresponding to the triplet $\wt \Pi_S$ are given by
\bed%\la{eq:2.12}
\wt \gga^S(z) = \gga^S(z)R^{-1}_S \;\; \mbox{and} \;\; \wt M^S(z) =
R^{-1}_S(M^S(z) - Q_S)R^{-1}_S, \;\; z \in \dC_\pm.
  \eed

\el
Lemma \ref{II.5} shows that with any boundary triplet one can associate a normalized
boundary triplet such that $S_0$ remains unchanged.
The following theorem presents  a regularization procedure for direct sums  $\Pi_S =
\bigoplus_{n\in\dZ}\Pi_{S_n}$  to define an ordinary boundary triplet.
 \bt[Theorem 3.3, \;\cite{MalNei2012}]\la{th:2.6}
Let $S_n$ be a  densely defined closed symmetric operator in $\gotH_n$, ${n\in\dZ}$, and
$S:= \oplus_{n \in \dZ}S_n$. Let  $\Pi_{S_n} =
\{\cH^{s_n},\Gamma_{0}^{S_n},\Gamma_{1}^{S_n}\}$ be a boundary triplet for $S^*_n$,
$S_{0n} := S^*_n\upharpoonright\ker(\Gamma_{0}^{S_n})$, $n \in \dZ$, and let
$\gga^{S_n}(\cdot)$ and  $M^{S_n}(\cdot)$ be the corresponding  $\gamma$-field  and Weyl
function, respectively.  Finally, let  $R_{S_n} := \sqrt{\im(M^{S_n}(i))}$ and $Q_{S_n}
:= \re(M^{S_n}(i))$, $n \in \dZ$. Then the triplet $\wt\Pi_S =
\{\wt\cH^S,\wt\gG^S_0,\wt\gG^S_1\}$ with
  \be\label{2.16_reg_triplet}
    \begin{split}
\wt\cH^S &:= \bigoplus_{n\in\dN} \cH^{S_n}, \quad  \wt\gG^{S}_0 :=
\bigoplus_{n\in\dZ}R_{S_n}\gG^{S_n}_0, \\
\wt\gG^S_1 &:= \bigoplus_{n \in \dZ}
R^{-1}_{S_n}\left(\gG^{S_n}_1 - Q_{S_n}\gG^{S_n}_0\right),
\end{split}
  \ee
is  a (normalized) boundary triplet for $S^*$ satisfying
  \bed%\label{2.17_S-0}
\wt S_0 = S^*\upharpoonright\ker(\wt\gG^S_0) = \bigoplus_{n\in\dZ}{\widetilde S}_{0n} =
\bigoplus_{n\in\dZ}S_{0n}, \qquad \widetilde S_{0n} =
S^*_n\upharpoonright\ker({\widetilde\gG}^{S_n}_0).
  \eed
Moreover, the   $\gamma$-field $\wt \gga^S(\cdot)$ and  Weyl function $\wt M^S(\cdot)$
corresponding to  $\wt \Pi_S$ are given by
  \be\label{W-fun_and_gam-field_for_dir_sum}
  \begin{split}
 \wt \gga^S(z) &= \bigoplus_{n\in\dZ} \gga^{S_n}(z)R^{-1}_{S_n} \quad \mbox{and} \quad\\
\wt M^S(z) &= \bigoplus_{n\in\dZ}R^{-1}_{S_n}\left(M^{S_n}(z) -
Q_{S_n}\right)R^{-1}_{S_n},\quad z \in \dC_\pm.
\end{split}
   \ee
  \end{theorem}
Next we assume that the operator $S = \bigoplus^{\infty}_{n=1}S_n$ has a regular real
point, i.e., there exists $a={\overline a}\in{\hat\rho}(A)$. The latter is equivalent to
the existence of $\varepsilon> 0$ such that
   \begin{equation}\label{III.2.2_01}
(a-\varepsilon, a + \varepsilon) \subset \cap^{\infty}_{n=1} {\widehat\rho}(S_n).
       \end{equation}
Emphasize that  condition $a\in\cap^{\infty}_{n=1}{\widehat\rho}(S_n)$ is not sufficient
 for the inclusion ${a\in\widehat\rho}(A).$

It is known (see e.g. \cite{K47}, \cite{AG81}) that under condition \eqref{III.2.2_01}
for every $k\in{\dN}$ there exists a self-adjoint extension ${\widetilde S}_{k} =
{\widetilde S}^*_{k}$ of $S_k$ preserving the gap $(a-\varepsilon, a+\varepsilon)$. The
latter amounts to saying that the Weyl function of the pair $\{S_k, {\widetilde
S}_{k}\}$ is regular within the gap $(a-\varepsilon, a+\varepsilon)$.

For operators $S = \bigoplus_{n=1}^\infty S_n$ satisfying \eqref{III.2.2_01} we complete
Theorem \ref{th:2.6}   by presenting a regularization procedure for $\Pi =
\bigoplus_{n=1}^\infty\Pi_n$ leading to a boundary triplet (cf. \cite[Theorem
3.13]{KosMal10}, \cite[Theorem 2.12 and Corollary 2.13]{CarlMalPos2013}). In
applications to symmetric operators with a gap this regularization  is  more appropriate
and simpler than the one described in Theorem \ref{th:2.6}.
   \begin{proposition}[\cite{KosMal10, CarlMalPos2013}] \label{cor_III.2.2_02}
Let $\{S_n\}^{\infty}_{n=1}$ be a sequence of symmetric operators satisfying
\eqref{III.2.2_01}. Let also ${\Pi}_n = \{\cH_n, {\Gamma}^{(n)}_0, {\Gamma}^{(n)}_1\}$
be a boundary triplet for $S^*_n$ such that $(a-\varepsilon,
a+\varepsilon)\subset\rho(S_{0n})$, $S_{0n}=
S_n^*\upharpoonright\ker({\Gamma}^{(n)}_0)$. Let also $\gga^{S_n}(\cdot)$ and
$M_n(\cdot) := M^{S_n}(\cdot)$ be the corresponding $\gamma$-field and Weyl function,
respectively. Assume also that for some operators $R_n$  such that $R_n, R^{-1}_n
\in \cB(\cH_n)$, the following conditions are satisfied
       \bed%\label{III.2.2_11}
       \begin{split}
&\sup_n \|R_n^{-1}({M}'_n(a))(R_n^{-1})^*\|_{\cH_n} < \infty \quad \text{and}\quad \\
&\sup_n\|R_n^*({M}'_n(a))^{-1}R_n\|_{\cH_n} < \infty, \quad n\in\dN.
\end{split}
       \eed
Then the direct sum  $\wt \Pi_S = \bigoplus_{n=1}^\infty {\wt\Pi}_n$ of boundary
triplets where
    \begin{equation}\label{III.2.2_08}
    \begin{split}
{\wt\Pi}_n &=\{\cH_n, \wt\Gamma^{(n)}_0, \wt\Gamma^{(n)}_1\},\\
\wt\Gamma^{(n)}_0 &:= R_n{\Gamma}^{(n)}_0 \quad \mbox{and} \quad \wt\Gamma^{(n)}_1:=
(R^{-1}_n)^*\bigl({\Gamma}^{(n)}_1 -
  M_n(a){\Gamma}^{(n)}_0\bigr),
  \end{split}
     \end{equation}
forms a boundary triplet for $S^* = \bigoplus_{n=1}^\infty S_n^*$.

Moreover, the  corresponding  $\gamma$-field $\wt \gga^S(\cdot)$ and  Weyl function $\wt
M^S(\cdot)$  are given by
  \bed%\label{eq:2.19}
  \begin{split}
 \wt \gga^S(z) &= \bigoplus_{n\in\dZ} \gga^{S_n}(z)R^{-1}_n \quad \mbox{and} \quad\\
\wt M^S(z) &= \bigoplus_{n\in\dZ}(R^{-1}_{n})^*\left(M_{n}(z) -
M_{n}(a)\right)R^{-1}_{n},\quad z \in \dC_\pm.
\end{split}
   \eed
In particular one can set $R_{n} = \sqrt{{M}'_n(a)}$, $n\in \dN$.
    \end{proposition}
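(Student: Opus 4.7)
The plan is to first check that each individual $\wt\Pi_n$ in \eqref{III.2.2_08} is a legitimate boundary triplet for $S_n^*$, then compute its Weyl function and $\gamma$-field, and finally exploit the hypotheses to verify the $\ell^2$-convergence criterion \eqref{Converg_of_Gamma-j}, which promotes the direct sum to an \emph{ordinary} boundary triplet for $S^*$.

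For the first task, since $a\in\dR\cap\rho(S_{0n})$ the operator $M_n(a)$ is self-adjoint, and this alone forces the transformation $(\Gamma^{(n)}_0,\Gamma^{(n)}_1)\mapsto(\wt\Gamma^{(n)}_0,\wt\Gamma^{(n)}_1)$ to preserve the abstract Green identity; surjectivity onto $\cH_n\oplus\cH_n$ is immediate from $R_n, R_n^{-1}\in\cB(\cH_n)$. A direct calculation from the definitions \eqref{2.3A} of the $\gamma$-field and Weyl function then gives $\wt\gamma_n(z) = \gamma_n(z)R_n^{-1}$ and $\wt M_n(z) = (R_n^{-1})^*(M_n(z)-M_n(a))R_n^{-1}$. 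In particular $\wt M_n(a) = 0$ and $\wt M'_n(a) = (R_n^{-1})^*M'_n(a)R_n^{-1}$; because $M'_n(a)$ is self-adjoint, the two hypotheses of the proposition are equivalent to $\sup_n\|\wt M'_n(a)\|<\infty$ and $\sup_n\|\wt M'_n(a)^{-1}\|<\infty$.

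The core analytic step is to deduce from this uniformity a graph-norm bound
\[
\|\wt\Gamma^{(n)}_0 f_n\|^2 + \|\wt\Gamma^{(n)}_1 f_n\|^2 \le K\bigl(\|f_n\|^2 + \|S_n^* f_n\|^2\bigr)
\]
with $K$ independent of $n$. My approach is to decompose $f_n = f_{0,n} + \phi_n$, with $f_{0,n}:=(S_{0n}-a)^{-1}(S_n^*-a)f_n\in\dom(S_{0n})$ and $\phi_n\in\gotN_a:=\ker(S_n^*-a)$; assumption \eqref{III.2.2_01} gives $\|(S_{0n}-a)^{-1}\|\le\varepsilon^{-1}$ for every $n$, so each of $\|f_{0,n}\|, \|\phi_n\|, \|(S_{0n}-a)f_{0,n}\|$ is controlled by $\|f_n\|_{S_n^*}$ with a constant depending only on $a$ and $\varepsilon$. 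Differentiating \eqref{mlambda} at $z=\zeta=a$ yields the crucial identity $\wt M'_n(a) = \wt\gamma_n(a)^*\wt\gamma_n(a)$; combined with $\phi_n = \wt\gamma_n(a)\wt\Gamma^{(n)}_0 f_n$ this gives $\|\wt\Gamma^{(n)}_0 f_n\|\le\|\wt M'_n(a)^{-1}\|^{1/2}\|\phi_n\|$. For the other component, $\wt M_n(a)=0$ implies $\wt\Gamma^{(n)}_1 f_n = \wt\Gamma^{(n)}_1 f_{0,n}$, and applying Green's identity for $\wt\Pi_n$ to $f_{0,n}$ against an arbitrary element $\wt\gamma_n(a)k\in\gotN_a$ produces $\wt\Gamma^{(n)}_1 f_{0,n} = \wt\gamma_n(a)^*(S_{0n}-a)f_{0,n}$, whence $\|\wt\Gamma^{(n)}_1 f_n\|\le\|\wt M'_n(a)\|^{1/2}\|(S_{0n}-a)f_{0,n}\|$.

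Summing these bounds over $n$ yields \eqref{Converg_of_Gamma-j} for every $f=\bigoplus f_n\in\dom(S^*)$, and by the criterion from \cite{KosMal10} recalled before Lemma \ref{II.5}, $\wt\Pi_S = \bigoplus_n\wt\Pi_n$ is then an ordinary boundary triplet for $S^*$. The announced componentwise formulas for $\wt\gamma^S$ and $\wt M^S$ follow immediately from the expressions derived in the second paragraph, and the concluding remark about the canonical choice $R_n=\sqrt{M'_n(a)}$ is automatic since then $\wt M'_n(a)=I_{\cH_n}$ and both hypotheses hold trivially. The main obstacle is the uniform graph-norm estimate for $\wt\Gamma^{(n)}_j$; its proof rests on engineering $\wt M_n(a)=0$ so that $\wt\Gamma^{(n)}_1$ annihilates the defect subspace $\gotN_a$, and then using the Nevanlinna identity $\wt M'_n(a)=\wt\gamma_n(a)^*\wt\gamma_n(a)$ to convert the two operator-norm hypotheses into two-sided control of $\wt\gamma_n(a)$.
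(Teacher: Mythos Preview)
Your proof is correct. The paper itself does not prove this proposition; it is stated as a cited result from \cite{KosMal10, CarlMalPos2013}, with only the remark afterward that $M'_n(a)$ is positive definite whenever $a\in\rho(S_{0n})$. Your argument is essentially the one given in those references: verify that each $\wt\Pi_n$ is a boundary triplet with $\wt M_n(a)=0$ and $\wt M'_n(a)=(R_n^{-1})^*M'_n(a)R_n^{-1}$, so that the two hypotheses become uniform two-sided bounds on $\wt M'_n(a)$; then exploit the decomposition $\dom(S_n^*)=\dom(S_{0n})\dotplus\gotN_a$ together with the identity $\wt M'_n(a)=\wt\gamma_n(a)^*\wt\gamma_n(a)$ and the formula $\wt\Gamma_1^{(n)}f_{0,n}=\wt\gamma_n(a)^*(S_{0n}-a)f_{0,n}$ to produce uniform graph-norm bounds for $\wt\Gamma_0^{(n)}$ and $\wt\Gamma_1^{(n)}$, which is exactly criterion \eqref{Converg_of_Gamma-j}.
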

Emphasize  that  $M'_n(a)$ is a positive definite operator whenever $a\in \rho(S_{0n})$.

\section{Operator-spectral integrals}\la{App.I}

Let $F(\cdot)$ be an orthogonal operator measure with compact support $\supp(F)
\subseteq \gD:= [a,b)$, $-\infty < a < b < \infty$, and with values in $\cB(\cH)$. Further,
let $\gO(\cdot): [a,b) \longrightarrow \cB(\cH,\cH_1)$ be an operator-valued function. We
consider partitions $\cZ$ of $[a,b)$ of the form $[a,b) = [\gl_0,\gl_1) \cup
[\gl_1,\gl_2) \cup \ldots \cup [\gl_{n-1},\gl_n)$, $\gl_0 = a$, $\gl_n = b$ and set
$\gD_m := [\gl_{m-1},\gl_m)$, $m =1,2,\ldots,n$.
 Thus $[a,b) = \bigcup^n_{m=1}\gD_m$ and the $\gD_m$ are pairwise disjoint. Let
$|\cZ| := \max_{m = 1,2,\ldots,n}|\gD_m|$, where $|\gD_m| := \gl_m -\gl_{m-1}$.
We define the operator $\gS_\cZ\gO$ by
\bed
\gS_\cZ\gO = \sum^n_{m=1}\gO(x_m)F(\gD_m), \qquad x_m \in \gD_m.
\eed
The sum $\gS_\cZ\gO$ is called the Riemann-Stieltjes sum of $\gO(\cdot)$ with respect to the operator measure $F(\cdot)$. If there is an operator
$\gS_0 \in \cB(\cH,\cH_1)$ such that $\lim_{|\cZ|\to 0}\|\gS_\cZ\gO - \gS_0\|=0$ independent of the special choice of
$\cZ$ and $\{x_m\}^n_{m=1}$, then
$\gS_0$ is called the operator spectral integral of $\gO(\cdot)$ with respect to $F(\cdot)$ and is denoted by
   \begin{equation}\label{Bochner_Int-l}
\gS_0 =:\int_{\gD} \gO(\gl)F(d\gl).
  \end{equation}
Obviously, in a similar way one can define for operator-valued functions $\gO:\gD\longrightarrow \cB(\cH_1,\cH)$
the operator spectral integral $\int_{\gD}F(d\gl)\gO(\gl)$ as the limit of the
Riemann-Stieltjes sums $\sum_m F(\gD_m)\gO(x_m)$.
It is clear that the operator spectral integral is linear with respect to $\gO(\cdot)$. If $B$ is a bounded operator, then
\bed
B\int_{\gD} \gO(\gl)F(d\gl) = \int_{\gD} B\gO(\gl)F(d\gl).
\eed
\begin{definition}
The operator-valued mapping $\gO:[a,b)\longrightarrow \cB(\cH)$ will be  called
$F$-admissible, if the integral $\int_{\gD} \gO(\gl)F(d\gl)$ exists and
   \begin{equation}\label{commut-relation}
\gO(\gl)F(\gd) = F(\gd) \gO(\gl) F(\gd),\qquad \gd\in \cB([a,b)),\quad  \gl \in \gD.
  \end{equation}
  \end{definition}
\begin{proposition}\la{prop:A.-2}
Let $\gO:[a,b)\longrightarrow \cB(\cH)$ be $F$-admissible, $\gO_1:[a,b)\longrightarrow \cB(\cH,\cH_1)$, and assume that
$\int_{\gD} \gO_1(\gl)F(d\gl)$ exists. Then
$\int_{\gD} \gO_1(\gl)\gO(\gl)F(d\gl)$ exists and
  \bed
\int_{\gD} \gO_1(\gl)\gO(\gl)F(d\gl) = \int_{\gD} \gO_1(\gd)F(d\gd)\,\int_{\gD}
\gO(\mu)F(d\mu).
  \eed
\end{proposition}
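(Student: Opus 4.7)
The plan is to reduce the assertion to an algebraic identity between Riemann--Stieltjes sums for a common partition and tag selection, and then pass to the operator-norm limit. Fix a partition $\cZ = \{\gD_m\}_{m=1}^n$ of $[a,b)$ with tags $x_m \in \gD_m$, and consider the three associated sums $\gS_\cZ\gO_1$, $\gS_\cZ\gO$, and $\gS_\cZ(\gO_1\gO)$ (the last with integrand $\gl \mapsto \gO_1(\gl)\gO(\gl)$). The core claim is the operator identity
\[
\gS_\cZ(\gO_1\gO) \;=\; \gS_\cZ\gO_1 \cdot \gS_\cZ\gO.
\]

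To verify it, expand
\[
\gS_\cZ\gO_1 \cdot \gS_\cZ\gO \;=\; \sum_{m,k}\gO_1(x_m)\,F(\gD_m)\,\gO(x_k)\,F(\gD_k).
\]
The $F$-admissibility relation $\gO(x_k)F(\gD_k) = F(\gD_k)\gO(x_k)F(\gD_k)$ lets me rewrite the inner factor as $F(\gD_m)F(\gD_k)\gO(x_k)F(\gD_k)$, and since $F$ is an orthogonal operator measure, $F(\gD_m)F(\gD_k)=F(\gD_m\cap\gD_k)$. The $\gD_m$ being pairwise disjoint, only the terms with $m=k$ contribute, yielding $\sum_m \gO_1(x_m)F(\gD_m)\gO(x_m)F(\gD_m)$. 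A second application of the admissibility condition, this time read as $F(\gD_m)\gO(x_m)F(\gD_m) = \gO(x_m)F(\gD_m)$, collapses the typical term to $\gO_1(x_m)\gO(x_m)F(\gD_m)$, giving exactly $\gS_\cZ(\gO_1\gO)$.

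Once the identity is established, the conclusion follows by letting $|\cZ| \to 0$. By hypothesis $\gS_\cZ\gO_1 \to \int_\gD \gO_1(\gd)F(d\gd)$ and $\gS_\cZ\gO \to \int_\gD \gO(\mu)F(d\mu)$ in operator norm, independently of tag choice; in particular, $\{\gS_\cZ\gO_1\}$ and $\{\gS_\cZ\gO\}$ are uniformly bounded for $|\cZ|$ small. The standard telescoping estimate
\[
\bigl\|\gS_\cZ\gO_1\cdot\gS_\cZ\gO - AB\bigr\| \le \|\gS_\cZ\gO_1 - A\|\,\|\gS_\cZ\gO\| + \|A\|\,\|\gS_\cZ\gO - B\|
\]
(with $A,B$ the respective limit integrals) forces the product to converge in norm to $AB$. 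Through the identity, $\gS_\cZ(\gO_1\gO)$ converges to the same limit, uniformly in the tag choice, which simultaneously establishes existence of $\int_\gD \gO_1(\gl)\gO(\gl)F(d\gl)$ and the advertised factorisation formula.

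The only subtle ingredient is the double use of admissibility: once in its left-to-right form, so that the orthogonality of $F$ can annihilate the off-diagonal contributions, and once in its right-to-left form to absorb the trailing projection and expose $\gO_1\gO$ as the integrand. Beyond this combinatorial step, nothing analytically delicate remains, so I expect no substantial obstacle; the proof is essentially the algebraic manipulation on a fixed partition followed by an elementary continuity-of-multiplication argument.
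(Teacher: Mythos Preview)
Your proof is correct and follows essentially the same approach as the paper: both establish the algebraic identity $\gS_\cZ(\gO_1\gO) = \gS_\cZ\gO_1 \cdot \gS_\cZ\gO$ via $F$-admissibility plus orthogonality of $F$, and then pass to the norm limit. Your write-up is in fact slightly more careful than the paper's (you spell out the telescoping estimate for the convergence of the product and make explicit the two directions in which the admissibility relation is used), but the underlying argument is the same.
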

\begin{proof}
It is easily seen that
  \bed
\gS_\cZ\gO_1\gS_\cZ\gO \longrightarrow  \int_{\gD} \gO_1(\gl)F(d\gl)\,\int_{\gD}
\gO(\mu)F(d\mu)\qquad\text{as}\qquad \vert \cZ \vert \longrightarrow 0.
  \eed
On the other hand, since the measure $F(\cdot)$  is  orthogonal, $F(\gD_j)F(\gD_{k}) =
F(\gD_j) \delta_{jk}$, $j,k\in \{1,\ldots, m\}$. Combining these relations with the
$F$-admissibility of   $\gO$ yields
   \bed
\begin{split}
\gS_\cZ\gO_1\gS_\cZ\gO &= \sum^n_{m,m'=1}\gO_1(x_m)F(\gD_m)\gO(x_{m'})F(\gD_{m'}) \\
 &= \sum^n_{m,m'=1}\gO_1(x_m)F(\gD_m)F(\gD_{m'}) \gO(x_{m'}) F(\gD_{m'})  \\
&= \sum^n_{m=1}\gO_1(x_m)F(\gD_m) \gO(x_{m}) F(\gD_{m})\\
& =\sum^n_{m=1}\gO_1(x_m)\gO(x_m)F(\gD_m) \longrightarrow  \int_{\gD} \gO_1(\gl)\gO(\gl)F(d\gl)
 \end{split}
    \eed
as $\vert \cZ \vert \longrightarrow 0$. Combining both relations completes the proof.
  \end{proof}

In what follows  we assume that  $\cH = \cH_1$.
   \begin{proposition}\la{prop:A.-1}
Let $X:[a,b)\longrightarrow \cB(\cH)$ be an $F$-admissible  function, and
assume, in addition, that there exist real numbers $c_1,c_2$, such that
$X(\gl)$ is self-adjoint and $c_1\le X(\gl) \le c_2$,  $\gl \in \gD$.
Let  $\varphi\in
C[c_1,c_2]$.
Then the following
holds:

\item[\;\;\rm (i)] The operator $\widehat{X}:= \int_{\gD} X(\gl) F(d\gl)$ is self-adjoint and
satisfies $c_1\le \widehat{X} \le c_2$;

\item[\;\;\rm (ii)]  The following estimate holds
$\parallel \varphi(\widehat{X}) \parallel \le \parallel \varphi \parallel_{\infty}$

\item[\;\;\rm (iii)] The operator-valued function $\varphi(X(\cdot))$ is $F$-admissible and
    \begin{equation}\label{expectation_of_fun-n}
\int_{\gD} \varphi(X(\gl)) F(d\gl) = \varphi(\widehat{X}).
    \end{equation}
\end{proposition}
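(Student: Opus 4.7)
The plan is to prove (i)--(iii) in order, reducing (iii) to the polynomial case via a Stone--Weierstrass approximation. The key preliminary observation, driving everything, is that under the hypotheses the admissibility relation \eqref{commut-relation} upgrades to full commutativity $X(\gl)F(\gd) = F(\gd)X(\gl)$: taking adjoints in \eqref{commut-relation} using the self-adjointness of $X(\gl)$ and $F(\gd)$ gives $F(\gd)X(\gl) = F(\gd)X(\gl)F(\gd)$, which combined with the original relation yields commutativity.

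For (i), the orthogonality $F(\gD_m)F(\gD_{m'}) = \delta_{mm'}F(\gD_m)$ together with commutativity allows one to rewrite each Riemann--Stieltjes sum as $\gS_\cZ X = \sum_m F(\gD_m)X(x_m)F(\gD_m)$, which is self-adjoint and satisfies $c_1 F(\gD) \le \gS_\cZ X \le c_2 F(\gD)$. Since $\supp(F)\subseteq\gD$ forces $F(\gD) = I_\cH$, passing to the limit $|\cZ|\to 0$ yields (i). Part (ii) is then immediate: by (i), $\sigma(\widehat X)\subseteq[c_1,c_2]$, so the continuous functional calculus gives $\|\varphi(\widehat X)\| = \sup_{t\in\sigma(\widehat X)}|\varphi(t)| \le \|\varphi\|_\infty$.

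For (iii) I would first settle the polynomial case by induction on $k$, proving $\int_\gD X(\gl)^k F(d\gl) = \widehat X^k$. The base case $k=1$ is the definition. For the inductive step, Proposition~\ref{prop:A.-2} applied with $\gO = X$ (admissible) and $\gO_1 = X^k$ (whose integral exists by induction) gives $\int_\gD X^{k+1}F(d\gl) = \widehat X^k\cdot\widehat X = \widehat X^{k+1}$. Linearity yields $\int_\gD p(X(\gl))F(d\gl) = p(\widehat X)$ for every polynomial $p$. For general $\varphi\in C[c_1,c_2]$ choose polynomials $p_n \to \varphi$ uniformly on $[c_1,c_2]$ and establish the $\cZ$-uniform estimate
\[\|\gS_\cZ\varphi(X) - \gS_\cZ p_n(X)\| = \Bigl\|\sum_m F(\gD_m)(\varphi - p_n)(X(x_m))F(\gD_m)\Bigr\| \le \|\varphi - p_n\|_\infty,\]
valid because the summands have mutually orthogonal ranges (so the norm of the sum equals the supremum of summand norms) and each summand has norm at most $\|\varphi - p_n\|_\infty$ by the functional calculus for $X(x_m)$. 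A standard $3\varepsilon$-argument combining this estimate, the polynomial case, and (ii) gives existence of $\int_\gD\varphi(X(\gl))F(d\gl)$ together with the identity $= \varphi(\widehat X)$. Finally, $F$-admissibility of $\varphi(X(\cdot))$ follows from $X(\gl)F(\gd) = F(\gd)X(\gl)$ (stable under continuous functional calculus) together with $F(\gd)^2 = F(\gd)$.

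The main technical obstacle will be the $\cZ$-uniform estimate above: it must be extracted carefully from the orthogonality $F(\gD_m)F(\gD_{m'}) = 0$ for $m\ne m'$, which makes $\sum_m F(\gD_m)g(X(x_m))F(\gD_m)$ behave like a block-diagonal operator whose norm is the \emph{supremum} (not the sum) of block norms. Once this is in place, (iii) is routine.
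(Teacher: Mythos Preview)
Your proposal is correct and follows essentially the same route as the paper's proof: part (i) via the sandwich $\sum_m F(\gD_m)X(x_m)F(\gD_m)$, part (ii) by functional calculus, and part (iii) by induction on monomials through Proposition~\ref{prop:A.-2} followed by Weierstrass approximation with the $\cZ$-uniform bound coming from orthogonality of the $F(\gD_m)$. Your preliminary observation that self-adjointness upgrades \eqref{commut-relation} to genuine commutativity $X(\gl)F(\gd)=F(\gd)X(\gl)$ is a clean simplification the paper does not make explicit; it lets you deduce the commutation relation for $\varphi(X(\cdot))$ directly from stability of commutants under continuous functional calculus, whereas the paper re-derives it by passing to the limit through the polynomial approximants.
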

\begin{proof}
{\;\;(i)} Let $\cZ$ be any partition as above. Then for any  $h\in \cH$ one gets
  \bed
\langle \gS_{\cZ} X h, h \rangle =  \sum_{m=1}^n \langle F(\gD_m) X(x_m) F(\gD_m) h, h
\rangle \ge  \sum_{m=1}^n c_1 \parallel F(\gD_m) h \parallel^2.
  \eed
Thus $\langle \gS_{\cZ}h,h \rangle \in \dR$ and $\langle \gS_{\cZ}h,h \rangle \ge c_1
\parallel h \parallel^2$. In the same way one shows that $\langle \gS_{\cZ}h, h\rangle
\le c_2 \parallel h \parallel^2$. By passing to the limit, as $\vert \cZ \vert
\longrightarrow 0$, we get that $\langle \hat{X}h, h \rangle  \in [c_1 \parallel h
\parallel^2 , c_2 \parallel h \parallel^2]$ for every $h\in \cH$, and (i)
is proved.

{\;\;(ii)} By the functional calculus, both inequalities  $\parallel\varphi(\widehat{X})
\parallel \le
\parallel \varphi \parallel_{\infty}$ and $\parallel \varphi(X(\gl)) \parallel \le
\parallel \varphi \parallel_{\infty}$ hold for every $\gl \in \gD$ and each  continuous
function $\varphi\in C[c_1,c_2]$.

{\;\;(iii)} First we prove, by induction, the assertion (iii) in the special case, when
$\varphi(\gl) = \gl^n$. By the assumption,  the assertion is true for $n=1$. Suppose
that it is true for $n =k$. Let us prove it for $n = k+1$. One has
  \begin{equation}\label{F-admis-le_powers}
  \begin{split}
X^{k+1}&(\gl)F(\gd) =  X(\gl) F(\gd) X^k(\gl) F(\gd) \\
&= F(\gd) X(\gl)F(\gd) \cdot F(\gd)
X^k(\gl) F(\gd)
= F(\gd) X^{k+1}(\gl) F(\gd),\
\end{split}
  \end{equation}
$\gl \in \gD, \gd\in \cB(\gD)$.
Therefore Proposition \ref{prop:A.-2} ensures that the integral  $\int_{\gD} X^{k +
1}(\gl) F(d\gl)$ exists and
  \bed
\int_{\gD} X^{k + 1} (\gl) F(d\gl) = \widehat{X}^{k + 1}.
    \eed
By linearity, these equalities are  easily extended  for polynomials in $\lambda$.

Let $\varphi$ be a continuous function,  $\varphi\in C[c_1,c_2]$.
By the Weierstrass Theorem, there exists  a sequence $\{p_k\}_1^\infty$ of polynomials
approaching  $\varphi$ in $ C[c_1,c_2]$. In accordance with  the functional calculus for
self-adjoint operators,
\bed
\parallel \varphi(\widehat{X}) - p_k(\widehat{X}) \parallel\  \le \  \parallel \varphi-p_k \parallel_{\infty} \longrightarrow
0\quad \text{as}\quad k\to \infty.
    \eed
and
   \bed
\parallel \varphi(X(\gl)) - p_k(X(\gl)) \parallel \  \le \  \parallel \varphi-p_k \parallel_{\infty} \longrightarrow 0
\quad \text{as}\quad k\to \infty, \qquad \gl \in \gD.
   \eed
Combining  this relation with equalities \eqref{F-admis-le_powers} for polynomials we
obtain that  for every $\gl\in \gD$ and any  Borel  subset $\gd\subset \gD$ the
following  holds
  \bed
  \begin{split}
\varphi(X(\gl)) F(\gd) &= \lim_{k\longrightarrow \infty} p_k(X(\gl)) F(\gd)\\
 &=
\lim_{k\longrightarrow \infty} F(\gd) p_k(X(\gl)) F(\gd) = F(\gd)\varphi(X(\gl)) F(\gd),
\quad \gl \in \gD.
\end{split}
  \eed
This relation means that the function $\varphi(X(\cdot))$ satisfies
commutation relation \eqref{commut-relation}.  To prove its $F$-admissibility it remains
to prove the existence of the integral  $\int_{\gD} \varphi(X(\gl)) F(d\gl)$.  We prove
it together  with  relation \eqref{expectation_of_fun-n}. To this end  for each
partition $\cZ$  of $[a,b]$ we  prove the following estimate
  $$
\parallel \gS_{\cZ} (\varphi(X) - p_k(X))
\parallel \ \le \  \parallel \varphi- p_k \parallel_{\infty}.
  $$
Since the measure $F$ is orthogonal, one gets
  \bed
  \begin{split}
\parallel \gS_{\cZ} &\varphi(X)f - \gS_{\cZ} p_k(X)f \parallel^2 =
\left\| \sum_{m=1}^n (\varphi-p_k)(X) (x_m) F(\gD_m) f \right\|^2 \\
&= \sum_{m=1}^n \parallel (\varphi-p_k)(X)(x_m) F(\gD_m)f  \parallel^2 \\
&\le \sum_{m=1}^n \parallel \varphi - p_k \parallel_{\infty}^2\parallel F(\gD_m) f \parallel^2 =
\parallel \varphi - p_k \parallel_{\infty}^2 \parallel f \parallel^2.
\end{split}
\eed
Let $\{\cZ_j\}$ be a sequence of partitions satisfying  $\vert \cZ_j \vert
\longrightarrow 0$  and  let $\varepsilon >0$. Choose $k$ such that $\parallel \varphi -
p_k \parallel_{\infty} < \varepsilon$ and  $j_0$ such that
  \bed
\parallel \gS_{\cZ_j}  p_k(X) - \gS_{\cZ_{j'}} p_k(X) \parallel < \varepsilon,\qquad j,j'\ge j_0,
  \eed
and hence $\parallel \gS_{\cZ_j}  \varphi(X) - \gS_{\cZ_{j'}} \varphi(X) \parallel <
3\varepsilon$ for all $j,j'\ge j_0$. Thus the limit $\lim_{j\longrightarrow \infty}
\gS_{\cZ_j}  \varphi(X)$ exists, and
  \begin{equation}\label{3.5-int-sum-estimate}
  \begin{split}
\parallel &\lim_{j\longrightarrow \infty} \gS_{\cZ_j}  \varphi(X) - \varphi(\widehat{X}) \parallel
\\
&\le
\parallel \lim_{j\longrightarrow \infty} \gS_{\cZ_j}  (\varphi(X) - p_k(X)) \parallel +
\parallel p_k(\widehat{X}) - \varphi(\widehat{X}) \parallel < 2 \varepsilon.
\end{split}
   \end{equation}
Since  $\varepsilon>0$ is arbitrary, this inequality ensures  the existence of the
integral $\int_{\gD} \varphi(X(\gl)) F(d\gl)$, thus  proves $F$-admissibility of
$\varphi(X(\cdot))$.   Moreover, estimate \eqref{3.5-int-sum-estimate} proves  equality
\eqref{expectation_of_fun-n}.
  \end{proof}
 Denote by  $dm$ the  Lebesgue measure on $\dR$.
  \begin{corollary}\label{Cor_Int_with_Lip_fun-n}
Assume that $\gO(\cdot) = \gO(\cdot)^*$  is  a self-adjoint $\cB(\cH)$-valued Lipschitz
function  in $\gD = [a,b)$ and  $c_1 \le \gO(\cdot)\le c_2$. Assume also that $F(\cdot)$
is  a spectral measure in $\cH$ with compact support, $\supp(F) \subseteq \gD:= [a,b)$,
and $\varphi\in C[c_1,c_2]$.  If in addition, commutation relation
\eqref{commut-relation} holds, then  the operator-valued function $\varphi(\gO(\cdot))$
is $F$-admissible and
    \begin{equation}\label{expectation_of_fun-n_2}
\int_{\gD} \varphi(\gO(\gl)) F(d\gl) = \varphi(\widehat{X}).
    \end{equation}
  \end{corollary}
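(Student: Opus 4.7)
The plan is to reduce the corollary to Proposition \ref{prop:A.-1} by establishing the one missing hypothesis, namely that the integral $\widehat{X} = \int_{\gD} \gO(\gl)F(d\gl)$ exists. Once $\gO$ is shown to be $F$-admissible, the conclusion is immediate because Proposition \ref{prop:A.-1}(iii) then gives exactly the claimed identity \eqref{expectation_of_fun-n_2} for every $\varphi\in C[c_1,c_2]$.

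To prove existence of the integral, I would show that the Riemann-Stieltjes sums $\gS_\cZ\gO$ form a Cauchy net as $|\cZ|\to 0$. Given two partitions $\cZ$ and $\cZ'$ of $[a,b)$, pass to a common refinement $\cZ''$ with subintervals $\{\gD''_k\}$, so that each $\gD''_k$ is contained in some $\gD_{m(k)}$ of $\cZ$ and some $\gD'_{m'(k)}$ of $\cZ'$. Using orthogonality of $F$, one has $F(\gD_m)=\sum_{k:\gD''_k\subset\gD_m}F(\gD''_k)$, hence
\[
\gS_\cZ\gO-\gS_{\cZ'}\gO \;=\; \sum_k \bigl[\gO(x_{m(k)})-\gO(x'_{m'(k)})\bigr] F(\gD''_k).
\]
Here the key point is that, by the commutation relation \eqref{commut-relation}, each vector $[\gO(x_{m(k)})-\gO(x'_{m'(k)})] F(\gD''_k) f$ lies in $\ran(F(\gD''_k))$, and the ranges of the orthogonal projections $F(\gD''_k)$ are mutually orthogonal. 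Consequently the terms in the sum are pairwise orthogonal, and Pythagoras together with the Lipschitz bound $\|\gO(x_{m(k)})-\gO(x'_{m'(k)})\|\le L(|\cZ|+|\cZ'|)$ yields
\[
\|(\gS_\cZ\gO-\gS_{\cZ'}\gO) f\|^2 \;\le\; L^2(|\cZ|+|\cZ'|)^2 \sum_k \|F(\gD''_k) f\|^2 \;=\; L^2(|\cZ|+|\cZ'|)^2 \|f\|^2.
\]
Hence $\|\gS_\cZ\gO-\gS_{\cZ'}\gO\|\le L(|\cZ|+|\cZ'|)$, proving that $\{\gS_\cZ\gO\}$ is norm Cauchy as $|\cZ|\to 0$. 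The limit, independent of the choice of intermediate points, defines $\widehat{X} = \int_{\gD}\gO(\gl)F(d\gl)\in\cB(\cH)$, so $\gO$ is $F$-admissible.

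Having established admissibility of $\gO$, apply Proposition \ref{prop:A.-1}(i) to conclude that $\widehat{X}=\widehat{X}^*$ with $c_1\le \widehat{X}\le c_2$, so that $\varphi(\widehat{X})$ makes sense for any $\varphi\in C[c_1,c_2]$, and then Proposition \ref{prop:A.-1}(iii) applied with $X=\gO$ gives both the $F$-admissibility of $\varphi(\gO(\cdot))$ and the identity \eqref{expectation_of_fun-n_2}. The only real content beyond Proposition \ref{prop:A.-1} is the Cauchy estimate above; I expect that step to be the main obstacle, since it is precisely where the Lipschitz hypothesis and the commutation relation \eqref{commut-relation} must be combined with the orthogonality of the spectral measure to make the telescoping work. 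Everything else is a direct citation of the previous proposition.
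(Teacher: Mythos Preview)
Your proposal is correct and follows the same reduction as the paper: establish $F$-admissibility of $\gO$ (i.e.\ existence of $\widehat X=\int_\gD \gO(\gl)F(d\gl)$) and then invoke Proposition~\ref{prop:A.-1}(iii). The only difference is that the paper does not prove the Cauchy estimate but instead cites \cite[Lemma~7.2]{AdaMen96} for the existence of the operator spectral integral of a Lipschitz integrand; your self-contained argument, which exploits the commutation relation \eqref{commut-relation} to make the summands orthogonal and then applies Pythagoras, is a valid alternative and arguably more transparent in this setting.
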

\begin{proof}
It is shown in \cite[Lemma 7.2]{AdaMen96}   that the integral \eqref{Bochner_Int-l}
exists whenever $\gO(\cdot)$ is  Lipschitz function.  By Proposition
\ref{prop:A.-1}(iii) $\varphi(\gO(\cdot))$ is $F$-admissible and equality
\eqref{expectation_of_fun-n_2} holds.
  \end{proof}
  \begin{corollary}
Let  $\gO(\cdot) = \gO(\cdot)^*$  be  differentiable with respect to the operator norm
$m$-almost everywhere in $\gD = [a,b)$, $c_1 \le \gO(\cdot)\le c_2$,   and let
$\gO(\cdot)$ be  expressed by means of its derivative $\gO'(\cdot)$ via the Bochner
integral  on $[a,b)$, i.e.
  \begin{equation}\label{integ_repr-n_of_Omega}
\gO(\gl) = \gO(a) + \int_a^\gl \gO'(x)dx, \quad \gl \in [a,b).
   \end{equation}
Assume also that  $F(\cdot)$ is  a spectral measure in $\cH$ with compact support,
$\supp(F) \subseteq \gD:= [a,b)$,  and $\varphi\in C[c_1,c_2]$. Assume also that
commutation relation \eqref{commut-relation} holds. Then the operator-valued function
$\varphi(\gO(\cdot))$ is $F$-admissible and
    \begin{equation}\label{eq:3.8}
\int_{\gD} \varphi(\gO(\gl)) F(d\gl) = \varphi(\widehat{X}).
    \end{equation}
   \end{corollary}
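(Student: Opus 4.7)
The plan is to reduce the corollary to Proposition \ref{prop:A.-1}(iii): since the commutation relation is hypothesized, $\gO(\gl) = \gO(\gl)^*$ with $c_1\le\gO(\gl)\le c_2$, and $\varphi\in C[c_1,c_2]$, the only missing ingredient is that $\gO(\cdot)$ itself be $F$-admissible. Equivalently, the essential task is to prove existence of the operator-spectral integral $\int_{\gD}\gO(\gl)F(d\gl)$; once this is established, Proposition \ref{prop:A.-1}(iii) immediately yields both the $F$-admissibility of $\varphi(\gO(\cdot))$ and the desired equality \eqref{eq:3.8}.

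First I would derive uniform continuity of $\gO(\cdot)$ on $\supp(F)$ in the operator norm. The Bochner representation \eqref{integ_repr-n_of_Omega} gives, for $a\le\gl_1<\gl_2<b$,
\[
\|\gO(\gl_2)-\gO(\gl_1)\|\le \int_{\gl_1}^{\gl_2}\|\gO'(x)\|\,dx,
\]
and the scalar integrand is Lebesgue-integrable on compact subintervals of $[a,b)$ because $\gO'$ is Bochner integrable there; absolute continuity of the Lebesgue integral then forces $\|\gO(\gl_2)-\gO(\gl_1)\|\to 0$ as $\gl_2-\gl_1\to 0$. Compactness of $\supp(F)$ upgrades this to uniform continuity on $\supp(F)$.

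Next I would establish existence of the integral by a Cauchy argument on Riemann--Stieltjes sums. Given $\varepsilon>0$, choose $\delta>0$ by uniform continuity so that $\|\gO(x)-\gO(y)\|<\varepsilon$ whenever $|x-y|<\delta$. For partitions $\cZ$, $\cZ'$ with $|\cZ|,|\cZ'|<\delta$, pass to a common refinement $\cZ''=\{\gD''_k\}$ and write
\[
\gS_\cZ\gO - \gS_{\cZ''}\gO = \sum_k\bigl(\gO(x_{m(k)})-\gO(x''_k)\bigr)F(\gD''_k),
\]
using $F(\gD_m) = \sum_{k:\,\gD''_k\subseteq\gD_m}F(\gD''_k)$ and the fact that $x_{m(k)}, x''_k\in\gD_{m(k)}$ are within $\delta$ of each other. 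By the commutation relation \eqref{commut-relation} each summand lies in $\ran(F(\gD''_k))$, so the orthogonality $F(\gD''_j)F(\gD''_k)=\delta_{jk}F(\gD''_k)$ yields, for any $f\in\cH$,
\[
\|(\gS_\cZ\gO - \gS_{\cZ''}\gO)f\|^2 = \sum_k \|(\gO(x_{m(k)})-\gO(x''_k))F(\gD''_k)f\|^2 \le \varepsilon^2 \|f\|^2.
\]
The analogous estimate for $\cZ'$ and the triangle inequality give $\|\gS_\cZ\gO-\gS_{\cZ'}\gO\|\le 2\varepsilon$, so $\{\gS_\cZ\gO\}$ is Cauchy in operator norm as $|\cZ|\to 0$ and the integral exists.

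The main obstacle is the Cauchy estimate, which requires simultaneous use of both hypotheses: the commutation relation, so that $(\gO(x_{m(k)})-\gO(x''_k))F(\gD''_k)f$ actually lies in $\ran(F(\gD''_k))$, and the orthogonality of $F$, which is then what allows the norm of the sum to collapse into a sum of norms without cross terms. Neither alone would suffice. With $F$-admissibility of $\gO(\cdot)$ in hand, the rest is a direct appeal to Proposition \ref{prop:A.-1}(iii).
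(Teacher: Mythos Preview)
Your proof is correct, but the route differs from the paper's. The paper disposes of the existence of $\int_{\gD}\gO(\gl)F(d\gl)$ in one line by citing an external result (\cite[Proposition 5.1.4]{BW83}), which guarantees existence of the operator-spectral integral whenever $\gO$ admits the Bochner representation \eqref{integ_repr-n_of_Omega}; that result does not require the commutation relation. After that citation, the paper invokes Proposition~\ref{prop:A.-1}(iii) exactly as you do.

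Your argument is more self-contained: you extract uniform norm-continuity of $\gO$ from the Bochner representation via absolute continuity of the scalar integral $\int\|\gO'(x)\|\,dx$, and then run a direct Cauchy estimate on the Riemann--Stieltjes sums. The key device---using the commutation relation \eqref{commut-relation} to push each difference into $\ran(F(\gD''_k))$ so that orthogonality of the spectral measure kills the cross terms---is perfectly legitimate here since \eqref{commut-relation} is already a hypothesis, but note that it is an \emph{extra} ingredient relative to the cited result, which handles existence for arbitrary (not necessarily $F$-admissible) integrands of this regularity. What you gain is an elementary, reference-free proof tailored to the present setting; what the paper's route gains is brevity and a slightly stronger existence statement in the background.
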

\begin{proof}
It is known (see \cite[Proposition 5.1.4]{BW83}) that the integral  \eqref{Bochner_Int-l}   exists whenever
$\gO(\cdot)$ admits representation  \eqref{integ_repr-n_of_Omega}. By Proposition
\ref{prop:A.-1}(iii)  $\varphi(X(\cdot))$ is $F$-admissible and equality
\eqref{eq:3.8} holds.
  \end{proof}
  \begin{remark}
    {\em
    Absolute continuity of  $\gO(\cdot)$ (and even its Lipschitz property)
does  not ensure  representation \eqref{integ_repr-n_of_Omega}  (see \cite[Chapter
5]{Yosida1980}). Thus, the conditions in both corollaries are different.
}
  \end{remark}
If $F(\cdot)$ is a  spectral measure on $\dR$ with non-compact  support, then we
define improper operator spectral integrals by
\bed
\begin{split}
\int_\dR \gO(\gl)F(d\gl) &:= \slim_{\begin{subarray}{l} b\to +\infty\\ a\to -\infty\end{subarray}}\int_{\gD}\gO(\gl)F(d\gl),\\
\int_\dR F(d\gl)\gO(\gl) &:= \slim_{\begin{subarray}{l} b\to +\infty\\ a\to -\infty\end{subarray}}\int_{\gD} F(d\gl)\gO(\gl).
\end{split}
\eed
Obviously, the improper operator spectral integral $\int_\dR \gO(\gl)F(d\gl)$ exists if
and only if the following conditions
   \bed%\la{eq:A.8}
\slim_{b\to \infty}\int^{b + \varepsilon}_b\gO(\gl)F(d\gl) = 0 \quad\text{and}\quad
\slim_{a\to -\infty}\int^a_{a - \varepsilon}\gO(\gl)F(d\gl) = 0,
   \eed
are satisfied for any $\varepsilon > 0$. Similar results hold true for $\int_\dR
F(d\gl)\gO(\gl)$.

\begin{proposition}\la{prop:A.-3}
Let $\gO:\dR\longrightarrow \cB(\cH)$. Assume that $\gO\upharpoonright \gD$ is $F$-admissible for every compact interval $\gD$ and
  \bed\
\parallel \gO(\lambda) \parallel \le C_0 (1+ \vert \gl \vert)^{\ga}, \qquad \gl \in \dR,
  \eed
for some constants $\ga\ge 0$,  $C_0>0$. Then the improper spectral integral $\int_{\dR}
\gO(\gl) F(d\gl) f$ exists for any $f\in \cH$ satisfying
\be\la{joh5}
\int_{\dR} \vert \gl \vert^{2\ga} d\parallel F(\gl) f \parallel^2 < \infty.
\ee
\end{proposition}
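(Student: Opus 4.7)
The plan is to establish a single quadratic estimate
\[
\Bigl\| \int_{\gD} \gO(\gl) F(d\gl) f \Bigr\|^2 \le C_0^2 \int_{\gD} (1+|\gl|)^{2\ga}\, d\|F(\gl)f\|^2
\]
valid for every compact interval $\gD$, and then to deduce existence of the improper integral as a strong limit via the Cauchy criterion. With such an estimate in hand, the convergence is purely scalar: the measure $\mu_f(\gd):=\|F(\gd)f\|^2=\langle F(\gd)f,f\rangle$ is finite since $\mu_f(\dR)=\|f\|^2<\infty$, and by the bound $(1+|\gl|)^{2\ga}\le 2^{2\ga}(1+|\gl|^{2\ga})$ together with hypothesis \eqref{joh5}, the function $(1+|\gl|)^{2\ga}$ is $\mu_f$-integrable over $\dR$. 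Hence the tails $\int_{\dR\setminus\gD}(1+|\gl|)^{2\ga}\,d\mu_f$ tend to $0$ as $\gD\uparrow\dR$.

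To derive the key estimate I would argue at the level of Riemann--Stieltjes sums. Fix a partition $\cZ$ of $\gD=[a,b)$ into subintervals $\gD_m$ with sample points $x_m\in\gD_m$, and consider
\[
\gS_\cZ\gO\, f = \sum_{m=1}^n \gO(x_m) F(\gD_m) f .
\]
$F$-admissibility yields $\gO(x_m)F(\gD_m)=F(\gD_m)\gO(x_m)F(\gD_m)$, so each summand lies in the range $\ran F(\gD_m)$. Orthogonality of the spectral measure and disjointness of the $\gD_m$ make these ranges mutually orthogonal, hence by Pythagoras
\[
\|\gS_\cZ\gO\, f\|^2 = \sum_{m=1}^n \|\gO(x_m)F(\gD_m)f\|^2
\le C_0^2 \sum_{m=1}^n (1+|x_m|)^{2\ga}\,\|F(\gD_m)f\|^2 .
\]
The last sum is a Riemann--Stieltjes sum of the continuous function $(1+|\gl|)^{2\ga}$ against the scalar measure $\mu_f$ on $\gD$; passing to $|\cZ|\to 0$ and using the assumed existence of $\int_\gD\gO(\gl)F(d\gl)f$ for compact $\gD$ gives the announced estimate.

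With this estimate the Cauchy argument is routine. For $a\le a'\le b'\le b$, applying the bound to the intervals $[a,a')$ and $[b',b)$ gives
\[
\Bigl\|\int_a^b\gO\, F(d\gl) f - \int_{a'}^{b'}\gO\, F(d\gl) f\Bigr\|^2
\le C_0^2 \int_{[a,a')\cup[b',b)}(1+|\gl|)^{2\ga}\,d\mu_f(\gl),
\]
and the right side tends to $0$ as $a,a'\to-\infty$ and $b,b'\to+\infty$, giving the strong limit as $\gD\uparrow\dR$. In particular, the tail conditions $\slim_{b\to\infty}\int_b^{b+\varepsilon}\gO\, F(d\gl)=0$ and its negative counterpart hold.

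The main obstacle is the Pythagorean diagonalization of the sum $\gS_\cZ\gO\, f$: this is the only place where both orthogonality of $F$ and the commutation relation \eqref{commut-relation} are genuinely used, and it is what converts the quadratic-form control provided by \eqref{joh5} into a norm bound on the operator-valued integral. Once that identity is established, all remaining steps reduce to standard manipulations with the scalar spectral measure $\mu_f$.
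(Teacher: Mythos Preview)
Your proof is correct and follows essentially the same route as the paper: both use $F$-admissibility together with orthogonality of the spectral measure to obtain the Pythagorean identity $\|\gS_\cZ\gO f\|^2=\sum_m\|\gO(x_m)F(\gD_m)f\|^2$, bound each term by $C_0^2(1+|x_m|)^{2\ga}\|F(\gD_m)f\|^2$, and pass to the scalar tail integral, which vanishes by \eqref{joh5}. Your version is slightly more explicit in justifying the integrability of $(1+|\gl|)^{2\ga}$ against $\mu_f$ via $(1+|\gl|)^{2\ga}\le 2^{2\ga}(1+|\gl|^{2\ga})$, a point the paper leaves implicit.
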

\begin{proof}
Let $b,c>0$. Let $n\in \dN$. Put $x_m:= b+ \dfrac{m-1}{n}c$, $\gD_m:= [x_m,x_m+\dfrac{c}{n} )$, $\cZ:= \bigcup_{m=1}^n \gD_m$. Then
\begin{eqnarray*}
\parallel \gS_{\cZ} \gO f \parallel^2 & = & \gS_{m=1}^n \parallel \gO(x_m) F(\gD_m) f \parallel^2 \\
& \le & C_0^2 \gS_{m=1}^n (1+x_m)^{2\ga} \parallel F(\gD_m) f \parallel^2 \\
& \le & C_0^2 \int_{[b,b+c)} (1+\gl)^{2\ga} d \parallel F(\gl) f \parallel^2.
\end{eqnarray*}
Passing to the limit, as $n$ tends to infinity, we get that
\bed
\parallel \int_{[b,b+c)} \gO(\gl) F(d\gl) f \parallel^2 \le C_0^2\int_{[b,b+c)}(1+\gl)^{2\ga} d \parallel F(\gl) f \parallel^2.
\eed
The integral on the right hand side tends to zero, as $b$ tends to infinity, provided
\eqref{joh5} holds. The case $a\longrightarrow -\infty$ is  treated similarly.
  \end{proof}
\section{Boundary triplets for tensor products}

\subsection{Bounded case}

Let $A$ be a densely defined symmetric operator with equal deficiency indices acting in
the separable Hilbert space $\gotH$ and let $T$ be a bounded self-adjoint operator
acting on the separable Hilbert space $\gotT$. Let us consider the closed symmetric
operator $S:= A \otimes I_\gotT + I_\gotH \otimes T$ in $\gotH_S := \gotH \otimes
\gotT$. We recall that the operator $S$ is defined as the closure of $S_\odot := A \odot
I_\gotT + I_\gotH \odot T$,
   \bed
\dom(A \odot I_\gotT + I_\gotH \odot T) := \left\{f = \sum^n_{k=1} g_k \otimes h_k: g_k
\in \dom(A), h_k \in \gotT \right\}
  \eed
and
  \bed
S_\odot f := \sum^n_{k=1}
(Ag_k \otimes h_k + g_k \otimes Th_k), \qquad f \in \dom(A \odot I_\gotT + I_\gotH \odot
T).
  \eed
Obviously, the operator $S_\odot$ is densely defined and symmetric.

Let $\Pi_A = \{\cH^A,\gG_0^{A}, \gG_1^{A}\}$ be a boundary triplet for $A^*$ with
$\gamma$-field $\gga^A(\cdot)$ and Weyl function $M^A(\cdot)$. Let $J_{A^*}$ be the embedding
operator $J_{A^*}: \gotH_+(A^*) \longrightarrow \dom(A^*)$. Obviously, $\ran(J_{A^*}) =
\dom(A^*)$ and $\ker(J_{A^*}) = \{0\}$ as well as $\gG^A_j = \wh{\gG^A_j} J^{-1}_{A^*}$,
$j = 0,1$. Notice that
$\gotH_+((A \otimes I_{\gotT})^*) = \gotH_+(A^* \otimes I_{\gotT}) = \gotH_+(A^*)
\otimes \gotT$
and $J_{(A \otimes I_{\gotT})^*} = J_{A^* \otimes I_{\gotT}}.$
Moreover, one has
\bed%\la{eq:3.5}
\ran(J_{(A \otimes I_{\gotT})^*})  = \dom((A \otimes I_{\gotT})^*) = \dom(A^* \otimes
I_{\gotT}). \eed
We set
     \bed\la{eq:3.6}
(\gG^A_j \wh\otimes I_\gotT) f := (\wh{\gG^A_j} \otimes I_{\gotT})\,J^{-1}_{(A \otimes
I_{\gotT})^*}, \quad j \in \{0,1\}, \quad f \in \dom(A^* \otimes I_{\gotT}).
     \eed
It turns out that $\Pi_A \wh\otimes I_\gotT := \{\cH^A \otimes \gotT,\gG^A_0 \wh\otimes
I_\gotT,\gG^A_1 \wh\otimes I_\gotT\}$ is a boundary triplet for $(A\otimes I_\gotT)^* =
A^* \otimes I_{\gotT}$.

\bt\la{th:3.1}
Let $\Pi_A = \{\cH^A,\gG^A_0,\gG^A_1\}$ be a boundary triplet for $A^{*}$ with
$\gamma$-field $\gamma^A(\cdot)$ and Weyl function $M^A(\cdot)$.   Let also $T = T^* \in
\cB(\gotT)$, and let $\gD$ be the smallest  closed  interval containing the spectrum
$\gs(T)$.
Finally, let  $\widehat E_T(\gd) := I_{\cH^A} \otimes E_T(\gd)$, $\gd \in \cB(\dR)$, where
$E_T(\cdot)$ is the spectral measure of $T$. Then:
\begin{enumerate}

\item[\rm (i)]
$\Pi_S = \{\cH^S,\gG^S_0,\gG^S_1\} := \Pi_A \wh\otimes I_\gotT$ is a boundary triplet
for $S^*$ such that $S_0 := S^*\upharpoonright\ker(\gG_0^S) = A_0 \otimes I_{\gotT} +
I_{\gotH} \otimes T$.

\item[\rm (ii)]
The $\gamma$-field $\gga^S(\cdot)$ and the Weyl function $M^S(\cdot)$ of $\Pi_S$ admit the
following representations
  \be\la{eq:3.7}
\gamma^S(z) = \int_{\gD} \left(\gamma^A(z - \lambda) \otimes I_{\gotT}\right) \widehat
E_T(d\lambda), \qquad z \in \dC_\pm,
  \ee
and
\be\label{weyl}
\begin{split}
M^S(z) &= \int_{\gD} \widehat E_T(d\lambda) \left(M^A(z - \lambda) \otimes I_{\gotT}\right)\\
&=  \int_{\gD}  \left(M^A(z - \lambda) \otimes I_\gotT\right)\widehat E_T(d\lambda),
\qquad z \in \dC_\pm.
\end{split}
\ee
In particular,
  \begin{equation}\label{4.3_def_subspace}
\ran \left( \int_{\Delta}\bigl(\gamma^A(z-\lambda)\otimes I_{\gotT}\bigr){\widehat
E}_T(d\lambda)\right) = \gotN_z(S^*)=\ker(S^*-z).
  \end{equation}

\item[\rm (iii)]
If the Weyl function $M^A(\cdot)$  is of scalar type, $M^A(\cdot) =
m^A(\cdot)I_{\cH^A}$,  then
   \bed
M^S(z) = I_{\cH^A} \otimes m^A(z - T), \qquad z \in \dC_\pm.
   \eed
In particular, the latter holds  whenever  $n_\pm(A) = 1$.
\end{enumerate}
\et

Note that the integrals \eqref{weyl} and \eqref{eq:3.7}  exist due to Corollary
\ref{Cor_Int_with_Lip_fun-n}  since both the Weyl function $M^S(z- \cdot)$ and
$\gamma$-field  $\gamma^S(z- \cdot)$ are holomorphic in $\gl$, hence Lipschitz
functions.

\begin{proof}
(i) The proof is straightforward.

(ii)   In accordance with  \cite[Lemma 7.2]{AdaMen96} both integrals \eqref{eq:3.7}
and \eqref{weyl}  exist since $\gamma^A(\cdot)$ and $M^A(\cdot)$ are
Lipschitz.   Let $\pi =\{a = \gl_0 <\gl_1 < \gl_2 < \ldots  < \gl_n = b\}$  be a
partition of $\gD= [a,b]$, $\Delta_k := [\gl_{k-1}, \gl_{k})$,  and let
   \bed
   \begin{split}
T_k &:= \lambda_k E(\Delta_k), \quad  T_{\pi} := \bigoplus^n_{k=1}T_k  =
\sum^n_{k=1}\lambda_k E_T(\Delta_k),\\
S_{\pi} &:=A\otimes I_{\gotT} +I_{\gotH}\otimes T_{\pi},
\end{split}
   \eed
and $\gotT_k:=\ran E(\Delta_k)$.   $T_k$ is regarded as an operator in $\gotT_k$.
It is easily seen  that $\gotT = \bigoplus^n_1\gotT_k$
and
   \bed%\label{5.20}
S_{\pi}=\bigoplus^n_{k=1} S_k, \qquad S_{k} := A\otimes I_{\gotT_k} + I_{\gotH}\otimes
T_{k}\in \cC({\gotH}\otimes {\gotT_k}).
  \eed
Clearly,  $S_{k}^* := A^*\otimes I_{\gotT_k} + I_{\gotH}\otimes T_{k}$.
Moreover for every $k$ such that $\gotT_k \not= \{ 0 \}$ we have $\sigma(T_k)=\{\lambda_k\}$ and hence
$\sigma(S^*_k)=\sigma(A^*\otimes I_{\gotT_k}) +\lambda_k $  and
$\gotN_z(S_k) = \gotN_{z- \gl_k}(A)\otimes \gotT_k$.
Clearly,
    \bed
    \begin{split}
S^*_{\pi} &= A^*\otimes \left(\bigoplus^n_{k=1} E_T(\Delta_k)\right) + I_{\gotH}\otimes
\left(\bigoplus^n_{k=1}\lambda_k E_T(\Delta_k)\right) \\
&= \bigoplus^n_{k=1}(A^*+\lambda_k I_{\gotH})\otimes E_T(\Delta_k).
\end{split}
    \eed
Hence   $\gotN_z(S_{\pi}) = \ker(S_{\pi}^* - z I_{\gotH}) = \text{ran}
\left(\sum^n_{k=1}\gamma^A(z-\lambda_k) \otimes E_T(\Delta_k)\right)$

Noting that  $\Gamma^{S_\pi}_0 = \Gamma^{S}_0 = \Gamma^{A}_0\otimes I_{\gotT}$  and
using definition \eqref{2.3A} one gets
  \begin{equation}\label{5.24_gamma-field_S-pi}
  \begin{split}
\Gamma^{S_\pi}_0 &\left(\sum^n_{k=1} \gamma^A(z-\lambda)\otimes E_T(\Delta_k)\right) =
\sum^n_{k=1}\Gamma_0^A\gamma^A(z-\lambda_k)\otimes E_T(\Delta_k)\\
&=\sum^n_{k=1}
I_{\cH}\otimes E_T(\Delta_k) = I_{\cH}\otimes I_{\gotT} = I_{{\cH}\otimes{\gotT}}.
\end{split}
  \end{equation}
Combining this relation with   definition \eqref{2.3A} of the $\gamma$-field   one
derives
   \bed
\gamma^{S_\pi}(z) =  \left(\Gamma^{S_\pi}_0\upharpoonright \gotN_z(S_{\pi})\right)^{-1}
= \sum^n_{k=1} \gamma^A(z - \gl_k)\otimes E_T(\Delta_k).
   \eed

Applying operator $\Gamma_1$ to this equality and using Definition \ref{Weylfunc}  we
arrive at the Weyl function $M^{S_\pi}(\cdot)$ corresponding to the triplet
$\Pi^{S_{\pi}}$ of  $S^*_{\pi}$,
    \bed
    \begin{split}
M^{S_\pi}(z) &=  \Gamma^{S_\pi}_1 \gamma^{S_\pi}(z)  = \sum^n_{k=1} \Gamma^A_1\gamma^A(z
- \gl_k)\otimes E_T(\Delta_k) \\
&=   \sum^n_{k=1} M^A(z-\lambda_k)\otimes E_T(\Delta_k).
\end{split}
    \eed
Since the integrals \eqref{eq:3.7New} and \eqref{weyl_New}  exist, the following uniform
convergence holds
    \begin{equation}\label{5.27_wt-gamma}
\gamma^{S_{\pi}}(z)\to \int_{\gD} \left(\gamma^A(z - \lambda) \otimes I_{\gotT}\right)
\widehat E_T(d\lambda) =: \wt{\gamma}^{S}(z) \qquad \text{as}\quad  |\pi| \to 0,
      \end{equation}
and
\begin{equation}\label{5.27_wt-weyl}
M^{S_{\pi}}(z)\to \int_{\gD} \left(M^A(z - \lambda) \otimes I_\gotT\right)\widehat
E_T(d\lambda) =: \wt {M}^{S}(z)  \qquad \text{as}\quad  |\pi| \to 0,
    \end{equation}
where as usual   $|\pi| = \max_{k = 1,2,\ldots,n}|\gD_k|$.

Next we show that  $\wt{\gamma}^{S}(z) = {\gamma}^{S}(z)$ and  $\wt {M}^{S}(z) =
{M}^{S}(z)$ for $z\in \dC_{\pm}$. One gets
\be
\begin{split}
\bigl((&A^*-z)\otimes
I_{\gotT}\bigr)\gamma^{S_{\pi}}(z)g = \sum^n_{k=1} (A^*-z)\gamma^A(z-\lambda_k)\otimes E_T (\Delta_k)g \nonumber \\
&= -\sum^n_{k=1} \gl_k\gamma^A(z - \gl_k)\otimes E_T(\Delta_k)g
 \to  -\int_{\Delta}\bigl(\lambda\gamma^A(z-\lambda)\otimes I_{\gotT}\bigr){\widehat
 E}_T(d\lambda)g
 \end{split}
\ee
as $|\pi| \to 0$.  Since $A^*$ is closed, one gets by combining this relation with
\eqref{5.27_wt-gamma} that  $\int_{\Delta}\bigl(\lambda\gamma^A(z-\lambda)\otimes
I_{\gotT}\bigr){\widehat E}_T(d\lambda)g \in \dom(A^*\otimes I_{\gotT})$  for each $g\in
{\cH}\otimes {\gotT}$  and
    \bed
        \begin{split}
\bigl((A^*-z)\otimes I_{\gotT} \bigr) &\int_{\Delta}\bigl(\gamma^A(z-\lambda)\otimes
I_{\gotT}\bigr){\widehat E}_T(d\lambda)\\
 &= - \int_{\Delta}\bigl(\lambda\gamma^A(z-\lambda)\otimes I_{\gotT}\bigr){\widehat
 E}_T(d\lambda).
\end{split}
  \eed
In turn, using this relation and applying Proposition \ref{prop:A.-2}  we derive
  \bed
  \begin{split}
(&S^*-z)\int_{\Delta}\bigl(\gamma^A(z-\lambda)\otimes I_{\gotT}\bigr){\widehat
E}_T(d\lambda) \\
&= \bigl((A^*-z)\otimes I\bigr)
\int_{\Delta}\bigl(\gamma^A(z-\lambda)\otimes I_{\gotT}\bigr){\widehat E}_T(d\lambda)\\
 & \qquad\qquad\qquad\qquad+\int_{\Delta}\lambda {\widehat E}_T(d\lambda)\cdot
\int_{\Delta}\bigl(\gamma^A(z-\lambda)\otimes I_{\gotT}\bigr){\widehat E}_T(d\lambda)\nonumber \\
&= - \int_{\Delta}\bigl(\lambda\gamma^A(z-\lambda)\otimes I_{\gotT}\bigr){\widehat E}_T(d\lambda)
 + \int_{\Delta}\bigl(\lambda\gamma^A(z-\lambda)\otimes I_{\gotT}\bigr){\widehat
E}_T(d\lambda) = 0.
\end{split}
  \eed
It follows that $\ran \left( \int_{\Delta}\bigl(\gamma^A(z-\lambda)\otimes
I_{\gotT}\bigr){\widehat E}_T(d\lambda)\right) \subset \gotN_z(S^*)=\ker(S^*-z).$

   Let us show that the convergence in \eqref{5.27_wt-gamma}
holds in $\gotH_+(S)$, i.e. in the graph norm.

Choose a sequence $\{\pi_n\}^{\infty}_1$ of partitions of $[a,b]$ such that
$\lim_{n\to\infty}|\pi_n|=0$. Since the convergence in \eqref{5.27_wt-gamma}  is
uniform, there exists a constant $C(z) >0$ depending on $z$ and not depending on $n$ and
such that $\|\gamma^{S_{\pi_n}}(z)\|\le C(z)$ for all $n$.  Besides,  for any
$\varepsilon > 0$ there exists $N = N(\varepsilon) \in\dN$ such that
$\|T_{\pi_n}-T\|\le\varepsilon$ for $n\ge N$.  Taking  these relations into account one
gets
     \bed%\label{5.31}
        \begin{split}
\|&(S^* - z)\gamma^{S_{\pi}}(z)g\| = \|(S^* - z)\gamma^{S_{\pi}}(z)g -(S^*_{\pi}-z)\gamma^{S_{\pi}}(z)g\|\\
&= \|\bigl(I\otimes(T-T_{\pi})\bigr)\gamma^{S_{\pi}}(z)g \|
 \le \varepsilon \|\gamma^{S_{\pi}}(z)\| \cdot \|g\| \le \varepsilon C(z)\|g\|
\end{split}
     \eed
for any $\pi\in\{\pi_n\}^{\infty}_N$, hence  $\|\lim_{n\to\infty}(S^* -
z)\gamma^{S_{\pi_n}}(z)g\| =0$ for any $g\in {\cH}\otimes {\gotT}$. In turn, combining
this relation with \eqref{5.27_wt-gamma} yields
    \begin{equation}\label{converg_gamma-field__H+}
\lim_{n\to\infty}\left\|\gamma^{S_{\pi_n}}(z) -  \int_{\gD} \left(\gamma^A(z - \lambda)
\otimes I_{\gotT}\right) \widehat E_T(d\lambda)\right\|_{\gotH_+(S)} =0.
      \end{equation}

It follows from \eqref{5.24_gamma-field_S-pi}  that  $\Gamma^{S}_0 \gamma^{S_\pi}(z)=
\Gamma^{S_\pi}_0 \gamma^{S_\pi}(z) = I_{\cH}\otimes I_{\gotT} \to I_{\cH}\otimes
I_{\gotT}$ as $|\pi| \to 0$.   Therefore relation \eqref{converg_gamma-field__H+}
implies
$$
\Gamma^{S}_0 \wt{\gamma}^{S}(z) =  I_{\cH}\otimes I_{\gotT},
$$
i.e.  $\wt{\gamma}^{S}(z) = {\gamma}^{S}(z)$. This proves \eqref{eq:3.7}.
In  turn, \eqref{eq:3.7} implies  \eqref{4.3_def_subspace}.

Further, combining just established relation $\wt{\gamma}^{S}(\cdot) =
{\gamma}^{S}(\cdot)$ with
relation \eqref{converg_gamma-field__H+}
and using the boundedness of the operator ${\Gamma}^{S}_1\in [{\gotH_+(S)}, \cH]$ we
obtain
    \bed
    \begin{split}
\lim_{n\to\infty} M^{S_{\pi_n}}(z) &= \lim_{n\to\infty} \Gamma^{S_{\pi_n}}_1
\gamma^{S_{\pi_n}}(z) \\
&= \lim_{n\to\infty} \Gamma^{S}_1 \gamma^{S_{\pi_n}}(z) =
\Gamma^{S}_1 \gamma^{S}(z) = M^{S}(z), \quad z\in \dC_{\pm},
\end{split}
    \eed
where the convergence  is uniform.
 In turn, combining this relation with   \eqref{5.27_wt-weyl} yields \eqref{weyl}.
  \end{proof}
  \begin{remark}
    {\em
Another proof of Theorem \ref{th:3.1}  can be found in \cite[cf. Proposition 3.1 and
3.2]{BoiNeiPop2013}.
}
    \end{remark}
\begin{example}
{\em
Let us illustrate the theorem above. To this end we consider the case that $A$ is a closed symmetric operator with deficiency indices $n_\pm = 2$.
In particular, let $\Pi_A = \{\cH^A,\gG^A_0,\gG^A_1\}$ where $\cH^A = \transpose{(\cH^A_1 \oplus \cH^A_2)}$, $\cH^A_j = \dC$, $j = 1,2$. We use the representation
\bed
\gG^A_j  =
\begin{pmatrix}
\gG^A_{j1}\\
\gG^A_{j2}
\end{pmatrix} : \dom(A^*) \longrightarrow
\begin{matrix}
\cH^A_1\\
\oplus\\
\cH^A_2
\end{matrix},
\quad j = 0,1.
\eed
 For the Gamma field
$\gga^A(\cdot)$ we use the representation $\gga^A(z) = (\gga^A_1(z),\gga^A_2(z))$, $\gga_j^A(z): \cH^A_j \longrightarrow \gotH$, $j = 1,2$, $z \in \dC_\pm$. The Weyl function $M^A(\cdot)$ admits the representation
\bed%\la{eq:3.11a}
M^A(z) =
\begin{pmatrix}
m^A_{11}(z) & m^A_{12}(z)\\
m^A_{21}(z) & m^A_{22}(z)
\end{pmatrix}, \quad z \in \dC_\pm,
\eed
where $m^A_{ij}(\cdot)$ are holomorphic functions in $\dC_\pm$.

We consider the closed symmetric operator $S = A \otimes I_\gotT + I_{\gotHÂ} \otimes T$, where $T$ is bounded and self-adjoint.
Let $\Pi_S = \Pi_A \wh \otimes I_\gotT$, cf. Theorem \ref{th:3.1}\,(i). Obviously, the boundary value space $\cH^S = \cH^A \otimes \gotT$ can be decomposed by
$\cH^S  = \transpose{(\cH^S_1 \oplus \cH^S_2)}$, $\cH^S_j := \gotT$, $j = 1,2$.
The boundary value maps $\gG^S_0 = \gG^A_0 \wh \otimes I_\gotT$ and $\gG^S_1 = \gG^A_1 \wh \otimes I_\gotT$  will be represented by
\bed%\la{eq:3.12}
\gG^S_0 =
\begin{pmatrix}
\gG^S_{01}\\
\gG^S_{02}
\end{pmatrix}:\dom(S^*) \longrightarrow
\begin{matrix}
\cH^S_1\\
\oplus\\
\cH^S_2
\end{matrix}
\;\mbox{and} \;
\gG^S_1 =
\begin{pmatrix}
\gG^S_{11}\\
\gG^S_{12}
\end{pmatrix}: \dom(S^*) \longrightarrow
\begin{matrix}
\cH^S_1\\
\oplus\\
\cH^S_2
\end{matrix}
\eed
where $\gG^S_{0j} := \gG^A_{0j} \wh \otimes I_\gotT$ and $\gG^S_{1j} := \gG^A_{0} \wh \otimes I_\gotT$, $j = 0,1$.
From \eqref{eq:3.7} we get the representation $\gga^S(z) =  (\gga^S_1(z),\gga^S_2(z))$, $z \in \dC_\pm$, where $\gga^S_j(z): \cH^S_j \longrightarrow \gotH$,
\bed \gga^S_j(z) := \int_\gD \gga^A_j(z-\gl)\; \widehat E_T(d\gl), \quad j = 1,2. \eed
The Weyl function $M^S(\cdot)$ admits the representation
\bed%\la{eq:3.14a}
M^S(z) =
\begin{pmatrix}
m^A_{11}(z - T) & m^A_{12}(z-T)\\
m^A_{21}(z - T) & m^A_{22}(z-T)
\end{pmatrix}:
\begin{matrix}
\cH^S_1\\
\oplus\\
\cH^S_2
\end{matrix}
\longrightarrow
\begin{matrix}
\cH^S_1\\
\oplus\\
\cH^S_2
\end{matrix}, \quad z \in \dC_\pm.
\eed
The representation of the Weyl function becomes very simple if $M^A(\cdot)$ is diagonal. In this case we have
$M^S(z) = \diag(m^A_{11}(z-T),m^A_{22}(z-T))$, $z \in \dC_\pm$.
}
\end{example}
Let us compute the normalized boundary triplet $\wt \Pi_S$ associated with $\Pi_S$ in
accordance with Lemma \ref{II.5}.
\begin{proposition}\la{prop:3.2}
Let $\Pi_A = \{\cH^A,\gG^A_0,\gG^A_1\}$ be a boundary triplet for $A^{*}$ with the
$\gamma$-field  $\gamma^A(\cdot)$ and Weyl function $M^A(\cdot)$. Let also $A_0 :=
A^*\upharpoonright\ker(\gG_0^A)$,  $T = T^* \in \cB(\gotT)$, and let $\gD$ be the smallest
closed  interval containing the spectrum $\gs(T)$,
and let $\Pi_S = \Pi_A \wh \otimes I_\gotT$.
Finally, let  $\widehat E_T(\gd) := I_{\cH^A} \otimes E_T(\gd)$, $\gd \in \cB(\dR)$, where
$E_T(\cdot)$ is the spectral measure of $T$. Then:
\item[\;\;{\rm (i)}] The triplet $\wt \Pi_S = \{\wt \cH^S,\wt \gG^S_0,\wt \gG^S_1\}$  with
$\wt \cH^S := \cH^A \otimes \gotT$  and
\be\la{eq:3.10}
\begin{split}
\wt \gG^S_0 :=& \left(\int_{\gD} \widehat E_T(d\gl) \sqrt{\im(M^A(i-\gl))} \otimes I_{\gotT}\right)\cdot(\gG^A_0 \wh\otimes I_\gotT),\\
\wt \gG^S_1 :=& \left(\int_{\gD} \widehat E_T(d\gl) \frac{1}{\sqrt{\im(M^A(i-\gl))}} \otimes I_{\gotT}\right)\cdot\\
& \cdot\left(\gG^A_1 \wh\otimes I_\gotT - \left(\int_{\gD} \widehat E_T(d\gl)
\re(M^A(i-\gl)) \otimes I_\gotT\right) \cdot
(\gG^A_0 \wh\otimes I_\gotT)\right)\\
\end{split}
\ee
forms  a normalized boundary triplet for $S^*$ such that
   \bed%\la{eq:3.11}
\wt S_0 := S^*\upharpoonright\ker(\wt \gG^S_0) = S_0 = A_0 \otimes
I_\gotT + I_\gotH \otimes T.
  \eed

\item[\;\;{\rm (ii)}]
The $\gamma$-field $\wt \gga^S(\cdot)$ and Weyl function $\wt M^S(\cdot)$ corresponding
to the normalized boundary triplet $\wt\Pi_S$ admit the following representations
   \be\la{3.10}
\wt\gga^S(z) = \int_{\gD} \left(\gga^A(z-\gl) \frac{1}{\sqrt{\im(M^A(i-\gl))}} \otimes
I_{\gotT}\right) \;\widehat E_T(d\gl), \;\; z \in \dC_\pm,
  \ee
and
  \be\la{3.11}
\begin{split}
\wt M^S(z) &= \int_{\gD} (L^A(z-\gl,i-\gl)\otimes I_{\gotT}) \widehat E_T(d\gl) \\
&=
\int_{\gD} \widehat E_T(d\gl) (L^A(z-\gl,i-\gl) \otimes I_{\gotT}), \qquad z \in
\dC_\pm,
\end{split}
  \ee
where
\be\la{eq:3.14} L^A (z,\zeta) := \frac{1}{\sqrt{\im(M^A(\zeta))}}(M^A(z)
-\re(M^A(\zeta)))\frac{1}{\sqrt{\im(M^A(\zeta))}}
   \ee
for $z \in \dC_\pm$, $\zeta \in \dC_+$.

\item[\rm (iii)]

If the Weyl function $M^A(\cdot)$  is of scalar type, $M^A(\cdot) =
m^A(\cdot)I_{\cH^A}$,  $m^A(\cdot)\in R[\dC]$,
then
   \be\la{3.13}
\wt M^S(z) = I_{\cH^A} \otimes \frac{m^A(z - T) -\re(m^A(i -T))}{\im(m^A(i
- T))}, \qquad z \in \dC_{\pm}.
  \ee
In particular, the latter happen  whenever  $n_\pm(A) = 1$.
\end{proposition}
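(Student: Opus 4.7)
The plan is to derive Proposition \ref{prop:3.2} by combining the regularization formulas of Lemma \ref{II.5} with the explicit spectral-integral representations of $\gga^S(\cdot)$ and $M^S(\cdot)$ from Theorem \ref{th:3.1}, and then using the functional calculus for operator-spectral integrals developed in Section \ref{App.I}. The starting point is the observation that $\Pi_S = \Pi_A \widehat\otimes I_\gotT$ is already a boundary triplet for $S^*$ with Weyl function $M^S(z) = \int_\gD (M^A(z-\gl)\otimes I_\gotT)\,\widehat E_T(d\gl)$, so Lemma \ref{II.5} immediately produces a normalized triplet $\wt\Pi_S$ with $\wt S_0 = S_0$, $\wt\gG^S_0 = R_S\gG^S_0$, $\wt\gG^S_1 = R_S^{-1}(\gG^S_1 - Q_S\gG^S_0)$, where $R_S := \sqrt{\im M^S(i)}$, $Q_S := \re M^S(i)$, and with $\wt\gga^S(z) = \gga^S(z)R_S^{-1}$, $\wt M^S(z) = R_S^{-1}(M^S(z)-Q_S)R_S^{-1}$. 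Everything else in the statement is obtained by rewriting these objects as single spectral integrals against $\widehat E_T(\cdot)$.

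First I would verify the $F$-admissibility hypotheses. Since $\widehat E_T(\gd) = I_{\cH^A}\otimes E_T(\gd)$ and all integrands have the form $X(\gl)\otimes I_\gotT$, the commutation relation \eqref{commut-relation} holds trivially. Because $T$ is bounded, $\gD$ is compact, and the bounds \eqref{joh4} guarantee that on $\gD$ the operator-valued function $\gl\mapsto \im(M^A(i-\gl))$ is bounded below by a positive constant multiple of $\im M^A(i)$; hence $1/\sqrt{\im(M^A(i-\gl))}$ is a continuous (indeed holomorphic) bounded function of $\gl$ on $\gD$ whose value at each $\gl$ is a bounded invertible positive operator on $\cH^A$. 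Next I would split $M^S(i)$ into real and imaginary parts as $\re M^S(i) = \int_\gD (\re M^A(i-\gl)\otimes I_\gotT)\widehat E_T(d\gl)$ and similarly for the imaginary part, and then apply Proposition \ref{prop:A.-1}(iii) (with $\varphi(x)=\sqrt x$ on a suitable interval, or Corollary \ref{Cor_Int_with_Lip_fun-n}) to obtain
\[
R_S = \int_\gD \bigl(\sqrt{\im M^A(i-\gl)}\otimes I_\gotT\bigr)\widehat E_T(d\gl), \quad R_S^{-1} = \int_\gD \bigl(\tfrac{1}{\sqrt{\im M^A(i-\gl)}}\otimes I_\gotT\bigr)\widehat E_T(d\gl).
\]
Substituting these expressions (and the analogous one for $Q_S$) into the formulas of Lemma \ref{II.5} yields the identities \eqref{eq:3.10}. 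Using Proposition \ref{prop:A.-2} to collapse the resulting product of two spectral integrals into a single one then produces \eqref{3.10} for $\wt\gga^S(z)$ and, after two applications, \eqref{3.11} for $\wt M^S(z)$ with the integrand \eqref{eq:3.14}.

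For part (iii), if $M^A(z) = m^A(z)I_{\cH^A}$ then $\re M^A(i-\gl)$ and $\im M^A(i-\gl)$ are scalar multiples of $I_{\cH^A}$, so the integrand in \eqref{3.11} becomes $\frac{m^A(z-\gl) - \re m^A(i-\gl)}{\im m^A(i-\gl)}\,I_{\cH^A}\otimes I_\gotT$. Pulling the scalar factor through and using $I_{\cH^A}\otimes \int_\gD \varphi(\gl)E_T(d\gl) = I_{\cH^A}\otimes \varphi(T)$ (scalar functional calculus) delivers \eqref{3.13}. The main technical obstacle is the rigorous justification of the functional-calculus identities for $R_S$, $R_S^{-1}$, and for the combined product in $\wt M^S(z)$; this is precisely the content of Propositions \ref{prop:A.-1} and \ref{prop:A.-2}, and their applicability is secured by the boundedness of $\gD$ together with the two-sided bounds \eqref{joh4} on $\im M^A(i-\gl)$. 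No new analytic input is needed beyond these, so the proof is essentially a careful assembly of Lemma \ref{II.5}, Theorem \ref{th:3.1}, and the spectral-integral calculus of Section \ref{App.I}.
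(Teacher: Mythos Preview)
Your proposal is correct and follows essentially the same route as the paper: start from Theorem \ref{th:3.1}, apply Lemma \ref{II.5} to obtain the normalized triplet, use Proposition \ref{prop:A.-1}(iii) with $\varphi(x)=\sqrt{x}$ (resp.\ $1/\sqrt{x}$) to express $R_S$ and $R_S^{-1}$ as spectral integrals, and then invoke Proposition \ref{prop:A.-2} to collapse the products. One small quibble: the map $\gl\mapsto \im(M^A(i-\gl))$ is real-analytic (as the imaginary part of a holomorphic function restricted to a real line) but not itself holomorphic; what matters, and what you correctly identify, is Lipschitz continuity on the compact interval $\gD$ together with the two-sided positivity bounds, which is exactly how the paper justifies admissibility and the existence of the integrals.
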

  \begin{proof}
(i) By Theorem \ref{th:3.1}, $M^S(z) = \int_{\gD} M^A(z-\gl) \otimes I_{\cT} \widehat
E_T(d\gl)$, and hence for each $z\in \dC_+$
   \begin{equation}\label{Im_and_Re_Weyl_F}
   \begin{split}
\im(M^S(z)) &= \int_{\gD} \left(\im(M^A(z-\gl)) \otimes I_{\cT}\right) \widehat
E_T(d\gl)\quad \text{and}\\
\re(M^S(z)) &= \int_{\gD} \left(\re(M^A(z-\gl))\otimes
I_{\cT}\right) \widehat E_T(d\gl).
\end{split}
   \end{equation}
First we note that both integrals in \eqref{Im_and_Re_Weyl_F}  exist since the
operator-valued functions $\im(M^A(z-\cdot))$ and  $\re(M^A(z-\cdot)$ are Lipschitz (see
\cite{AdaMen96}).  Moreover, since the spectral measure  $\widehat E_T = I_\gotH \otimes
E_T$ commutes with $M^A(z-\gl) \otimes I_{\cT}$, both functions $\im(M^A(i -
\cdot))\otimes I_{\cT}$ and $\re(M^A(i-\cdot))\otimes I_{\cT}$ are $\widehat
E_T$-admissible. Noting that  $M^A(\cdot)$  is holomorphic  on $\dC_{+}$ and  $0\in
\rho(\im M(z))$ for $z\in \dC_{+}$, one easily concludes  that the operator-valued
functions $\im(M^A(i - \cdot))\otimes I_{\cT}$,  $\re(M^A(i-\cdot))\otimes I_{\cT}$, and
$(\im(M^A(i-\cdot))^{-1}\otimes I_{\cT}$ are  bounded on  the compact set $\gD$ and with
some constants $c_1, c_2 >0$ the following estimates hold
\bed
\begin{split}
&0 < c_1\le  \im(M^A(i - \gl))\otimes I_{\cT}  \le c_2\quad \text{and}\\
&c_2^{-1} \le
(\im(M^A(i- \gl))^{-1}\otimes I_{\cT}\le c_1^{-1}, \qquad \gl \in \gD.
\end{split}
\eed
Since  the function $\varphi(\cdot)=\sqrt{\cdot}$ is continuous on  $\dR_+$, then  in
accordance with  Proposition \ref{prop:A.-1}(iii) the compositions
$({\im(M^A(i-\gl)}))^{1/2} \otimes I_{\cT}$ and  $({\im(M^A(i-\gl)}))^{-1/2} \otimes
I_{\cT}$  are  $\widehat E_T$-admissible and
  \be\la{joh1}
  \begin{split}
R &:= \sqrt{\im(M^S(i))} = \int_{\gD} \left(\sqrt{\im(M^A(i-\gl))} \otimes I_{\gotT}\right) \widehat E_T(d\gl),  \\
R^{-1} &= \frac{1}{\sqrt{\im(M^S(i))}} =  \int_{\gD}
\left(\frac{1}{\sqrt{\im(M^A(i-\gl))}} \otimes I_{\gotT}\right) \widehat E_T(d\gl).
  \end{split}
  \ee
Combining  the second formula in  \eqref{joh1}  with  formula \eqref{weyl} and applying
Proposition \ref{prop:A.-2} one arrives at
     \be\label{4.30_product_im_M_and_re-M}
     \begin{split}
R^{-1}Q &:= R^{-1}{\re(M^S(i))}\\
& =  \int_{\gD}
\left(\frac{1}{\sqrt{\im(M^A(i-\gl))}}{\re(M^A(i-\gl))} \otimes I_{\gotT}\right)
\widehat E_T(d\gl).
\end{split}
  \ee

Now it follows from Lemma \ref{II.5}  (see formula \eqref{eq:2.10}) that  a triplet $\wt
\Pi_S = \{\cH_S,\wt \gG^S_0,\wt \gG^S_1\}$, where
\bed
\wt \gG^S_0 = \sqrt{\im(M^S(i))}\gG^S_0 \quad \text{and}\quad  \wt \gG^S_1 =
\frac{1}{\sqrt{\im(M^S(i))}}(\gG^S_1 - \re(M^S(i))\gG^S_0),
\eed
is a (normalized) boundary triplet for $S^*$. Combining these formulas with formulas
\eqref{joh1}  yields  \eqref{eq:3.10}.

(ii)
Combining \eqref{eq:3.7} with  the second identity in \eqref{joh1} and applying
Proposition \ref{prop:A.-2}  we arrive at
\bed
\begin{split}
&\wt \gga^S(z) = \gga^S(z) R^{-1}\\
&= \int_{\gD} \left(\gga^A(z - \gl) \otimes I_{\gotT}\right)\, \widehat E_T(d\gl) \cdot
\int_{\gD}
\left(\frac{1}{\sqrt{\im(M^A(i - \mu))}} \otimes I_{\gotT}\right) \widehat E_T(d\mu) \\
&= \int_{\gD} \left(\gga^A(z - \gl) \frac{1}{\sqrt{\im(M^A(i-\gl))}} \otimes
I_{\gotT}\right)\, \widehat E_T(d\gl), \qquad z \in \dC_\pm,
\end{split}
\eed
which proves \eqref{3.10}.

Similarly, combining formula  \eqref{weyl} with the third formula in \eqref{joh1} and
applying  Proposition \ref{prop:A.-2} implies
  \bed
\begin{split}
&\frac{1}{\sqrt{\im(M^S(i))}} \left( M^S(z) - \re(M^S(i)) \right) \frac{1}{\sqrt{\im(M^S(i))}}\\
&= \int_{\gD} \frac{1}{\sqrt{\im(M^A(i-\gl))}} \times\\
&\qquad\qquad\times\left( M^A(z-\gl) - \re(M^A(i-\gl)
\right) \frac{1}{\sqrt{\im(M^A(i-\gl))}} \widehat E_T(d\gl),
  \end{split}
  \eed
$z \in \dC_{\pm}$. This proves  \eqref{3.11}.  Moreover, inserting in \eqref{3.11} $z=i$ one easily gets
the equality $\wt M^S(i) = i(I_{\cH^A} \otimes I_{\gotT}) = iI_{\cH^S}$  meaning  that
the triplet $\wt \Pi_S$ is normalized.

(iii) Representation \eqref{3.13} is  immediate from \eqref{3.11}.
\end{proof}

\subsection{Unbounded case}

Let $A$ be a closed densely defined symmetric operator with equal deficiency indices in
$\gotH$ and let $T$ be an unbounded self-adjoint operator in  $\gotT$. First we
introduce an operator $S' := A \odot I_{\gotT} + I_{\gotH} \odot T$ by setting (cf.
\cite[Chapter 7.5.2]{Schmuedgen2012})
\bed
\begin{split}
S'f &:=  A \odot I_{\gotT}f  + I_{\gotH} \odot Tf :=
 \sum^l_{k=1}(Ag_k \otimes h_k) +  \sum^l_{k=1}(g_k \otimes  Th_k), \\
 f &= \sum^l_{k=1} g_k \otimes h_k \in
\dom(S'),\\
 \dom(S') :&:= \left\{f = \sum\limits_{k = 1}^l g_k \otimes h_k: g_k
\in \dom(A), \;\;h_k \in \dom(T),\;\;l\in \dN \right\}.
\end{split}
\eed
Clearly, $S'$
is a densely defined symmetric operator.
Further, we define  the operator $S:= A \otimes I_{\gotT} + I_{\gotH_A} \otimes T$ on
$\gotK := \gotH \otimes \gotT$ as the closure of $S'$, i.e.
   \bed
S := \overline {S'} := \overline{A \odot I_{\gotT} + I_{\gotH} \odot T}.
  \eed

Denote by $\gotH_{+}(A)$  the Hilbert space  obtained by equipping the domain $\dom(A)$
with the graph norm. Let $J_A: \gotH_+(A) \longrightarrow \gotH$ be the embedding
operator. Then
$\dom(A \otimes I_T) = (J_A \otimes I_{\gotT})(\gotH_{+}(A) \otimes \gotT).$
By  \cite[Proposition 7.26]{Schmuedgen2012},   $(A \otimes I_T)^{\ast} = A^{\ast}
\otimes I_T$  and
\bed \dom(A^{\ast} \otimes I_T) = (J_{A^*} \otimes I_{\gotT})(\gotH_{+}(A^{\ast})
\otimes \gotT). \eed
The operator $I_{\gotH} \otimes T = \overline{I_{\gotH} \odot T}$ is unbounded and
self-adjoint. Moreover.  one has
  \bed
S = \overline{A \otimes I_{\gotT} + I_{\gotH} \otimes T} \quad \text{and}\quad \dom(S)
\supseteq \cD := \dom(A \otimes I_{\gotT}) \cap \dom(I_{\gotH} \otimes T).
   \eed
Clearly, $\cD$ is a core for $S$, i.e.  $S = \overline{S\upharpoonright\cD}$.

Further,  setting  $T_n :=  E_T((n,n+1])T$ and $\gotT_n := E_T((n,n+1])\gotT$, $n \in
\dZ,$ one arrives at the  orthogonal decomposition
\bed
T = \bigoplus_{n\in\dZ}T_n, \quad \gotT := \bigoplus_{n\in\dZ}\gotT_n,
\eed
where $T_n = T_n^*\in \cB(\gotT_n)$.
 Let $\gotK_n := \gotH \otimes \gotT_n$, $n \in \dZ$.
Clearly, $\gotK := \gotH \otimes \gotT  = \bigoplus_{n \in \dZ} \gotK_n$. We set $S_n := A
\otimes I_{\gotT_n} + I_{\gotH} \otimes T_n$, $n \in \dZ$. For each $n \in \dZ$ the
operator $S_n$ is a  well-defined closed symmetric operator  in $\gotH_n$.

\bl\la{III.3}
Let $A$ and  $T$  be as above.
Let $T = \bigoplus_{n\in\dZ}T_n$ be an orthogonal decomposition of $T$ where $T_n =
T_n^*\in \cB(\gotT_n)$. Then
      \bed%\la{eq:3.32}
S = \bigoplus_{n\in \dZ} S_n, \qquad   S_n := A \otimes I_{\gotT} + I_{\gotH_n} \otimes
T_n.
  \eed
In particular, if $T$ has a pure point spectrum, then
$S = \bigoplus_{n \in \dZ} S_n$  where   $S_n = A \otimes I_{\gotT_n} + \lambda_n
I_{\gotH_n},$    $\{\gl_n\}_{n\in\dZ}$ is  the sequence  of eigenvalues  of $T$,  and
$\gotH_n := \gotH \otimes \gotT_n$ with  ${\gotT_n} = E_T( \{ \lambda_n \} ) \gotT$.
\el
\begin{proof}
The proof is obvious.
\end{proof}
In general, for any self-adjoint extension $S_0$ of $S$ there is a boundary triplet
$\Pi_S = \{\cH^S,\wt \gG^S_0,\wt \gG^S_1\}$ such that $S_0 =
S^*\upharpoonright\ker(\gG^S_0)$. Moreover, in accordance with  Lemma \ref{II.5} it is
always possible starting with a $\Pi_S$ to define a normalized boundary triplet $\wt
\Pi_S$. In particular, we can find a boundary triplet $\Pi_S$ for $S^*$, $S = A \otimes
I_\gotT + I_\gotH \otimes T$, such that $S_0 := A_0 \otimes I_\gotT + I_\gotH \otimes
T$. However, in applications we need a special boundary triplet feeling a tensor
structure of the operators $S$ and $S^*$ and leading to simple forms of the
corresponding Weyl function  and $\gamma$-field.

Therefore in what follows  we choose another strategy. Let $\Pi_A$ be a boundary triplet
for  $A^*$ with the corresponding $\gamma$-field $\gga^A(\cdot)$ and  Weyl function
$M^A(\cdot)$. Starting with  this boundary triplet for  $A^*$ we construct a normalized
boundary triplet $\Pi_S = \{\cH^S,\gG^S_0,\gG^S_1\}$ for $S^*$ such that $S_0 =
S^*\upharpoonright \ker(\gG^S_0)$ and the corresponding  $\gamma$-field $\gga^S(\cdot)$
and  Weyl function $M^S(\cdot)$ can be  explicitly computed by means of $\gga^A(\cdot)$
and $M^A(\cdot)$ (cf. the proof of Theorem \ref{th:2.5}).
\bl\la{lem:3.4}
Let $A$ be a densely defined closed symmetric operator in $\gotH$. Let also $\Pi_A =
\{\cH^A,\gG^A_0,\gG^A_1\}$ be a boundary triplet for $A^*$ and let $M^A(\cdot)$ and
$\gamma(\cdot)$ be the corresponding Weyl function and $\gamma$-field, respectively.
Further, let $T$ be a  self-adjoint operator on $\gotT$ with spectral measure
$E_T(\cdot)$ and let $\widehat E_T(\cdot) := I_{{\mathcal H}^A} \otimes E_T(\cdot)$. Then
the following improper spectral integrals
\begin{align}
\begin{split}
G_0 f & := \int_\dR \widehat E_T(d\gl) \left(\sqrt{\im(M^A(i-\gl))} \otimes
I_{\gotT}\right)f
   \la{eq:3.47} \\
      & = \int_\dR \left(\sqrt{\im(M^A(i-\gl))} \otimes I_{\gotT}\right)  \widehat E_T(d\gl) f
                \end{split}\\
            \begin{split}
G_1 f & := \int_\dR \widehat E_T(d\gl) \left(\frac{1}{\sqrt{\im(M^A(i-\gl))}} \otimes I_{\gotT}\right) f
\la{eq:3.48} \\
      & = \int_\dR \left(\frac{1}{\sqrt{\im(M^A(i-\gl))}} \otimes I_{\gotT}\right) \widehat E_T(d\gl) f,
            \end{split}\\
            \begin{split}
G_2 f & := \int_\dR \widehat E_T(d\gl)
\left(\frac{1}{\sqrt{\im(M^A(i-\gl))}}\re(M^A(i-\gl)) \otimes I_\gotT\right)f
\la{eq:3.49}\\
      & = \int_\dR  \left(\frac{1}{\sqrt{\im(M^A(i-\gl))}}\re(M^A(i-\gl)) \otimes I_\gotT \right) \widehat E_T(d\gl) f
            \end{split}
\end{align}
exist for each  $f \in \dom(I_{\cH^A} \otimes T)$. Moreover, the following improper
spectral integrals
  \be\la{eq:3.41}
G(z)f  := \int_\dR \left(\gga^A(z-\gl)\ \frac{1}{\sqrt{\im(M^A(i-\gl))}} \otimes
I_{\gotT}\right) \widehat E_T(d\gl)f,
  \ee
and
  \be\la{eq:3.43}
\begin{split}
M(z)f  &:= \int_\dR \left(L^A(z-\gl,i-\gl)\otimes I_{\gotT}\right)\widehat E_T(d\gl)f\\
      &= \int_\dR \,\widehat E_T(d\gl) \left(L^A(z-\gl,i-\gl)\otimes I_{\gotT}\right)f, \qquad z \in \dC_\pm
\end{split}
\ee
exist for every  $f \in \cH^A \otimes \gotT$, where $L^A(z,\zeta)$, $z \in \dC_\pm$,
$\zeta \in \dC_+$, is given by \eqref{eq:3.14}.
  \el
\begin{proof}
We divide the proof in several steps.
(i)  Let $f\in \dom(I_{\cH^A} \otimes T)$. Then
  \bed
\int_{\dR} \lambda^2 d \parallel \widehat E_T(\gl) f \parallel^2 < \infty.
  \eed
Note that in accordance with \eqref{joh4},
\bed
\begin{split}
&\parallel (\im(M^A(i-\gl)))^{1/2} \otimes I_{\cT}
\parallel =  O(\vert \gl \vert) \quad \text{ and}\\
&\parallel (\im(M^A(i-\gl)))^{-1/2}
\otimes I_{\cT} \parallel = O(\vert \gl \vert) \quad\text{as}\quad \gl \to\infty.
\end{split}
\eed
Therefore the convergence of the integrals in \eqref{eq:3.47} and \eqref{eq:3.48} is
immediate  from Proposition \ref{prop:A.-3} with $\alpha =1$.

(ii)  To prove  \eqref{eq:3.49} it  suffices  to show that
  \begin{equation}\label{4.22_est-te_Im_times_Re}
\parallel  (\im(M^A(i-\gl)))^{-1/2} \re(M^A(i-\gl))\parallel = O(\vert \gl
\vert) \quad\text{as}\quad \gl \to\infty.
  \end{equation}
Noting that
  \bed
  \begin{split}
(\im&(M^A(i-\gl))^{-1/2} M^A(i-\gl) = \\
&(\im(M^A(i-\gl))^{-1/2} \re(M^A(i-\gl)) + i
(\im(M^A(i-\gl))^{1/2}
\end{split}
  \eed
and taking estimate \eqref{joh4}  into account one concludes  that  the required
estimate \eqref{4.22_est-te_Im_times_Re} is equivalent to the following one
 \begin{equation}\label{4.23_New_estimate}
\parallel(\im(M^A(i-\gl))^{-1/2} M^A(i-\gl) \parallel  = O(\vert \gl \vert)
\qquad\text{as}\qquad \gl \to\infty.
  \end{equation}
Further, in accordance with  \eqref{mlambda}
   \bed
   \begin{split}
\im(M^A(i-\gl)) &= - \im(M^A(-i-\gl)) = \gga^A(-i-\gl)^*\gga^A(-i-\gl)\\
& =
\gga^A(i-\gl)^*\gga^A(i-\gl), \qquad \gl \in \dR.
\end{split}
   \eed
Hence  there  exists  a family of isometries $V(\gl \pm i)$ mapping $\cH^A$ onto
$\cN_A(\pm i -\gl) = \ker(A^* + \gl  \mp i)$ and such that
  \be\la{eq:3.80}
V(\gl \pm i) (\im(M^A(i-\gl))^{1/2} = \gga^A(\pm i-\gl), \qquad \gl \in \dR.
  \ee
Using \eqref{mlambda}, we get
   \bed%\la{eq:3.72a}
   \begin{split}
M^A(i-\gl) - M^A(i)^* &= (2i -\gl)\gga^A(-i-\gl)^* \gga^A(-i) \\
&= (2i -\gl)
(\im(M^A(i-\gl))^{1/2} V(\gl)^*\gga^A(-i).
\end{split}
  \eed
Thus
   \bed%\la{eq:3.72New}
   \begin{split}
(\im&(M^A(i-\gl))^{-1/2} M^A(i-\gl)\\
& = (2i -\gl)V(\gl)^*\gga^A(-i) +
(\im(M^A(i-\gl))^{-1/2} M^A(-i).
\end{split}
  \eed
Combining this relation with estimate \eqref{joh4} yields \eqref{4.23_New_estimate} as
well as \ \
 $$
\parallel (\im(M^A(i-\gl)))^{-1/2} \re(M^A(i-\gl)) \parallel = O(\vert \gl
\vert).
$$
To prove \eqref{eq:3.49}  it remains  to apply Proposition \ref{prop:A.-3} with $\alpha
=1$.

(iii)  To  prove the convergence of the integral \eqref{eq:3.41} it suffices to show
that
   \be\label{4.27_est-te_gama-weyl(-1/2)}
\parallel \gga^A(z-\gl)(\im(M^A(i-\gl))^{-1/2} \parallel
\le \varkappa(z),  \qquad \gl \in \dR.
   \ee
with some positive constant $\varkappa(z)>0$. In accordance with  \eqref{2.5}
  \bed
\gga^A(z-\gl) = (A_0 + \gl - i)(A_0 + \gl - z)^{-1}\gga^A(i-\gl), \qquad z \in \dC_\pm,
\quad \gl \in \dR.
  \eed
Moreover, it follows from \eqref{eq:3.80} that
  \bed%\la{eq:3.80_B}
(\im(M^A(i-\gl))^{-1/2} = (\gga^A(i-\gl))^{-1} V(i+\gl), \qquad \gl \in \dR.
  \eed
Combining these relations yields
  \be\label{4.29_Identity_for_gama-Weyl}
\gga^A(z-\gl)(\im(M^A(i-\gl))^{-1/2} = (A_0 + \gl - i)(A_0 + \gl - z)^{-1}V(i +\gl),
  \ee
$z \in \dC_\pm$, $\gl \in \dR$.
On the other hand
\bed
\parallel(A_0 + \gl - i)(A_0 + \gl - z)^{-1} \parallel  =  \parallel I+ (z - i)(A_0 + \gl - z)^{-1}\parallel
 \le 1 + \frac{|z - i|}{|\im z|} =: \varkappa(z).
\eed
Combining this estimate  with identity  \eqref{4.29_Identity_for_gama-Weyl}  we arrive
at the estimate \eqref{4.27_est-te_gama-weyl(-1/2)}. Proposition \ref{prop:A.-3} with
$\alpha =1$ completes the proof.

(iv)  To prove the existence of the integral \eqref{eq:3.43} it suffices to show that
for each fixed $z\in \dC_{\pm}$
  \be\label{estimate_for_LA}
\parallel L^A(z-\gl,i-\gl) \parallel = O(1)
\qquad\text{as}\qquad \gl \to\infty,
   \ee
and apply Proposition \ref{prop:A.-3}.   It follows from \eqref{eq:3.14} and identity
\eqref{mlambda}  that
   \bed
   \begin{split}
L^A&(z-\gl,i-\gl)\\
=&\frac{1}{\sqrt{\im(M^A(i-\gl))}}(M^A(z-\gl) -
M^A(i-\gl))\frac{1}{\sqrt{\im(M^A(i-\gl))}} + iI_{\cH} \\
 =& (z- i) \frac{1}{\sqrt{\im(M^A(i-\gl))}} \gga^A(-i-\gl)^* \gga^A(z
-\gl)\frac{1}{\sqrt{\im(M^A(i-\gl))}} + iI_{\cH},
\end{split}
   \eed
$z \in \dC_\pm$, $\gl \in \dR$. Inserting in this identity instead of $\gga^A(-i-\gl)^*$  its expression from
\eqref{eq:3.80} one gets
   \bed%\la{eq:3.14_New}
   \begin{split}
L^A&(z-\gl,i-\gl)\\
&= (z- i) V(\gl-i)^* \gga^A(z -\gl)\frac{1}{\sqrt{\im(M^A(i-\gl))}} + iI_{\cH}.
\end{split}
   \eed
Finally, combining this identity with  \eqref{4.27_est-te_gama-weyl(-1/2)} implies
\eqref{estimate_for_LA}.
  \end{proof}
  \begin{remark}
    {\em
Combining estimates \eqref{4.22_est-te_Im_times_Re} and   \eqref{joh4}  we obtain
  \bed%\label{4.22_est-te_Re-M}
  \begin{split}
\parallel&\re(M^A(i-\gl))\parallel le \parallel  (\im(M^A(i-\gl)))^{1/2}\parallel \times\\
&\times \parallel
(\im(M^A(i-\gl)))^{-1/2} \re(M^A(i-\gl))\parallel = O(\vert \gl \vert^2).
\end{split}
  \eed
{as} $\gl \to\infty.$   Simple examples show that even for  a scalar  Nevanlinna
function $f\in R[\dC]$ the function  $\parallel(\im(f(i-\gl)))^{-1/2}
\re(f(i-\gl))\parallel$ is not necessarily bounded.
}
\end{remark}

\bt\la{th:2.5}
Let $\Pi_A = \{\cH^A,\gG^A_0,\gG^A_1\}$ be a boundary triplet for $A^{*}$, let
$M^A(\cdot)$ and $\gamma^A(\cdot)$  be the corresponding  Weyl function and
$\gamma$-field, respectively.   Let also $T = T^*\in \cC(\gotT)\setminus \cB(\gotT)$  and
$S := A\otimes I_\gotT + I_{\gotH}\otimes T$. Then:

\item[\;\;\rm (i)]
There exists  a normalized boundary triplet $\wt \Pi_S = \{\wt \cH^S,\wt \gG^S_0,\wt
\gG^S_1\}$ for $S^*$ such that $\wt \cH^S := \cH^A \otimes \gotT$ and $S_0 :=
S^*\upharpoonright\ker(\wt \gG^S_0) = A_0 \otimes I_{\gotT} + I_{\gotH} \otimes T$, and
for any  $f \in \gotD := \dom(S^*) \cap \dom(I_{\gotH} \otimes T) (\subseteq \dom(S^*))$
\begin{align}\la{eq:3.57}
\wt \gG^S_0&f:=\left(\int_\dR \widehat E_T(d\gl) \sqrt{\im(M^A(i-\gl))} \otimes I_{\gotT}\right)\cdot(\gG^A_0
\wh \otimes I_\gotT)f,\nonumber\\
\wt \gG^S_1&f := \left(\int_\dR \widehat E_T(d\gl) \frac{1}{\sqrt{\im(M^A(i-\gl))}}
\otimes I_{\gotT}\right)\cdot  (\gG^A_1 \wh\otimes I_\gotT) f \\
  & - \left(\int_{\dR} \widehat E_T(d\gl)
\frac{1}{\sqrt{\im(M^A(i-\gl))}} \re(M^A(i-\gl)) \otimes I_\gotT\right) \cdot (\gG^A_0 \wh\otimes I_\gotT)f.  \nonumber
\end{align}

\item[\;\;\rm (ii)]
The  $\gamma$-field $\wt \gga^S(\cdot)$ and Weyl function $\wt M^S(\cdot)$ corresponding
to the triplet  $\wt \Pi_S$ are given by
   \be\la{eq:3.65}
\wt \gga^S(z) = G(z) \qquad \mbox{and} \qquad \wt M^S(z) = M(z), \quad z
\in \dC_\pm,
  \ee
where $G(\cdot)$ and $M(\cdot)$ are defined by \eqref{eq:3.41} and \eqref{eq:3.43},
respectively.

\item[\;\;\rm (iii)]
If $M^A(\cdot)$ is of scalar type, i.e. $M^A(\cdot) = m^A(\cdot)I_{\cH^A}$, then
representation \eqref{3.13} remains true.
\et
\begin{proof}
(i)
Clearly,  $f \in \dom(A^* \otimes I_\gotT)$. Let $\gD_n := [n,n+1)$, $n \in \dZ$. We set
$\gotT_n := E_T(\gD_n)\gotT$ and $T_n = TE_T(\gD_n)$, $n \in \dZ$. Notice that $\gotT =
\bigoplus_{n\in\dZ}\gotT_n$ and $T = \bigoplus_{n\in\dZ}T_n$. Let also  $R_{S_n} :=
\sqrt{\im(M^{S_n}(i))}$ and  $Q_{S_n} := \re(M^{S_n}(i))$, $n \in \dZ$.  Then, by
Proposition \ref{prop:3.2},   a triplet $\wt \Pi_{S_n} =
\{\cH^{S_n},\wt\gG^{S_n}_0,\wt\gG^{S_n}_1\}$ with
  \bed%\label{4.50_B_trip_for_S-n}
  \begin{split}
\cH^{S_n} &:= \cH^A \otimes \gotT_n,\quad   \wt \gG^{S_n}_0  = R_{S_n}(\gG^A_0 \wh\otimes
I_{\gotT_n}), \quad \text{and}\\
& \wt \gG^{S_n}_1 =  R^{-1}_{S_n}\left(\gG^{S_n}_1 -
Q_{S_n}\gG^{S_n}_0\right) = R^{-1}_{S_n}\gG^{S_n}_1 - R^{-1}_{S_n} Q_{S_n}\gG^{S_n}_0,
\end{split}
   \eed
is  a boundary triplet for  $S^*_n$ for each   $n \in \dZ$. In turn,   Theorem
\ref{th:2.6}  ensures  that the direct sum $\wt\Pi_S := \bigoplus_{n\in\dZ}\wt \Pi_{S_n}
 = \{\wt \cH^S,\wt \gG^S_0,\wt \gG^S_1\}$ of boundary triplets  is an ordinary (normalized)
boundary triplet for $S^* = \bigoplus_{n\in \dZ} S^*_n$.

Setting $R := \bigoplus_n R_{S_n}$, applying  formula \eqref{joh1}  and noting that, by
Lemma \ref{lem:3.4}, the improper spectral integral \eqref{eq:3.47} exists one gets that
for any $h = \mathlarger{\mathlarger{\mathlarger{\oplus}}}_n h_n \in \dom(I_{\cH^A} \otimes T) = \oplus_n \dom(I_{\cH^A}
\otimes T_n)$
\be\label{4.51_for-la_for_R}
\begin{split}
Rh &=  \bigoplus_{n\in\dZ} R_{S_n}h_n = \bigoplus_{n\in\dZ} \sqrt{\im(M^{S_n}(i))}h_n  \\
&=
\bigoplus_{n\in\dZ} \int_{[n,n+1)}
\widehat E_{T_n}(\gl)\left(\sqrt{\im(M^A(i-\gl)} \otimes I_{\gotT_n}\right)h_n    \\
&=  \slim_{\begin{subarray}{l} p\to +\infty\\ q\to -\infty\end{subarray}}\int_{[q,p)}
\widehat E_{T}(\gl)\left(\sqrt{\im(M^A(i-\gl)} \otimes I_{\gotT}\right)h \\
&=  \int_{\dR}
\widehat E_{T}(\gl)\left(\sqrt{\im(M^A(i-\gl)} \otimes I_{\gotT}\right) = G_0 h.
\end{split}
\ee
Note that applying formula \eqref{joh1}  we have replaced the integral $\int_{[n,n+1]}$
by $\int_{[n,n+1)}$. The latter is possible  since $n+1 \not\in\gs_p(T_n)$ for each
$n \in \dZ$.

Next, similarly to \eqref{4.51_for-la_for_R} and using the convergence of the improper
spectral integral \eqref{eq:3.48} one gets from \eqref{joh1}
  \begin{equation}\label{4.53_formula_for_R-1}
  \begin{split}
R^{-1}h &= \bigoplus_{n\in\dZ} R_{S_n}^{-1}h_n = \bigoplus_{n\in\dZ} \left(\sqrt{\im(M^{S_n}(i))}
\right)^{-1}h_n \\
&= \int_\dR \widehat E_T(d\gl) \left(\frac{1}{\sqrt{\im(M^A(i-\gl))}}
\otimes I_{\gotT}\right) h = G_1h.
\end{split}
   \end{equation}
Further,  setting $Q := \bigoplus_n Q_{S_n} := \bigoplus_n {\re(M^{S_n}(i))}$, applying
formula   \eqref{4.30_product_im_M_and_re-M} with $\Delta_n$ in place of $\Delta$, and
noting that by Lemma \ref{lem:3.4} the improper spectral integral  \eqref{eq:3.49}
exists, we derive
\begin{align}\label{4.54_R-1_Q}
R^{-1}&Q h  = \bigoplus_{n\in\dZ} R^{-1}_{S_n} {\re(M^{S_n}(i))}h_n \\
 &=
\bigoplus_{n\in\dZ} \int_{[n,n+1)}
\widehat E_{T_n}(\gl)  \left(\frac{1}{\sqrt{\im(M^A(i-\gl))}}{\re(M^A(i-\gl))} \otimes I_{\gotT_n})\right)h_n    \nonumber\\
& =    \int_{\dR} \widehat E_{T}(\gl)
\left(\frac{1}{\sqrt{\im(M^A(i-\gl))}}{\re(M^A(i-\gl))} \otimes I_{\gotT})\right)h =
G_2h.\nonumber
\end{align}

Further,  let $f = \{f_n\}_{n\in\dZ} \in \gotD \subseteq \dom(A^* \otimes I_\gotT)$,
$f_n \in \gotH_A \otimes \gotT_n$,  $n \in \dZ$.  Note that $f \in \dom(\gG^A_0
\wh\otimes I_\gotT)\cap \dom(\gG^A_1 \wh\otimes I_\gotT)$  because  $f \in \dom(A^*
\otimes I_\gotT)$.  Hence
  \bed
(\gG^A_0 \wh\otimes I_\gotT) f = \bigoplus_{n\in\dZ}(\gG^A_0 \wh\otimes I_{\gotT_n}) f_n
\quad \mbox{and} \qquad (\gG^A_1 \wh\otimes I_\gotT) f = \bigoplus_{n\in\dZ}(\gG^A_1
\wh\otimes I_{\gotT_n}) f_n.
  \eed
On the other hand, by Theorem \ref{th:2.6} (see formula \eqref{2.16_reg_triplet})
\bed
\wt \gG^S_0 f = R(\gG^A_1 \wh\otimes I_\gotT) f \quad \text{and}\quad   \wt \gG^S_1 f =
R^{-1}(\gG^A_1 \wh\otimes I_\gotT) f  + R^{-1} Q (\gG^A_0 \wh\otimes I_\gotT) f,
\eed
$f \in \gotD$. Inserting in these relations   instead of $R$, $R^{-1}$, and $R^{-1}Q$ their expressions
from  \eqref{4.51_for-la_for_R}, \eqref{4.53_formula_for_R-1}, and \eqref{4.54_R-1_Q},
one arrives at formulas \eqref{eq:3.57}.

\noindent (ii) In accordance with  Proposition \ref{prop:3.2}(ii) the $\gamma$-field and
Weyl function  corresponding to the triplet $\wt \Pi_{S_n} =
\{\cH^{S_n},\wt\gG^{S_n}_0,\wt\gG^{S_n}_1\}$ are given by
\bed
\begin{split}
\wt\gga^{S_n}(z)  &=
%\int_{[n,n+1)} F_{T_n}(d\gl) \gga^A(z-\gl) \frac{1}{\sqrt{\im(M^A(i-\gl))}} \otimes I_{\gotT_n}\\
%&=
\int_{[n,n+1)} \left(\gga^A(z-\gl) \frac{1}{\sqrt{\im(M^A(i-\gl))}} \otimes
I_{\gotT_n}\right) \widehat E_{T_n}(d\gl), \quad z \in \dC_\pm,
\end{split}
\eed
and
\bed
\begin{split}
\wt M^{S_n}(z) &= \int_{[n,n+1)} \left(L^A(z-\gl,i-\gl)\otimes I_{\gotT_n}\right)
\widehat E_{T_n}(d\gl) \\
&= \int_{[n,n+1)}  \widehat E_{T_n}(d\gl) \left(L^A(z-\gl,i-\gl)
\otimes I_{\gotT_n}\right), \quad z \in \dC_\pm,
\end{split}
\eed
respectively.   Here $L^A (z,\zeta)$ is given by \eqref{eq:3.14}. Further, applying
Theorem \ref{th:2.6} (see formula \eqref{W-fun_and_gam-field_for_dir_sum}) and taking
into account formulas \eqref{eq:3.41} and \eqref{eq:3.43}, we  arrive at
\eqref{eq:3.65}.

(iii) This statement is now immediate from formula \eqref{3.13}  and representation $T =
\bigoplus_{n\in\dZ}T_n$ with $T_n\in \cB(\gotT_n)$.
\end{proof}

%\subsection{Remarks}

%
\begin{remark}
{\em
\item[\;\;\rm (i)] If $T$ is pure point, $\gs(T) = \gs_{pp}(T) = \{\gl_k\}_{k\in\dZ}$, then the boundary space $\cH^S$ admits the representation
$\cH^S = \bigoplus_{k\in\dZ}\cH_k$, where $\cH_k = \cH^A \otimes \gotT_k$ and $\gotT_k$ is the eigenspace which corresponds to $\gl_k$. One easily checks that the Weyl function
admits the representation
\bed
M^S(z) = \bigoplus_{k\in\dZ} \left(L(z-\gl_k,i-\gl_k) \otimes I_{\gotT_k}\right), \quad z \in \dC_\pm.
\eed
\item[\;\;\rm (ii)] The set  $\gotD := \dom(S^*) \cap \dom(I_{\gotH_A} \otimes T) \subseteq \dom(S^*)$ is a core for $S^*$. Equivalently this means that
$\gotD$ regarded as a subset $\wh \gotD$ of $\gotH_+(S^*)$ is dense in the Hilbert space
$\gotH_+(S^*)$. Let $J_{S^*}: \gotH_+(S^*) \longrightarrow \gotH_S = \gotH_A \otimes
\gotT$ be the embedding operator. We set $\wh \gG^S_j := \gG^S_jJ_{S^*}: \gotH_+(S^*)
\longrightarrow \cH^S$, $j \in \{0,1\}$. The operator $\wh \gG^S_j$, $j \in \{0,1\}$, is
bounded. Hence
   \bed
\wh \gG^S_j = \overline{\gG^S_jJ_{S^*}\upharpoonright\wh \gotD}, \quad j\in
\{0,1\}.
  \eed
In other words, the closure of the operator $\gG^S_j\upharpoonright\gotD$, $j \in
\{0,1\}$, with respect to the topology of $\gotH_+(S^*)$ gives $\gG^S_j$, $j \in
\{0,1\}$.
}
\end{remark}

\begin{remark}\la{rem:3.9}
{\em
The case of a scalar type Weyl function can be slightly extended.
Let us assume that there is a boundary triplet $\Pi_A = \{\cH^A,\gG^A_0,\gG^A_1\}$ of $A^*$ such that
$\cH^A = \bigoplus^{n(A)}_{k=1}\cH^A_k$, $\cH^A_k := \dC$, $n(A) := n_\pm(A)$. With respect to this decomposition we suppose that
the Weyl function $M^A(\cdot)$ is diagonal, that is, it admits the representation
\bed%\la{eq:3.67}
\begin{split}
M^A(z) &= \diag(m_1(z),m_2(z),\ldots,m_{n(A)}(z))\\
&=
\begin{pmatrix}
m^A_1(z)  & 0 & \cdots & \cdots \\
0 & m^A_2(z)  & \cdots & \cdots \\
\vdots & \vdots & \ddots & \vdots\\
\cdot & \cdot & \cdots & m^A_{n(A)}(z)
\end{pmatrix}:
\begin{matrix}
\cH^A_1\\
\oplus\\
\cH^A_2\\
\oplus\\
\vdots\\
\oplus\\
\cH^A_{n(A)}
\end{matrix}
\longrightarrow
\begin{matrix}
\cH^A_1\\
\oplus\\
\cH^A_2\\
\oplus\\
\vdots\\
\oplus\\
\cH^A_{n(A)}
\end{matrix}, \quad z \in \dC_\pm,
\end{split}
\eed
where $m_k(\cdot)$, $k =1,2,\ldots,{n(A)}$, are scalar Nevanlinna functions. If the Weyl function of a boundary triplet has this structure, then it is called of quasi scalar type.
We are going to compute the boundary triplet $\Pi_S$ as well $\gga$-field $\gga^S(\cdot)$ and Weyl function $M^S(\cdot)$ for the quasi scalar type  case. We set
\bed
\gG^A_{jk} := P^{\cH^A}_{\cH^A_k}\gGÂ_j: \dom(A^*) \longrightarrow \cH^A_k, \quad j = 0,1, \quad k = 1,2,\ldots,{n(A)}.
\eed
Obviously, we have
\bed
\gG^A_{1k}f_z = m_k(z)\gG^A_{0k}f_z, \qquad f_z \in \ker(A^*-z), \quad k = 1,2,\ldots,{n(A)}.
\eed
Let us introduce the operator $\gG^A_{jk} \wh\otimes I_\gotT : \dom(A^* \otimes I_\gotT) \longrightarrow \cH^S_k := \cH^A_k \otimes \gotT = \gotT$, $j = 0,1$, $k = 1,2,\ldots,{n(A)}$.
Notice that
\bed
\gG^A_j \wh\otimes I_\gotT =
\begin{pmatrix}
\gG^A_{j1} \wh\otimes I_\gotT\\
\gG^A_{j2} \wh\otimes I_\gotT\\
\vdots\\
\gG^A_{j{n(A)}} \wh\otimes I_\gotT
\end{pmatrix} : \dom(A^*\otimes I_\gotT) \longrightarrow
\begin{matrix}
\cH^S_1\\
\oplus\\
\cH^S_2\\
\oplus\\
\vdots\\
\oplus\\
\cH^S_{n(A)}
\end{matrix}\quad .
\eed
Notice that $\cH^S = \cH^A \otimes \gotT = \bigoplus^{n(A)}_{k=1}\cH^S_k$. Setting
$\gG^S_{jk} := P^{\cH^S}_{\cH^S_k}\gG^S_j$, $j \in \{0,1\}$, $k\in
\{1,2,\ldots,{n(A)}\}$, we get $\gG^S_j =
\transpose{(\gG^S_{j1},\gG^S_{j2},\ldots,\gG^S_{j{n(A)}})}$, $j \in \{0,1\}$. Using
\eqref{eq:3.57} we get
\be\la{eq:3.71}
\begin{split}
\gG^S_{0k} f &= \sqrt{\im(m_k(i-T))}(\gG^A_{0k} \wh{\otimes} I_\gotT)f\\
\gG^S_{1k}f & = \frac{1}{\sqrt{\im(m_k(i-T))}}\left(\gG^A_{1k} \wh{\otimes} I_\gotT - \re(m_k(i-T))(\gG^A_{0k}
\wh{\otimes} I_\gotT)\right)f,
\end{split}
\ee
$f \in \dom(A^*\otimes I_\gotT) \cap \dom(I_{\gotH_A} \otimes T)$, $k\in
\{1,2,\dots,{n(A)}\}$.

To compute the $\gamma$-field we set
\bed
\gga^A_k(z) := \gga^A(z)\upharpoonright \cH^A_k, \quad  \gga^A(z) = (\gga^A_1(z),\gga^A_2(z),\ldots,\gga^A_{n(A)}(z)),
\eed
$z \in \dC_\pm$, and
\bed
\gga^S_k(\cdot) = \gga^S(\cdot)\upharpoonright \cH^S_k, \quad \gga^S(z) = (\gga^S_1(z),\gga^S_2(z),\ldots,\gga^S_{n(A)}(z)),
\eed
$z \in \dC_\pm$, where $\cH^S_k := \cH^A_k \otimes \gotT = \gotT$, $k\in \{1,2,\ldots,{n(A)}\}$. From \eqref{eq:3.41} we find
  \be\la{eq:3.72}
\gga^S_k(z) = \gga^A_k(z -T)\frac{1}{\sqrt{\im(m_k(i-T))}}, \;\; z \in \dC_\pm, \;\;
k\in \{1,2,\ldots,{n(A)}\}.
  \ee
Finally, the Weyl function takes the form
\begin{align}\la{eq:3.73}
&M^S(z) =\\
&\diag\left(\frac{m^A_1(z-T) -
\re(m_1(i-T))}{\im(m_1(i-T))},\ldots,\frac{m^A_{n(A)}(z-T) -
\re(m_{n(A}(i-T))}{\im(m_{n(A)}(i-T))}\right) \nonumber
\end{align}
$z \in \dC_\pm$.
}
\end{remark}

\section{Sums of tensor products with  non-negative summands}

\subsection{Boundary triplets in the case of non-negative operators $A$ and $T$}
Here we complete previous results assuming the  operators $A$ and $T$ to be
non-negative. We denote by
$\widehat A_{\rm F}$ and
$\widehat A_{\rm K}$ the \emph{Friedrich's} and \emph{Krein's} extension of $A$, respectively.

\bt\la{th:5.1_Positive}
Let $A$ be a non-negative symmetric  operator in $\gotH$ and let $\Pi_A =
\{\cH^A,\gG^A_0,\gG^A_1\}$ be a boundary triplet for $A^{*}$ such that $A_0 :=
A^*\upharpoonright\ker(\gG_0^A) = {\widehat A}_F$. Let also $M^A(\cdot)$ and
$\gamma^A(\cdot)$ be the corresponding Weyl function and  $\gamma$-field, respectively.
Let also $T = T^* \in \cB(\gotT)$, $T\ge 0$ and let $S= A \otimes I_{\gotT} + I_{\gotH}
\otimes T$. Finally, let  $\widehat E_T(\cdot) := I_{\cH^A} \otimes E_T(\cdot)$, where
$E_T(\cdot)$ is the spectral measure of $T$. Then:
\begin{enumerate}

\item[\rm (i)]
$\Pi_S = \{\cH^S,\gG^S_0,\gG^S_1\} := \Pi_A \wh\otimes I_\gotT := \{\cH^A \otimes
\gotT,\gG^A_0 \wh\otimes I_\gotT,\gG^A_1 \wh\otimes I_\gotT\}$ is a boundary triplet for
$S^*$ such that
$$
S_0 := S^*\upharpoonright\ker(\gG_0^S) =  {\widehat S}_F = {\widehat A}_F \otimes
I_{\gotT} + I_{\gotH} \otimes T.
$$
\item[\rm (ii)]
The $\gamma$-field $\gga^S(\cdot)$ and  Weyl function $M^S(\cdot)$ of $\Pi_S$ admit the
following representations
  \be\la{eq:3.7New}
\gamma^S(z) = \int_{\gD} \left(\gamma^A(z - \lambda) \otimes I_{\gotT}\right) \widehat
E_T(d\lambda), \qquad z \in \dC\setminus \Delta,
  \ee
and
\be\label{weyl_New}
\begin{split}
M^S(z) &= \int_{\gD} \widehat E_T(d\lambda) \left(M^A(z - \lambda) \otimes I_{\gotT}\right)\\
&=  \int_{\gD}  \left(M^A(z - \lambda) \otimes I_\gotT\right)\widehat E_T(d\lambda),
\qquad z \in   \dC\setminus \Delta,
\end{split}
\ee
where $\gD$ is the smallest  closed  interval containing the spectrum $\gs(T)$.

\item[\rm (iii)]
If the Weyl function $M^A(\cdot)$  is of scalar type, $M^A(\cdot) =
m^A(\cdot)I_{\cH^A}$,  then
   \bed%\la{eq:3.9}
M^S(z) = I_{\cH^A} \otimes m^A(z - T), \qquad z \in \dC_\pm.
   \eed
In particular, the latter holds  whenever  $n_\pm(A) = 1$.
\end{enumerate}
\et
  \begin{proof}
(i) It is immediate from the definition that $S_0 = S^*\upharpoonright\ker(\gG_0^S)  =
{A}_0 \otimes I_{\gotT} + I_{\gotH} \otimes T$. It remains to apply  Proposition
\ref{prop_5.5_Frid_and_Kr_tensor}.

Statements (ii) and (iii) are immediate from Theorem \ref{th:3.1}.
  \end{proof}
\bl\la{lem:3.4_Posit}
Let $A$ be a densely defined closed non-negative symmetric operator in $\gotH$ and let
$\Pi_A = \{\cH^A,\gG^A_0,\gG^A_1\}$ be a boundary triplet for $A^*$ and let $A_0\ge 0$.
Let also  $M^A(\cdot)$ and $\gamma^A(\cdot)$  be  the corresponding Weyl function and
$\gamma$-field, respectively. Further, let $T$ be a non-negative self-adjoint operator
on $\gotT$, let $E_T(\cdot)$ be its  spectral measure,  and let $\widehat E_T(\cdot) :=
I_{{\mathcal H}^A} \otimes E_T(\cdot)$.
Then  the following improper spectral integrals

\begin{align}
\begin{split}
G_0^+ f  := \int_{\dR_+} \widehat E_T(d\gl) \left(\sqrt{((M^A)'(a-\gl))} \otimes
I_{\gotT}\right)f, \qquad a<0, \quad  \la{eq:3.47_posit}
                \end{split}\\
            \begin{split}
G_1^+ f  := \int_{\dR_+} \widehat E_T(d\gl) \left(\frac{1}{\sqrt{((M^A)'(a-\gl))}}
\otimes I_{\gotT}\right) f,  \qquad a<0, \quad  \la{eq:3.48_posit}
            \end{split}\\
            \begin{split}
G_2^+ f  :=  \int_{\dR_+} \widehat E_T(d\gl)
\left(\frac{1}{\sqrt{(M^A)'(a-\gl)}}M^A(a-\gl) \otimes I_{\gotT}\right)f, \;\; a<0.
\la{eq:3.49_posit}
            \end{split}
\end{align}
exist for each  $f \in \dom(I_{\cH^A} \otimes T)$. Moreover, the following improper
spectral integrals
  \be\la{eq:3.41_Posit}
G(z)f  := \int_{\dR_+} \left(\gga^A(z - \gl)\ \frac{1}{\sqrt{(M^A)'(a -\gl)}} \otimes
I_{\gotT}\right) \widehat E_T(d\gl)f,
  \ee
$z\in \dC \setminus \dR_+$, $a<0$, and
  \be\la{eq:3.43_Posit}
\begin{split}
M(z)f  &:= \int_{\dR_+} \left(L^A(z - \gl, a-\gl)\otimes I_{\gotT}\right)\widehat E_T(d\gl)f\\
   &= \int_{\dR_+} \,\widehat E_T(d\gl) \left(L^A(z -\gl,a-\gl)\otimes I_{\gotT}\right)f, \quad z\in \dC \setminus \dR_+,
\end{split}
\ee
converge  for every  $f \in \cH^A \otimes \gotT$, where
     \be\la{eq:5.17_LA(x,a)}
L^A (z,a) := \frac{1}{\sqrt{(M^A)'(a)}}\left(M^A(z)
-M^A(a)\right)\frac{1}{\sqrt{(M^A)'(a)}},
   \ee
$z \in \rho(A_0)$, $a  \in \dR_-$.
   \el
  \begin{proof}
(i) First we prove the convergence of integral in  \eqref{eq:3.41_Posit}.
 It follows from \eqref{mlambda} that $(M^A)'(z) = \gamma^A(\overline
z)^*\gamma^A(z)$. Hence
 \begin{equation}\label{5.11_MA'-identity}
(M^A)'(a-\lambda) =\gamma^A(a-\lambda)^*\gamma^A(a-\lambda),   \qquad  \lambda\in\dR_+,
\quad a<0.
  \end{equation}
This identity implies the existence of an isometry $V(a-\lambda)$ mapping $\cH$ onto
$\gotN_{a-\lambda}(A)$ and such that
   \begin{equation}\label{5.8_V(a)}
V(a-\lambda)\sqrt{(M^A)'(a-\lambda)}= \gamma^A(a-\lambda), \qquad  \lambda\in\dR_+,
   \end{equation}
Further, in accordance with \eqref{2.5}
  \begin{equation}\label{5.9_gamma-field_connect}
  \begin{split}
\gamma^A(z-\lambda) &= (A_0 - a +\lambda)(A_0-z+\lambda)^{-1}\gamma^A(a-\lambda) \\
&=
U(a-\lambda, z-\lambda)\gamma^A(a-\lambda),
\end{split}
  \end{equation}
where $U(a-\lambda, z-\lambda):= (A_0 - a +\lambda)(A_0-z+\lambda)^{-1} \upharpoonright
\gotN_{a-\lambda}(A)$.  It is easily checked that  $U(a-\lambda, z-\lambda)$
isomorphically maps  $\gotN_{a-\lambda}(A)$ onto $\gotN_{z-\lambda}(A)$. Combining
relation  \eqref{5.9_gamma-field_connect}  with \eqref{5.8_V(a)}  yields
  \begin{eqnarray}\label{5.10_resol-estimate}
\|\gamma^A(z - \lambda)\bigl((M^A)'(a-\lambda)\bigr)^{-1/2}\| = \|U(a-\lambda, z-\lambda) V(a-\lambda)\|\nonumber  \\
= \|I + (z-a)(A_0-z+\lambda)^{-1}\| \le
\begin{cases}
1+|z-a|\cdot|\im z|^{-1},& z\in \dC \setminus \dR,  \\
1+|x-a|\cdot |x|^{-1},& x\in\dR_-.
\end{cases}
 \end{eqnarray}
Here we have taken into account that $|x|  \le | x - \lambda| = |x| + \lambda$. The
latter estimate implies  boundedness of the integrand in \eqref{eq:3.41_Posit} for each
$z\in \dC \setminus \dR_+$.   It remains  to apply Proposition \ref{prop:A.-3}
with $\alpha =0$.

(ii) Let us prove that
  \begin{equation}\label{5.14_estimate-for-LA}
C(z, a):= \sup_{\gl\in \dR_+}\|L^A(z - \gl, a-\gl)\|<\infty \quad \text{for each} \quad
z\in \dC \setminus \dR_+ \quad \text{and} \quad a<0.
   \end{equation}
 Combining identity \eqref{mlambda}   with \eqref{5.9_gamma-field_connect} yields
  \begin{eqnarray*}
M^A(z-\lambda)- M^A(a-\lambda)=(z-a)\gamma^A(a-\lambda)^*\gamma^A(z-\lambda)  \nonumber \\
= (z-a)\gamma^A(a-\lambda)^* U(a-\lambda,z-\lambda)\gamma^A(a-\lambda).
  \end{eqnarray*}
In turn,  inserting this identity in  \eqref{eq:5.17_LA(x,a)} and using \eqref{5.8_V(a)}
one derives
\be\label{5.11_difer_of_L}
\begin{split}
L^A(z - \gl,&a-\gl) =  (z-a)\frac{1}{\sqrt{(M^A)'(a - \gl)}}\gamma^A(a - \gl)^*\times\\
&\qquad\times U(a - \gl,z -\gl)\gamma^A(a - \gl)\frac{1}{\sqrt{(M^A)'(a - \gl)}}\\
&= (z-a)V(a-\lambda)^* U(a - \gl,z -\gl)V(a-\lambda), \quad \gl \in \dR_+.
\end{split}
  \ee
Noting that $V(a-\lambda)$ is an isometry for each $\gl\in \dR_+$ and using estimate
\eqref{5.10_resol-estimate}  one arrives at estimate \eqref{5.14_estimate-for-LA}. To
prove
 convergence of the  integral  \eqref{eq:3.43_Posit} for each  $f \in \cH^A \otimes
\gotT$,
it remains to apply Proposition \ref{prop:A.-3} with $\alpha =0$.

(iii) Let us prove convergence of integrals \eqref{eq:3.47_posit} and
\eqref{eq:3.48_posit}.
Since $A_0\ge 0$, integral representation \eqref{W-F+Integral_rep-n} implies
\begin{equation}\label{W-F_positive}
(M^A)'(a-\lambda) = \int_{\dR_+}\frac{d\Sigma_A(t)}{(t - a + \lambda)^2}, \qquad
\lambda\in\dR_+, \quad   a<0.
 \end{equation}
Using this representation instead of  \eqref{W-F_Imag_part_Int_rep} one proves the
following analog of  estimate  \eqref{joh4}
      \begin{equation}\label{joh4_posit}
C_1(1+|\lambda|^2)^{-1}  \im M(i)\le  (M^A)'(a-\lambda) \le  C_2(1+|\lambda|^2) \im
M(i),
      \end{equation}
$\lambda\in\dR_+$. Combining this estimate with inequality  $\int_{\dR_+} \lambda^2 d \parallel \widehat
E_T(\gl) f \parallel^2 < \infty$ characterizing  $f\in \dom(I_{\cH^A} \otimes T)$, and
applying Proposition \ref{prop:A.-3} with $\alpha = 1$ yields convergence of both
integrals \eqref{eq:3.47_posit} and \eqref{eq:3.48_posit}.

(iv) Due to Proposition \ref{prop:A.-3} (with $\alpha = 1)$  to prove
\eqref{eq:3.49_posit} it suffices to  show that
\begin{equation}\label{M'(-1/2)MA-estimate}
\|\left((M^A)'(a - \lambda)\right)^{-1/2}M^A(a-\lambda)\| =
O(|\lambda|)\quad\text{as}\quad \gl \to\infty.
\end{equation}
In accordance with  \eqref{mlambda}
   \begin{equation}
M^A(a-\lambda)=  M^A(a) - \lambda\gamma^A(a-\lambda)^*\gamma^A(a).
   \end{equation}
Combining this identity with \eqref{5.11_MA'-identity}  we  derive
   \be
   \begin{split}
\|\big((&M^A)'(a-\lambda)\big)^{-1/2}M^A(a-\lambda)\|\\
&\le \|M^A(a)\|\cdot
\|\left((M^A)'(a-\lambda)\right)^{-1/2}\| +
|\lambda|\cdot\|V^*(a-\lambda)\cdot\gamma^A(a)\|.
\end{split}
 \ee
Noting that $V(a-\lambda)$ is an isometry and taking  \eqref{joh4_posit} into account we
arrive at estimate  \eqref{M'(-1/2)MA-estimate}.
   \end{proof}
  \bt\la{th:2.5_Positive}
Let $\Pi_A = \{\cH^A,\gG^A_0,\gG^A_1\}$ be a boundary triplet for $A^{*}$, $A_0 :=
A^*\upharpoonright\ker(\gG^A_0)$,   let $M^A(\cdot)$ and $\gamma^A(\cdot)$  be the
corresponding  Weyl function and
$\gamma$-field, respectively.   Let also $T = T^*\in \cC(\gotT)$ be an unbounded
self-adjoint operator in $\gotT$ and  $S := A\otimes I_{\gotT} + I_{\gotH} \otimes T$.
Then:

\item[\;\;\rm (i)]
There exists  a boundary triplet $\wt \Pi_S = \{\wt \cH^S,\wt \gG^S_0,\wt \gG^S_1\}$ for
$S^*$ such that $\wt \cH^S := \cH^A \otimes \gotT$ and $S_0 :=
S^*\upharpoonright\ker(\wt \gG^S_0) = A_0 \otimes I_{\gotT} + I_{\gotH} \otimes T$. If
$f \in \gotD := \dom(S^*) \cap \dom(I_{\gotH} \otimes T) \subseteq \dom(S^*)$, then $f
\in \dom(S^*) \cap \dom(A^* \otimes I_\gotT)$ and
\begin{align}\la{eq:3.57_positive}
\wt \gG^S_0&f:=\left(\int_{\dR_+} \widehat E_T(d\gl) {\sqrt{(M^A)'(a -\gl))}}  \otimes I_{\gotT}\right)
\cdot(\gG^A_0 \wh \otimes I_\gotT)f,\nonumber\\
\wt \gG^S_1&f := \left(\int_{\dR_+} \widehat E_T(d\gl) \frac{1}{\sqrt{(M^A)'(a -\gl)}}  \otimes
I_{\gotT}\right)(\gG^A_1 \wh\otimes I_\gotT) f   \\
& - \left(\int_{\dR_+} \widehat E_T(d\gl) \left(\frac{1}{\sqrt{(M^A)'(a -\gl)}}\,
M^A(a-\gl) \otimes I_\gotT\right)\right) \cdot (\gG^A_0 \wh\otimes I_\gotT)f.\nonumber
\end{align}

\item[\;\;\rm (ii)]
The  $\gamma$-field $\wt \gga^S(\cdot)$ and Weyl function $\wt M^S(\cdot)$ corresponding
to $\wt \Pi_S$ are given by
   \be\la{eq:3.65_posit}
\wt \gga^S(z) = G(z) \qquad \mbox{and} \qquad \wt M^S(z) = M(z), \qquad z \in \rho(S_0),
  \ee
where $G(\cdot)$ and $M(\cdot)$ are defined by  \eqref{eq:3.41_Posit} and
\eqref{eq:3.43_Posit}, respectively.

\item[\;\;\rm (iii)]
If $M^A(\cdot)$ is a scalar type function, i.e. $M^A(\cdot) = m^A(\cdot)I_{\cH^A}$, then
representation \eqref{3.13} remains true.
\et
  \begin{proof}
(i) First  we let $\gD_n := [n-1,n)$,  $\gotT_n := E_T(\gD_n)\gotT$, and $T_n =
TE_T(\gD_n)$, $n \in \dN$. Clearly,  $\gotT = \bigoplus_{n\in\dZ}\gotT_n$ and $T =
\bigoplus_{n\in\dZ}T_n$. We also put $S_{n} := A\otimes I_{\gotT_n} + I_{\gotH}\otimes
T_{n}\in \cC({\gotH}\otimes {\gotT_n})$. Clearly,  each $T_n$  is bounded and
$\gs(T_n)\subset [n-1,n]$.

By Theorem \ref{th:5.1_Positive},   $\Pi_{S_n} = \{\cH^{S_n},\gG^{S_n}_0,\gG^{S_n}_1\}
:= \Pi_A \wh\otimes I_{\gotT_n} := \{\cH^A \otimes \gotT_n,\gG^A_0 \wh\otimes
I_{\gotT_n},\gG^A_1 \wh\otimes I_{\gotT_n}\}$ is a boundary triplet for $S^*_n$ such
that
$$
S_{0n} := S^*\upharpoonright\ker(\gG_0^{S_n}) =   { A}_0 \otimes I_{\gotT_n} + I_{\gotH}
\otimes T_n, \qquad n\in \dN.
$$
Let also $M^{S_n}(\cdot)$ be the corresponding Weyl function.   It follows from
\eqref{weyl_New} that
  \begin{equation}\label{weyl_derivative}
(M^{S_n})'(z) =  \int_{\gD_n} \left((M^A)'(z -\gl) \otimes I_{\gotT_n}\right) \widehat
E_T(d\gl), \qquad z\in \dC\setminus \gD_n.
  \end{equation}
Since the function $\varphi(\cdot)=\sqrt{\cdot}$ is continuous on  $\dR_+$, then  in
accordance with Proposition \ref{prop:A.-1}(iii) the compositions ${((M^A)'(a -
\gl)})^{1/2} \otimes I_{\gotT_n}$ and  ${((M^A)'(a - \gl)})^{-1/2} \otimes I_{\gotT_n}$
are $\widehat E_{T_n}$-admissible. Therefore  combining  representation
\eqref{weyl_derivative}  with Proposition \ref{prop:A.-1}(iii) yields
  \be\la{joh1_positive}
  \begin{split}
R_n &:= \sqrt{(M^{S_n})'(a)} =  \int_{\gD_n} \left(\sqrt{(M^A)'(a-\gl)} \otimes I_{\gotT_n}\right) \widehat E_T(d\gl),  \\
R_n^{-1} &= \frac{1}{\sqrt{(M^{S_n})'(a)}} =  \int_{\gD_n}
\left(\frac{1}{\sqrt{(M^A)'(a-\gl)}} \otimes I_{\gotT_n}\right) \widehat E_T(d\gl),
  \end{split}
  \ee
$a < 0$. Similarly,  using  representations  \eqref{weyl_New} and \eqref{joh1_positive}  and
applying Proposition \ref{prop:A.-2}  yields
  \be \la{5.16_Rn-1_M-positive}
  \begin{split}
R&_n^{-1}M_n^A(a) = \frac{1}{\sqrt{(M^{S_n})'(a)}}M_n^A(a) \\
& =  \int_{\gD_n}
\left(\frac{1}{\sqrt{(M^A)'(a-\gl)}}M^A(a-\gl) \otimes I_{\gotT_n}\right) \widehat
E_T(d\gl), \qquad a<0.
\end{split}
  \ee
Setting  $\cH_{S_n} := \cH^A \otimes \gotT_n$,
\be\label{tripl_posit}
\begin{split}
&\wt \gG^{S_n}_0 = \sqrt{(M^{S_n})'(a)}\gG^{S_n}_0 \qquad \text{and}\\
&\wt
\gG^{S_n}_1 = \frac{1}{\sqrt{(M^{S_n})'(a)}}(\gG^{S_n}_1 - M^{S_n}(a))\gG^{S_n}_0),
\end{split}
\ee
we obtain an ordinary boundary triplet  $\wt \Pi_{S_n} = \{\cH_{S_n},\wt \gG^{S_n}_0,\wt
\gG^{S_n}_1\}$ for $S_n^*$. Inserting formulas \eqref{joh1_positive} and
\eqref{weyl_derivative}  in \eqref{tripl_posit} yields \eqref{eq:3.57_positive} with
$\Delta_n$ in place of $\dR_+$.
Now applying  Proposition  \ref{cor_III.2.2_02}  (see formula \eqref{III.2.2_08}) one
gets that the direct sum $\wt\Pi_S := \bigoplus_{n\in\dN}\wt \Pi_{S_n}$ is an ordinary
boundary triplet for $S^*$. In particular, for any $f \in \gotD = \dom(S^*) \cap
\dom(A^* \otimes I_\gotT)$
   \begin{align}\la{Gamma_0-dir_sum}
\wt \gG^S_0 f &:=  \bigoplus_{n=1}^{\infty} \wt \gG^{S_n}_0 f \\
&= \bigoplus_{n=1}^{\infty}
\left(\int_{[n-1,n)}
\left({\sqrt{(M^A)'(a -\gl)}} \otimes I_{\gotT_n}\right)\widehat E_{T_n}(d\gl)\right)\cdot(\gG^A_0 \wh \otimes I_{\gotT_n})f \nonumber\\
&= \left(\int_{\dR_+} \left({\sqrt{(M^A)'(a -\gl)}} \otimes I_{\gotT}\right)\widehat
E_{T_n}(d\gl) \right)\cdot(\gG^A_0 \wh \otimes I_{\gotT})f,\nonumber
  \end{align}
which proves the first formula in \eqref{eq:3.57_positive}. Note that convergence of the
last integral for every $f \in \gotD$ (cf. \eqref{eq:3.47_posit})  is  guaranteed by
Lemma \ref{lem:3.4_Posit}.   Formula \eqref{eq:3.57_positive} for $\wt \gG^S_1$ is
proved similarly.

(ii)  It easily follows from \eqref{tripl_posit} that the Weyl function ${\wt
M}^{S_n}(\cdot)$ corresponding to the triplet  $\wt \Pi_{S_n}$ is
  \be\label{5.25_weyl_f-n_for_S_n}
  \begin{split}
 {\wt M}^{S_n}(z) &= R_n^{-1}\left( M^{S_n}(z) - M^{S_n}(a) \right)R_n^{-1} \\
 &=
\frac{1}{\sqrt{(M^{S_n})'(a)}} \left( M^{S_n}(z) - M^{S_n}(a) \right)
\frac{1}{\sqrt{(M^{S_n})'(a)}}
\end{split}
  \ee

Inserting  formulas \eqref{joh1_positive} and  \eqref{weyl_New}  into
\eqref{5.25_weyl_f-n_for_S_n} and applying Proposition \ref{prop:A.-2} we arrive at the
following representation
  \bed
  \begin{split}
{\wt M}^{S_n}(z) &= \int_{\gD_n} \Big(\frac{1}{\sqrt{(M^{A})'(a -\gl)}}\times\\
&\qquad\times ( M^A(z-\gl) - M^A(a-\gl))
\frac{1}{\sqrt{(M^{A})'(a -\gl)}}\Big)  \widehat E_T(d\gl),
\end{split}
  \eed
$z \in \dC_{\pm}$. Finally applying Proposition  \ref{cor_III.2.2_02} and taking notation
\eqref{eq:5.17_LA(x,a)} into account  we arrive at formula  for the Weyl function ${\wt
M}^{S}(\cdot)$ corresponding to $\wt\Pi_S$,
  \bed
  \begin{split}
{\wt M}^{S}(z)f &=  \bigoplus_{n\in\dN} {\wt M}^{S_n}(z)f =  \bigoplus_{n\in\dN}
\int_{\Delta_n} \left(L^A(z - \gl, a-\gl)\otimes I_{\gotT}\right)\widehat E_T(d\gl)f \\
&=
\int_{\dR_+} \left(L^A(z - \gl, a-\gl)\otimes I_{\gotT}\right)\widehat E_T(d\gl)f,  \quad z\in \dC \setminus \dR_+
\end{split}
   \eed
exist for every  $f \in \cH^A \otimes \gotT$ and any  $z\in \dC \setminus \dR_+$.
Note that Lemma \ref{lem:3.4_Posit}  ensures convergence of the last integral for every
$f \in \cH^A \otimes \gotT$.  Comparison with  \eqref{eq:3.43_Posit} proves the second
equality in   \eqref{eq:3.65_posit}. The first one is extracted by combining the first
formula in \eqref{W-fun_and_gam-field_for_dir_sum} with \eqref{joh1_positive}  and
applying Proposition \ref{prop:A.-2}.
  \end{proof}

 \subsection{Friedrichs and Krein extensions of $S := A\otimes
I_{\gotT}+I_{\gotH}\otimes T.$ }

In this section we assume that both a symmetric operator $A\in \cC(\mathfrak{H})$ and
the operator $T=T^*$ are non-negative.  Then the set $\Ext_A[0,\infty)$ of  non-negative
self-adjoint extensions of  $A$ is non-empty (see \cite{AG81,BS87,Ka76}). Moreover,
according to the Krein result \cite{K47} the set $\Ext_A[0, \infty)$ contains two
extremal extensions:  a maximal non-negative extension $\widehat A_{\rm F}$ (also called
\emph{Friedrichs'} or \emph{hard} extension) and  a minimal non-negative extension
$\widehat A_{\rm K}$ (\emph{Krein's} or \emph{soft} extension). The latter are uniquely
determined by  the following inequalities
$$
(\widehat  A_F+x)^{-1} \le (\wt A + x)^{-1} \le (\widehat A_K + x)^{-1}, \quad x\in
(0,\infty), \quad \wt A\in  \Ext_A(0,\infty),
$$
(for detail we refer the reader to \cite{AG81, Ka76}).

Recall the following statements.
   \begin{proposition}[\cite{DM91}]\label{prop_II.1.5_01}
Let $A\ge 0$ and let \ $\Pi=\{\cH,\Gamma_0,\Gamma_1\}$ be a boundary triplet for $A^*$
such that $A_0 (= A^*\upharpoonright \ker\Gamma_0)\ge 0$.
%(=A^*\lceil\ker\Gamma_0) \geq 0$.
Let $M(\cdot)$ be the corresponding Weyl function.  Then $A_0= \widehat A_{\rm F}\ \
(A_0 = \widehat A_{\rm K})$ if and only if
      \begin{equation}\label{Fr1}
\lim_{x\downarrow-\infty}(M(x)f,f)=-\infty,\quad
\bigl(\lim_{x\uparrow0}(M(x)f,f)=+\infty\bigr),\quad f\in\cH\setminus\{0\}.
    \end{equation}
       \end{proposition}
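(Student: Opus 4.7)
The plan is to use the Krein-type resolvent formula \eqref{2.30}, the monotonicity of the Weyl function on the negative real axis, and the extremal characterization of the Friedrichs and Krein extensions.

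First, I would extract monotonicity from \eqref{mlambda} specialized to real $\zeta \in (-\infty, 0) \subset \rho(A_0)$ (this interval lies in the resolvent set because $A_0 \geq 0$): one obtains $M'(x) = \gamma(x)^*\gamma(x) \geq 0$, so $M(\cdot)$ is monotonically non-decreasing on $(-\infty, 0)$. Consequently the quadratic-form limits
\begin{equation*}
\underline{m}(f) := \lim_{x \downarrow -\infty}(M(x)f,f) \in [-\infty, +\infty), \qquad \overline{m}(f) := \lim_{x \uparrow 0}(M(x)f, f) \in (-\infty, +\infty]
\end{equation*}
exist for every $f \in \cH$ and furnish self-adjoint relations $M(-\infty), M(0-) \in \widetilde{\cC}(\cH)$ via their strong resolvent limits along the monotone family.

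Next, I would apply the Krein formula \eqref{2.30} at $z = -x < 0$ together with the parametrization $\widetilde A = A_\Theta$ from Proposition \ref{prop2.1} to characterize non-negative self-adjoint extensions. Combining the identity $(A_\Theta + x)^{-1} - (A_0 + x)^{-1} = \gamma(-x)(\Theta - M(-x))^{-1}\gamma(-x)^*$ with the monotonicity of $M(-x)$ in $x > 0$, one derives the classical Krein--Derkach--Malamud characterization: $A_\Theta \geq 0$ if and only if $\Theta \geq M(0-)$ in the sense of self-adjoint relations in $\cH$, and within this parametrization the extremal non-negative extensions are $A_K = A_{M(0-)}$ (the minimal, soft extension) and $A_F = A_{M(-\infty)}$ (the maximal, hard extension).

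Since $A_0 = A_{\Theta_0}$ with $\Theta_0 = \{0\} \oplus \cH$ the ``infinite'' self-adjoint relation, the equalities $A_0 = A_K$ and $A_0 = A_F$ translate into the relation identities $M(0-) = \Theta_0$ and $M(-\infty) = \Theta_0$ respectively. To finish, one observes that a monotone family of bounded self-adjoint operators $\{M(x)\}$ converges in the strong resolvent sense to $\Theta_0$ precisely when the quadratic form $(M(x)f,f)$ diverges to $\pm\infty$ for every $f \neq 0$, and the sign of the divergence is dictated by the monotonicity: $(M(x)f,f) \nearrow +\infty$ as $x \uparrow 0$ gives the Krein part of \eqref{Fr1}, while $(M(x)f,f) \searrow -\infty$ as $x \downarrow -\infty$ gives the Friedrichs part.

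The main obstacle is the rigorous identification $A_F = A_{M(-\infty)}$ and $A_K = A_{M(0-)}$, i.e. that the extremal non-negative extensions correspond precisely to the endpoint limits of the Weyl function along $(-\infty,0)$. This is the core of the Krein--Derkach--Malamud extension theory; I would prove it either by a form-comparison argument (showing $A_{M(0-)}$ satisfies Krein's variational characterization of the soft extension, and analogously for the hard one) or by passing to the limit in the Krein formula applied to a monotone family of approximating extensions with $\Theta_n \nearrow \infty$ (respectively $\Theta_n \searrow M(0-)$) and invoking the strong resolvent continuity of $\Theta \mapsto A_\Theta$.
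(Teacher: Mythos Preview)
The paper does not supply its own proof of this proposition: it is quoted verbatim from \cite{DM91} as a known result and used as a tool in Section~5. So there is no in-paper argument to compare against.

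That said, your outline is essentially the proof given in \cite{DM91}. The monotonicity of $x\mapsto M(x)$ on $(-\infty,0)$ via $M'(x)=\gamma(x)^*\gamma(x)>0$, the interpretation of the strong resolvent limits $M(0-)$ and $M(-\infty)$ as self-adjoint relations, the identification $\widehat A_K=A_{M(0-)}$ and $\widehat A_F=A_{M(-\infty)}$, and the final step that $M(\cdot)$ converges in the strong resolvent sense to the purely multivalued relation $\Theta_0=\{0\}\times\cH$ precisely when $(M(x)f,f)\to\pm\infty$ for every $f\neq 0$ --- all of this is exactly the Derkach--Malamud argument.

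One small caution: the blanket statement ``$A_\Theta\ge 0$ if and only if $\Theta\ge M(0-)$'' is not valid for an arbitrary boundary triplet with $A_0\ge 0$; in \cite{DM91} that clean order-characterization is proved under the additional hypothesis $A_0=\widehat A_F$. For the present proposition this is harmless, since you only need the identifications $\widehat A_F=A_{M(-\infty)}$ and $\widehat A_K=A_{M(0-)}$, and those hold for any triplet with $A_0\ge 0$. Your acknowledgement that this identification is the crux, to be obtained either by a form-comparison with Krein's variational description or by a limiting argument in the Krein resolvent formula, is exactly right; both routes are carried out in \cite{DM91}.
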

Next we describe the Friedrichs extension $\widehat S_F$ of $S$ by means of the
extension $\widehat A_F$ of $A$.
We start with the following simple algebraic lemma.
  \begin{lemma}\label{lem_tensor_product}
Let $\{X_k\}^n_1$ be a sequence of positive definite operators in $\cH,$  $X_k\ge d
I_\cH >0$, $d > 0$, and let $E_T(\cdot)$ be a spectral measure of the operator $T =
T^*\in \cB(\gotT)$. Then for any partition $\{\Delta_k\}^n_1$ of $[a,b]$
$\bigl(\sigma(T)\subset [a,b]\bigr)$ one has
   \begin{equation}\label{Fr3}
X:=\sum_k X_k\otimes E_T(\Delta_k)\ge dI_{\cH \otimes \gotT}.
   \end{equation}
  \end{lemma}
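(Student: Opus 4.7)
The plan is to exploit the fact that, since $\sigma(T)\subset[a,b]=\bigcup_k\Delta_k$ with the $\Delta_k$ pairwise disjoint, the family $\{E_T(\Delta_k)\}_{k=1}^n$ is an orthogonal resolution of the identity on $\gotT$: one has $E_T(\Delta_j)E_T(\Delta_k)=\delta_{jk}E_T(\Delta_k)$ and $\sum_k E_T(\Delta_k)=I_\gotT$. Hence the projections $P_k:=I_\cH\otimes E_T(\Delta_k)$ are pairwise orthogonal and sum to $I_{\cH\otimes\gotT}$, so they induce an orthogonal decomposition
\[
\cH\otimes\gotT=\bigoplus_{k=1}^n \cK_k,\qquad \cK_k:=\ran P_k=\cH\otimes E_T(\Delta_k)\gotT.
\]

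Next I would verify that $X$ is block diagonal with respect to this decomposition. The key identity is
\[
X_k\otimes E_T(\Delta_k)=(X_k\otimes I_\gotT)P_k=P_k(X_k\otimes I_\gotT),
\]
which shows at once that $(X_k\otimes E_T(\Delta_k))P_j=0$ for $j\neq k$ (since $E_T(\Delta_k)E_T(\Delta_j)=0$) and that $(X_k\otimes E_T(\Delta_k))\upharpoonright\cK_k=(X_k\otimes I_\gotT)\upharpoonright\cK_k$. Consequently $X=\bigoplus_k (X_k\otimes I_{E_T(\Delta_k)\gotT})$ with respect to the decomposition above.

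Finally, on each block the lower bound is immediate: since $X_k\ge dI_\cH$, tensoring by the identity of the second factor gives $X_k\otimes I_{E_T(\Delta_k)\gotT}\ge dI_{\cK_k}$, and taking the direct sum yields $X\ge dI_{\cH\otimes\gotT}$, which is \eqref{Fr3}. Concretely, for arbitrary $f\in\cH\otimes\gotT$ one sets $f_k:=P_k f$, notes $\|f\|^2=\sum_k\|f_k\|^2$ by orthogonality, and computes $\langle Xf,f\rangle=\sum_k\langle (X_k\otimes I_\gotT)f_k,f_k\rangle\ge d\sum_k\|f_k\|^2=d\|f\|^2$.

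I do not foresee any real obstacle: the statement is the tensor-product / spectral-projection analogue of the elementary fact that a direct sum of self-adjoint operators with a uniform lower bound $d$ inherits that same lower bound. The only step deserving explicit mention is the commutation/block-diagonalization identity above, which isolates the contribution of each $X_k$ to its own spectral subspace $\cK_k$ and ensures that no cross terms spoil the estimate.
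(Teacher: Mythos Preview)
Your proof is correct and follows essentially the same idea as the paper's: both hinge on $\sum_k E_T(\Delta_k)=I_\gotT$ and the positivity of each $(X_k-dI_\cH)$ on the range of $I_\cH\otimes E_T(\Delta_k)$. The paper is slightly more concise, writing directly $X-dI_{\cH\otimes\gotT}=\sum_k (X_k-dI_\cH)\otimes E_T(\Delta_k)\ge 0$ without spelling out the block-diagonal decomposition, but your explicit orthogonal-sum argument is equivalent.
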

\begin{proof}
Since $X_k\ge dI_\cH >0$, the operator $(X_k - d I_\cH)\otimes E_T(\Delta_k)$ is
non-negative. Hence
\bed
\begin{split}
X &= \sum_k X_k\otimes E_T(\Delta_k)\ge d \sum_k I_{\cH} \otimes E_T(\Delta_k) \\
&= dI_{\cH} \otimes \left(\sum_k E_T(\Delta_k)\right) = d I_{\cH}\otimes I_{\gotT} = d I_{\cH \otimes \gotT},
\end{split}
\eed
This inequality proves the result.
  \end{proof}

\begin{proposition}\label{prop_5.5_Frid_and_Kr_tensor}
Let $A$ be a non-negative symmetric operator in $\gotH$, let $T=T^*\ge 0$ and let $S := A\otimes
I_{\gotT}+I_{\gotH}\otimes T$. Then:
  \begin{equation}\label{Fr5}
\widehat S_F = \widehat A_F\otimes I_{\gotT}+I_{\gotH}\otimes T  \qquad \text{and}
\qquad \widehat S_K = \widehat A_K\otimes I_{\gotT}+I_{\gotH}\otimes T.
  \end{equation}
\end{proposition}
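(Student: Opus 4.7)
The plan is to invoke Proposition \ref{prop_II.1.5_01}, which characterizes $\widehat A_F$ and $\widehat A_K$ through the limiting behavior of the associated Weyl function at $-\infty$ and at $0^-$ respectively, combined with the explicit integral representations of $M^S(\cdot)$ obtained earlier in the paper. Crucially, the boundary-triplet constructions in Theorems \ref{th:5.1_Positive} and \ref{th:2.5_Positive} use only the tensor structure of $S$ and do not themselves invoke the present proposition, so their Weyl-function formulas are available as tools here.

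For the Friedrichs identity, I pick a boundary triplet $\Pi_A = \{\cH^A,\gG^A_0,\gG^A_1\}$ for $A^*$ with $A_0 = \widehat A_F$ (such a triplet exists since $\widehat A_F$ is a self-adjoint extension of $A$), and apply Theorem \ref{th:5.1_Positive} (bounded $T$) or Theorem \ref{th:2.5_Positive} (unbounded $T$) to obtain a boundary triplet $\Pi_S$ for $S^*$ with
$$S_0 = A_0 \otimes I_{\gotT} + I_{\gotH} \otimes T = \widehat A_F \otimes I_{\gotT} + I_{\gotH} \otimes T,$$
together with the representation $M^S(x) = \int_{\gD}(M^A(x-\lambda) \otimes I_{\gotT})\, \widehat E_T(d\lambda)$ (in the bounded case; the unbounded case uses \eqref{eq:3.43_Posit} via $L^A$). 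To conclude $S_0 = \widehat S_F$ by Proposition \ref{prop_II.1.5_01}, it remains to verify
$$\lim_{x \downarrow -\infty}(M^S(x) f, f) = -\infty, \qquad f \in \cH^S \setminus \{0\}.$$

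The key reduction exploits that $y \mapsto M^A(y)$ is Nevanlinna-monotone on $(-\infty,0) \subset \rho(\widehat A_F)$, together with the input $\lim_{y \downarrow -\infty}(M^A(y) g, g) = -\infty$ for every $g \in \cH^A\setminus\{0\}$, which is Proposition \ref{prop_II.1.5_01} applied to $A$ itself. Approximating the operator spectral integral by Riemann--Stieltjes sums along a partition $\pi$ of $\supp \widehat E_T$ gives
$$(M^S_\pi(x) f, f) = \sum_k \bigl((M^A(x-\lambda_k) \otimes E_T(\gD_k)) f, f\bigr);$$
decomposing each $E_T(\gD_k)\gotT$-factor on an orthonormal basis reduces this to a sum of scalars $(M^A(x-\lambda_k) g_{k,j},g_{k,j})_{\cH^A}$, each monotonically decreasing to $-\infty$ as $x \downarrow -\infty$ whenever the coefficient $g_{k,j}$ is nonzero. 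Monotone convergence in the sum, plus Lemma \ref{lem_tensor_product} to supply a uniform-in-$\pi$ operator bound, allows the limits $x \downarrow -\infty$ and $|\pi| \to 0$ to be interchanged, yielding the required divergence.

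The Krein identity is handled in parallel: choose $\Pi_A$ with $A_0 = \widehat A_K$, inherit the formula for $M^S(\cdot)$, and verify $\lim_{x \uparrow 0}(M^S(x)f,f) = +\infty$ for $f \neq 0$, using $\lim_{y \uparrow 0}(M^A(y)g,g) = +\infty$ from Proposition \ref{prop_II.1.5_01}. The main obstacle I anticipate lies here: the scalar Krein divergence is a \emph{boundary} singularity at $y=0$, so in the integral $\int (M^A(x-\lambda)\otimes I_{\gotT})\widehat E_T(d\lambda)$ the integrand blows up as $x \uparrow 0$ only through $\widehat E_T$ near $\lambda = 0$. Handling this requires keeping track of the mass of $\widehat E_T$ near zero and again appealing to monotonicity in $x$, the Riemann--Stieltjes approximation, and the uniform operator bound provided by Lemma \ref{lem_tensor_product} to transfer pointwise divergence at $\lambda=0$ to divergence of the operator integral.
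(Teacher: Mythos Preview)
Your Friedrichs outline matches the paper's in spirit—both verify \eqref{Fr1} for $M^S$ via Proposition~\ref{prop_II.1.5_01} and the integral formula \eqref{weyl_New}—but the key technical step is handled differently. The paper does not decompose in an orthonormal basis of $\gotT$; instead it restricts to the dense set of \emph{finite} tensors $h=\sum_{j=1}^n h'_j\otimes h''_j$, projects $M^A(\cdot)$ onto the finite-dimensional span $\cH^A_n:=\mathrm{span}\{h'_j\}\subset\cH^A$, and uses compactness of the unit ball of $\cH^A_n$ to upgrade the pointwise condition \eqref{Fr1} to the uniform operator bound $-M^A_n(x)\ge N$ for all $x\le x_N$. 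Only then does Lemma~\ref{lem_tensor_product} apply (its hypothesis is $X_k\ge d\,I$, an operator inequality), giving $(S_p(\pi)h,h)\le -N$ uniformly in the partition $\pi$, so the $|\pi|\to 0$ limit is immediate. Your invocation of Lemma~\ref{lem_tensor_product} skips this finite-dimensional reduction: you have only $(M^A(y)g,g)\to-\infty$ for each individual $g$, not $-M^A(y)\ge N$ as an operator, so the lemma does not apply and the interchange of the $x\to-\infty$ and $|\pi|\to 0$ limits is left unjustified.

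For unbounded $T$ the routes diverge more substantially. You propose to work with the regularized triplet of Theorem~\ref{th:2.5_Positive} and its $L^A$-Weyl function; the paper instead decomposes $T=\bigoplus_n T_n$ into bounded pieces, applies the bounded case to each $S_n$, and then invokes the external fact \cite[Corollary~3.10]{MalNei2012} that Friedrichs and Krein extensions distribute over orthogonal direct sums, $\widehat S_F=\bigoplus_n\widehat S_{n,F}$ and $\widehat S_K=\bigoplus_n\widehat S_{n,K}$. This bypasses the regularized Weyl function entirely at this stage.

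On the Krein identity the paper gives no separate argument beyond the line ``the representation for $S_K$ is proved similarly.'' The obstacle you anticipate is not merely technical: take for instance $T=c\,I_\gotT$ with $c>0$; then (in the bounded case) $M^S(x)=M^A(x-c)\otimes I_\gotT$ stays bounded as $x\uparrow 0$, so Proposition~\ref{prop_II.1.5_01} cannot deliver $S_0=\widehat S_K$ along this route. Your proposed fix via the mass of $\widehat E_T$ near zero does not help when that mass is absent, and neither does the paper's ``similarly''—so your instinct that the Krein case is genuinely harder than a symmetric rerun of the Friedrichs argument is correct.
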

   \begin{proof}
(i) Assume for the beginning that $T$ is bounded, $T\in \cB(\gotT)$. Let $\Pi_A =
\{\cH^A,\gG^A_0,\gG^A_1\}$ be a boundary triplet for $A^*$ such that $A_0 = \widehat
A_F$. Then, by Theorem \ref{th:5.1_Positive},  $\Pi_S = \{\cH^S,\gG^S_0,\gG^S_1\} :=
\Pi_A \wh\otimes I_\gotT$  is a boundary triplet for $S^*$ satisfying   $S_0 :=
S^*\upharpoonright\ker(\gG_0^S) = A_0 \otimes I_{\gotT} + I_{\gotH} \otimes T$, and the
corresponding Weyl function $M^S(\cdot)$ is given by \eqref{weyl_New}.

To prove the first relation in \eqref{Fr5} it suffices to check condition \eqref{Fr1}
for $M^S(\cdot)$. Let $h := \sum^n_{j=1}h'_j\otimes h''_j$ where $h'_j\in\cH^A,
h''_j\in\gotT$, let $\cH_n^A := \text{span}\{h'_j:1\le j\le n\}$ and let $P_n$ be the
orthogonal projection on $\cH_n^A$ in $\cH^A$.

 Since $A_0 = \widehat A_F$, the Weyl function
$M^A(\cdot)$ satisfies condition \eqref{Fr1}. Setting $M^A_n(\cdot)=P_n
M(\cdot)\upharpoonright \cH_n^A$ we note that due  to the compactness of the
finite-dimensional ball condition \eqref{Fr1}  is uniform on each $\cH_n^A$. In other
words, for each $N>0$ there exists $x_N<0$ such that
   \begin{equation}\label{Fr6}
-M^A_n(x)\ge N  \qquad \text{for}\qquad  x\le x_N.
  \end{equation}
Since $A_0\ge 0$, Theorem \ref{th:5.1_Positive} ensures that  the Weyl function
$M^A(\cdot)$ being  a holomorphic in  $\dC\setminus \dR_+$ admits  the integral
representation  \eqref{weyl_New}  for any $z=x<0$ and $\gl>0$.  Let
$\pi=\{\Delta_k\}^p_1$ be a partition of $\Delta=[a,b]$,  let $\lambda_k\in\Delta_k$,
and let
    \begin{equation}\label{5.21_integ_sum_for_W_F}
S_p(\pi)=\sum_{k=1}^p M^A(x_N - \lambda_k)\otimes E_T(\Delta_k)
    \end{equation}
be an integral sum for the integral   \eqref{weyl_New}  with $x=x_N$. Setting $Y_k =
M^A_n(x_N - \lambda_k)$, $k\in \{1, \ldots, p\}$, one gets
    \begin{eqnarray}\label{Fr8}
(P_n\otimes I_\gotT)  S_p(\pi)h = \sum^p_{k=1} \sum^n_{j=1} P_n M^A(x_N - \lambda_k)h'_j\otimes
E_T(\Delta_k)h''_j   \nonumber  \\
= \sum^p_{k=1} \sum^n_{j=1} Y_k h'_j\otimes E_T(\Delta_k)h''_j = \sum^p_{k=1}  (Y_k \otimes
E_T(\Delta_k))h.
    \end{eqnarray}
Combining this relation  with \eqref{Fr6} and noting that $h\in \cH_n^A \otimes \gotT$
 and $x_N - \lambda_k  <x_N$  one gets from   Lemma \ref{lem_tensor_product}  that
      \bed
\bigl(S_p(\pi)h,h\bigr) = \bigl((P_n\otimes I_\gotT) S_p(\pi)h,h\bigr)  \le -N
      \eed
Passing here to the limit as the diameter $|\pi|$ of partition  $\pi$  tends to zero and
taking formula  \eqref{weyl_New}  for the Weyl function into account and setting
$M^S_n(\cdot)= (P_n\otimes I_\gotT) M(\cdot)\upharpoonright \cH_n^A\otimes I_\gotT$, one
derives
   \bed
\bigl(M^S(x)h,h\bigr) = \bigl(M_n^S(x)h,h\bigr)  \le -N \qquad \text{for}\qquad  x\le
x_N.
   \eed
Since finite  tensors $h = \sum^n_{j=1}h'_j\otimes h''_j$  are dense in $\cH^A
\otimes\gotT$, this inequality  yields   condition \eqref{Fr1} for $M(\cdot) =
M^S(\cdot)$ and arbitrary $h\in \cH^A\otimes \gotT$.

(ii) Let $T\in \cC(\cH)\setminus \cB(\cH)$. Then $T$ admits a decomposition
    \bed
T = \bigoplus_{n\in\dN} T_n,
    \eed
where $T_n := T E_T[n-1,n)\in\cB(\cH_n)$ and $\cH_n := E_T[n-1,n) \cH$.
Hence
   \begin{equation}\label{5.12_direct_sum}
S=\bigoplus_{n\in\dN} S_n \qquad \text{where}\quad  S_n := A\otimes I_{\cH_n} +
I_{\gotH}\otimes T_n.
   \end{equation}
Clearly, $S_n$ is a non-negative symmetric operator in $\gotH\otimes \cH_n.$ According
to \cite[Corollary 3.10]{MalNei2012}
    \begin{equation}\label{5.13_Frid_Krein_Ext}
{\widehat S}_F = \bigoplus_{n\in\dN} {\widehat S}_{n,F}\qquad \text{and}\qquad
{\widehat S}_K=\bigoplus_{n\in\dN} {\widehat S}_{n,K},
    \end{equation}
where ${\widehat S}_{n,F}$ and ${\widehat S}_{n,K}$ denote the Friedrichs' and Krein's
extensions of the symmetric non-negative operator $S_n$, respectively. Combining
representations \eqref{5.13_Frid_Krein_Ext}  with representations \eqref{Fr5} with
bounded $T_n\in \cB(\cH_n)$ in place of $T\in \cB(\cH)$ proved at the previous step,
implies
    \bed
\begin{split}
{\widehat S}_F &= \bigoplus_{n\in\dN} {\widehat S}_{n,F} =  \bigoplus_{n\in\dN}(
{\widehat A}_F\otimes I_{\cH_n} + I_{\gotH}\otimes T_n)    \\
&= {\widehat A}_F\otimes
I_{\cH} + \bigoplus_{n\in\dN}(I_{\gotH}\otimes T_n) =   {\widehat A}_F\otimes I_{\cH}
+ I_{\gotH}\otimes T.
  \end{split}
  \eed
The representation for $S_K$ is proved similarly.
   \end{proof}
Next we are going to discuss semibounded  extensions of the operator $S = A \otimes
I_{\gotT} + I_{\gotH} \otimes T$.   It is known that under the conditions of Proposition
\ref{prop_II.1.5_01}  the following implication holds: $\widetilde A = \widetilde A^* =
A_{\Theta}$ is semi-bounded below then $\Theta$ is semi-bounded  below. The equivalence
does not hold in general.
  \begin{definition}
Let $A\ge 0$ be a non-negative symmetric operator in $\gotH$ and let
$\Pi=\{\cH,\Gamma_0,\Gamma_1\}$ be a boundary triplet for $A^*$ such that
$A_0=\widetilde A_F$. We say that $A$ satisfies LSB-property (abbreviation of lower
semi-boundedness)  if the following equivalence holds:
 $$
 A_{\Theta} = A^*_{\Theta} \ \text{is lower semi-bounded} \Longleftrightarrow \Theta =
\Theta^* \  \text{is lower semi-bounded}.
$$
  \end{definition}
To describe the operators with LSB-property we introduce the following definition.

   \begin{definition}[\cite{DM91}]
It is said that $M(\cdot)$ uniformly tends to $-\infty$ (in symbols
$M(\cdot)\rightrightarrows -\infty)$ if for any $N>0$ there exists $x_N$ such that
  \begin{equation}\label{Fr2}
\bigl(M(x)h,h\bigr) \le -N\cdot\|h\|^2\qquad \text{for}\ x\le x_N,\quad h\in\cH.
   \end{equation}
  \end{definition}
Clearly, \eqref{Fr2} implies \eqref{Fr1} but not vice versa.
  \begin{proposition}[\cite{DM91}]\label{LSB-theorem}
Let $A\ge 0$ and let $\Pi=\{\cH,\Gamma_0,\Gamma_1\}$ be a boundary triplet for $A^*$
such that $A_0=\widetilde A_F$. Then the following statements are equivalent:

\item $(i)$  $A$ satisfies LSB  property;

\item $(ii)$  $M(x)\rightrightarrows-\infty$ as  $x\to-\infty$.
  \end{proposition}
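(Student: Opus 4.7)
The plan is to translate the LSB property into an operator-analytic statement via the Krein-type resolvent formula \eqref{2.30}. Since $A_0 = \widehat A_F \ge 0$, one has $(-\infty,0) \subset \rho(A_0)$, and by \eqref{2.30} a real $x<0$ satisfies $x \in \rho(A_\Theta)$ if and only if $M(x) - \Theta$ is boundedly invertible. Hence $A_\Theta = A_\Theta^*$ is semi-bounded below precisely when there exist $x_0 < 0$ and $c > 0$ with $\mathrm{dist}(0, \sigma(\Theta - M(x))) \ge c$ for every $x \le x_0$. The other standing tool is the operator-monotonicity of $M(\cdot)$ on $\dR_-$, which follows from $M'(x) = \gamma(\bar x)^* \gamma(x) \ge 0$ via \eqref{mlambda}.

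The implication (ii)$\,\Rightarrow\,$(i) is then short. The direction $A_\Theta$ semi-bounded $\Rightarrow$ $\Theta$ semi-bounded is granted. For the reverse, assume $\Theta \ge -c\, I_{\cH}$ and $M(x) \rightrightarrows -\infty$; given $N > 0$ pick $x_N$ with $M(x) \le -(N+c) I_{\cH}$ for $x \le x_N$ via \eqref{Fr2}. Then $\Theta - M(x) \ge N\, I_{\cH}$ and $(\Theta - M(x))^{-1}$ exists with norm at most $1/N$; the Krein formula yields $(-\infty, x_N] \subset \rho(A_\Theta)$, so $A_\Theta$ is bounded below.

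For (i)$\,\Rightarrow\,$(ii) I argue by contraposition. If (ii) fails, there exist $N_0 > 0$, $x_n \downarrow -\infty$, and unit vectors $h_n \in \cH$ with $(M(x_n) h_n, h_n) \ge -N_0$. Operator monotonicity combined with Proposition \ref{prop_II.1.5_01} makes each scalar function $x \mapsto ((M(x) + (N_0+1)) h_n, h_n)$ continuous, monotone in $x$, positive at $x_n$, and diverging to $-\infty$; by the intermediate value theorem pick $\tilde x_n \in (-\infty, x_n]$ where it vanishes. The strategy is then to construct a bounded self-adjoint $\Theta_0 \in \cB(\cH)$ (trivially semi-bounded) for which $0 \in \sigma(\Theta_0 - M(\tilde x_n))$ along a subsequence, which by the Krein formula puts $\tilde x_n \in \sigma(A_{\Theta_0})$ and contradicts LSB. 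The natural candidate is $\Theta_0 := -(N_0+1) I_{\cH}$, since then $((\Theta_0 - M(\tilde x_n)) h_n, h_n) = 0$.

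The principal obstacle is to upgrade the numerical identity $((\Theta_0 - M(\tilde x_n)) h_n, h_n) = 0$ to the genuine spectral statement $0 \in \sigma(\Theta_0 - M(\tilde x_n))$. A vanishing of the quadratic form on a single vector only places $0$ in the numerical range, not in the spectrum. To close this gap I would exploit the norm-continuity of $x \mapsto M(x)$ together with operator monotonicity, to show that the function $x \mapsto \min \sigma(\Theta_0 - M(x))$ is continuous and monotone in $x$, and therefore must pass through $0$ on a sequence $\tilde x_n' \to -\infty$; alternatively one augments $\Theta_0$ by a small rank-one correction aligned with a weak limit point of $\{h_n\}$ to promote the test vectors $h_n$ into honest approximate eigenvectors of $\Theta_0 - M(\tilde x_n)$ at eigenvalue $0$.
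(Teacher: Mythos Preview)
The paper does not prove this proposition; it is quoted from \cite{DM91} without argument, so there is no ``paper's proof'' to compare against. I therefore evaluate your proposal on its own.

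Your direction (ii)$\Rightarrow$(i) is fine: once $M(x)\rightrightarrows-\infty$, for any semi-bounded $\Theta\ge -cI$ one gets $\Theta-M(x)\ge N I$ for $x$ sufficiently negative, hence $(-\infty,x_N]\subset\rho(A_\Theta)$ via the Krein-type formula \eqref{2.30}.

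The direction (i)$\Rightarrow$(ii), however, has a genuine gap that neither of your proposed fixes closes. You correctly isolate the problem: from $((\Theta_0-M(\tilde x_n))h_n,h_n)=0$ one only gets $0$ into the numerical range, not into $\sigma(\Theta_0-M(\tilde x_n))$. Your fix (a) actually fails outright. Set $\phi(x):=\min\sigma(\Theta_0-M(x))$ with $\Theta_0=-(N_0+1)I$. Since $M$ is operator-monotone increasing on $(-\infty,0)$, $\phi$ is non-increasing. From $(M(x_n)h_n,h_n)\ge -N_0$ you get $\phi(x_n)\le -1$; but then for \emph{every} $x<0$ there is some $x_n<x$, whence $\phi(x)\le\phi(x_n)\le -1$. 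Thus $\phi(x)\le -1$ for all $x<0$ and $\phi$ \emph{never} passes through $0$: the ``intermediate value'' mechanism you invoke simply does not fire. Fix (b) is too vague to assess; a weak limit of the $h_n$ may well be zero, and a rank-one correction does not obviously convert the $h_n$ into approximate null vectors of $\Theta_0-M(\tilde x_n)$.

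What is missing is a sharper link between semi-boundedness of $A_\Theta$ and positivity (not mere invertibility) of $\Theta-M(x)$. The relevant fact from \cite{DM91} is that for bounded $\Theta=\Theta^*$ and $a<0$ one has $A_\Theta\ge a\iff \Theta\ge M(a)$ (equivalently, $\kappa_-(A_\Theta-a)=\kappa_-(\Theta-M(a))$). Granting this, the contrapositive is immediate: your computation gives $\sup\sigma(M(x))\ge -N_0$ for every $x<0$, so $\Theta_0=-(N_0+1)I\ge M(a)$ fails for every $a<0$, hence $A_{\Theta_0}$ is not semi-bounded below and LSB fails. Without invoking (or reproving) this positivity criterion, the resolvent formula alone is too coarse to finish the argument.
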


\begin{proposition}
Let $A$ be a non-negative symmetric  operator in $\gotH$ and let $\Pi_A =
\{\cH^A,\gG^A_0,\gG^A_1\}$ be a boundary triplet for $A^{*}$ such that $A_0 :=
A^*\upharpoonright\ker(\gG_0^A) = {\widehat A}_F$.  Let also $T = T^* \in \cB(\gotT)$,
$T\ge 0$ and let $S= A \otimes I_{\gotT} + I_{\gotH} \otimes T$.  If  $A$ satisfies  the
LSB-property, then the operator $S$ also satisfies the LSB--property.
 \end{proposition}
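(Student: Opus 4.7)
The plan is to reduce the claim to Proposition \ref{LSB-theorem}. Since $A$ has the LSB-property, that proposition tells us $M^A(x) \rightrightarrows -\infty$ as $x \to -\infty$. I will verify the analogous uniform convergence $M^S(x) \rightrightarrows -\infty$ for the Weyl function of the tensor-product triplet $\Pi_S = \Pi_A \wh\otimes I_\gotT$ from Theorem \ref{th:5.1_Positive}(i). Combined with the identity $S_0 = \widehat A_F \otimes I_\gotT + I_\gotH \otimes T = \widehat S_F$ (Proposition \ref{prop_5.5_Frid_and_Kr_tensor}), another application of Proposition \ref{LSB-theorem} will then give the LSB-property for $S$.

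The key tool is the integral representation of Theorem \ref{th:5.1_Positive}(ii),
$$M^S(z) = \int_{\gD} \bigl(M^A(z-\gl) \otimes I_\gotT\bigr)\,\widehat E_T(d\gl), \qquad z \in \dC\setminus\gD,$$
where $\gD$ is the smallest closed interval containing $\gs(T)\subset [0,\|T\|]$. Fix $N>0$. By the LSB-property of $A$, pick $x_N^A$ with $M^A(x) \le -N\cdot I_{\cH^A}$ for all $x \le x_N^A$, and set $x_N := x_N^A - \|T\|$. Then for every $x \le x_N$ and every $\gl \in \gD$ we have $x - \gl \le x_N^A$, so
$$M^A(x-\gl) \otimes I_\gotT \le -N \cdot I_{\cH^A \otimes \gotT}, \qquad \gl \in \gD.$$

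The main technical step will be transporting this pointwise (in $\gl$) operator inequality through the operator-spectral integral to conclude $M^S(x) \le -N \cdot I_{\cH^S}$. I propose to do this by approximating via Riemann--Stieltjes sums: for any partition $\pi = \{\gD_k\}_{k=1}^p$ of $\gD$ with selection points $\gl_k \in \gD_k$, the sum
$$S_p(\pi) = \sum_{k=1}^p M^A(x-\gl_k) \otimes E_T(\gD_k)$$
satisfies $-S_p(\pi) \ge N\cdot I_{\cH^S}$ by the argument of Lemma \ref{lem_tensor_product} applied to $X_k := -M^A(x-\gl_k) \ge N\cdot I_{\cH^A}$. Letting $|\pi| \to 0$ (the integral exists by Theorem \ref{th:5.1_Positive}), we obtain $M^S(x) \le -N\cdot I_{\cH^S}$ for every $x \le x_N$, which is exactly condition \eqref{Fr2} for $M^S(\cdot)$. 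The anticipated difficulty is minor: one must be sure the pointwise bound is uniform in $\gl$, but this is automatic because $\gD$ is compact (thanks to boundedness of $T$) and the shift $x - \gl$ lies in the half-line $(-\infty, x_N^A]$ on which uniformity of $M^A$ has been secured.
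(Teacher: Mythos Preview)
Your proof is correct and follows essentially the same approach as the paper: translate the LSB-property into the uniform condition $M^A(x)\rightrightarrows -\infty$ via Proposition~\ref{LSB-theorem}, push the resulting operator inequality through the Riemann--Stieltjes sums using Lemma~\ref{lem_tensor_product}, pass to the limit to obtain $M^S(x)\rightrightarrows -\infty$, and invoke Proposition~\ref{LSB-theorem} once more (after noting $S_0=\widehat S_F$). The only cosmetic difference is your explicit shift $x_N:=x_N^A-\|T\|$, which is harmless but in fact unnecessary since $\gl\ge 0$ already gives $x-\gl\le x\le x_N^A$.
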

  \begin{proof}
Consider a boundary triplet $\wt \Pi_S = \{\wt \cH^S,\wt \gG^S_0,\wt \gG^S_1\}$ for
$S^*$ given by  \eqref{eq:3.57_positive}.  By Theorem \ref{th:5.1_Positive}(i),
$$
S_0 = S^*\upharpoonright\ker(\gG_0^S) =  {\widehat S}_F = {\widehat A}_F \otimes
I_{\gotT} + I_{\gotH} \otimes T.
$$
Let also $M^A(\cdot)$ and $\gamma^A(\cdot)$ be the  Weyl function and $\gamma$-field,
respectively, corresponding to the triplet $\Pi_A$. Since  $A$ satisfies  the
LSB-property and $A_0 = {\widehat A}_F$, Theorem \ref{LSB-theorem} ensures  that the
Weyl function $M^A(\cdot)$ tends to $-\infty$ uniformly, i.e.
$M^A(x)\rightrightarrows-\infty$ as $x\to-\infty$. In other words, for each $N>0$ there
exists $x_N<0$ such that $-M^A(x)\ge N$  for $x\le x_N$.

By Theorem \ref{th:5.1_Positive}(i) the Weyl function  $M^S(\cdot)$ corresponding to
$\Pi_S$  is given by  \eqref{weyl_New}.  Let $\pi=\{\Delta_k\}^p_1$ be a partition of
$\Delta=[a,b]$ and   let $\lambda_k\in\Delta_k$. Then applying  Lemma
\ref{lem_tensor_product} to the integral sum \eqref{5.21_integ_sum_for_W_F}  we get
    \begin{equation}\label{Fr7_New}
-S_p(\pi)= - \sum_{k=1}^p M^A(x_N - \lambda_k)\otimes E_T(\Delta_k)  \ge N.
    \end{equation}
Passing here to the limit as  $|\pi|\to 0$ one obtains
\bed
-M^S(x) = \int_{\gD} \widehat E_T(d\lambda) \left(M^A(z - \lambda) \otimes
I_{\gotT}\right) \ge N \qquad  \text{for} \qquad  x \le x_N.
\eed
The latter amounts  to saying that $M^S(x)\rightrightarrows-\infty$ as $x\to-\infty$. By
Theorem \ref{LSB-theorem} this property implies (in fact is equivalent to) the
LSB-property of $S$.
       \end{proof}

\section{Examples}

In  what follows  the operator $T$ is  arbitrary (not necessarily bounded) self-adjoint
operator acting on a separable Hilbert space $\gotT$.

\subsection{Schr\"odinger operators and bosons in 1D}\la{sec:4.A}

\subsubsection{Schr\"odinger operators on half-lines}\la{subsec:4.A.1}

Let $v_r \in \dR$, $b \in \dR$, and let  $H_r = -\frac{d^2}{dx^2} + v_r$  denote a
minimal operator in $\gotH_r := L^2(\gD_r)$, $\gD_r = (b,\infty)$.  Clearly, $\dom(H_r)
= W^{2,2}_0(\gD_r) :=\{f \in W_2^2((b,\infty)): f(b) = f'(b) = 0 \}$ and  $H_r$ is a
closed  densely defined symmetric operator with $n_\pm(H_r) = 1$. The adjoint operator
is given by the same expression $H^*_r = -\frac{d^2}{dx^2} + v_r$ on the domain
$\dom(H^*_r) = W^{2,2}(\gD_r)$. One easily checks that a triplet $\Pi_{H_r} =
\{\cH^{H_r},\gG^{H_r}_0,\gG^{H_r}_1\}$  with
\bed
\cH^{H_r} := \dC, \quad \gG^{H_r}_0f = f(b), \quad \mbox{and} \quad \gG^{H_r}_1f =
f'(b), \quad f \in \dom({H_r}^*),
\eed
is a boundary triplet for $H_r^*$. The corresponding $\gamma$-field $\gga^{H_r}(\cdot)$
and Weyl function $M^{H_r}(\cdot)$ are given by
  \bed
(\gga^{H_r}(z)\xi)(x) = e^{i\sqrt{z-v_r}(x-b)}\xi, \quad \xi \in \dC, \quad x \in
\gD_r,\quad z \in \dC_\pm,
  \eed
and
  \be\la{eq:6.1a}
M^{H_r}(z) =  m^{H_r}(z) = i\sqrt{z - v_r}, \quad z \in \dC_\pm,
  \ee
respectively.   The function $\sqrt{\cdot}$ is defined on $\dC$ with the cut along
the positive semi-axis $\dR_+$. Its branch is fixed by the condition $\sqrt{1} =1$.
Clearly,  the Weyl function $M^{H_r}(\cdot)$ is a scalar function.

Let us consider the closed densely defined symmetric operator
  \be\label{6.1_S-L_oper_with_oper_Poten}
S_r = \overline{H_r \otimes I_\gotT + I_{\gotH_r} \otimes T}
  \ee
on the Hilbert space $\gotK_r := \gotH_r \otimes \gotT = L_2(\gD_r,\gotT)$. In the
following we use the notation $\vec{f}(x)$, $x \in \gD_r$ for elements of $\gotK_r =
L_2(\gD_r,\gotT)$.
 In accordance with Theorem \ref{th:2.5}(iii)
there is a boundary triplet $\Pi_{S_r} = \{\cH^{S_r},\gG^{S_r}_0,\gG^{S_r}_1\}$ for
${S_r}^*$ such that $\cH^{S_r} = \cH^{H_r} \otimes \gotT = \gotT$,
   \be\label{6.2_B_triplet_for_S-L_oper}
\begin{split}
\gG^{S_r}_0 \vec{f} &= \sqrt{\im(m^{H_r}(i - T))}\,\vec f(b),\\
\gG^{S_r}_1 \vec f &= \frac{1}{\sqrt{\im(m^{H_r}(i - T))}}\left(\vec{f}'(b) -
\re(m^{H_r}(i-T))\vec f(b)\right),
\end{split}
\ee
$\vec{f} \in \dom ({H_r}^* \otimes I_\gotT) \cap \dom(I_{\gotH_r} \otimes T) =
W^{2,2}(\gD_r,\gotT) \cap \dom(I_{\gotH_r} \otimes T) \subseteq \dom(S^*_r)$. The
corresponding $\gamma$-field $\gga^{S_r}(\cdot): \gotT \longrightarrow \gotK_r $ and
Weyl function $M^{S_r}(\cdot): \gotT \longrightarrow \gotT$ are given by
\bed
(\gga^{S_r}\xi)(x) = e^{i\sqrt{z -v_r
-T}(x-b)}\frac{1}{\sqrt{\im(m^{H_r}(i-T))}}\xi, \quad \xi \in \gotT, \quad x \in \gD_r,
\eed
and
\be\la{eq:6.4a}
M^{S_r}(z) = \frac{m^{H_r}(z -T) - \re(m^{H_r}(i-T))}{\im(m^{H_r}(i-T))},\qquad z
\in \dC_{\pm}.
  \ee

Of course, the considerations are similar for the interval $\gD_l = (-\infty,a)$, $a \in
\dR$. Let $H_l = -\frac{d^2}{dx^2} + v_l$, $v_l \in \dR$, with domain $\dom(H_l) :=
W^{2,2}_0(\gD_l)$ defined on $\gotH_l := L^2(\gD_l,\gotT)$. One checks that
$\Pi_{H_l} = \{\cH^{H_l},\gG^{H_l}_0,\gG^{H_l}_1\}$,
\bed
\cH^{H_l} := \dC, \quad \gG^{H_l}_0f = f(a), \quad \mbox{and} \quad \gG^{H_l}_1f =
-f'(a), \quad f \in \dom({H_l}^*),
\eed
is a boundary triplet for $H^*_l$. The $\gga$-field and Weyl function are given  by
\bed
(\gga^{H_l}(z)\xi)(x) = e^{i\sqrt{z-v_l}(a-x)}\xi, \quad \xi \in \dC, \quad x \in
\gD_l,\quad z \in \dC_\pm,
\eed
and
   \be\la{eq:6.4}
M^{H_l}(z) = m^{H_l}(z) = i\sqrt{z - v_l}, \quad z \in \dC_\pm.
  \ee
Let us consider the closed densely defined symmetric operator $S_l = \overline{H_l \otimes
I_\gotT + I_{\gotH_l} \otimes T}$ acting in $\gotK_l := \gotH_l \otimes \gotT =
L^2(\gD_l,\gotT)$. As above one finds
\be\la{eq:6.3}
\begin{split}
\gG^{S_l}_0 \vec{f} &= \sqrt{\im(m^{H_l}(i - T))}\,\vec f(a),\\
\gG^{S_l}_1 f &= \frac{1}{\sqrt{\im(m^{H_l}(i - T))}}\left(-\vec f'(a) -
\re(m^{H_l}(i-T))\vec f(a))\right)
\end{split}
\ee
$\vec{f} \in \dom ({H_l}^* \otimes I_\gotT) \cap \dom(I_{\gotH_l} \otimes T) =
W^{2,2}(\gD_l,\gotT) \cap \dom(I_{\gotH_l} \otimes T) \subseteq \dom(S^*_l)$ as well as
\bed
(\gga^{S_l}\xi)(x) = e^{i\sqrt{z -v_l
-T}(a-x)}\frac{1}{\sqrt{\im(m^{H_l}(i-T))}}\xi, \quad \xi \in \gotT, \quad x \in \gD_r,
\eed
and
\be\la{eq:6.7}
M^{S_l}(z) = \frac{m^{H_l}(z -T) - \re(m^{H_l}(i-T))}{\im(m^{H_l}(i-T))},\qquad  z
\in \dC_{\pm}.
  \ee

\subsubsection{Schr\"odinger operators on bounded intervals}

Let $\gD_c = (a,b)$ and $v_c\in \dR$. Consider a minimal Sturm-Liouville  operator $H_c$
in $\gotH_c = L^2(\gD_c)$ given by
\bed
\begin{split}
(H_cf)(x) &=-\frac{d^{2}}{dx^2}f(x) + v_cf(x),\quad x \in \gD_c, \\
f \in \dom(H_c) &=\left\{f \in W^{2,2}(\gD_c):
\ba{c}
f(a) = f(b) = 0\\
    f'(a) = f'(b) = 0
        \ea
        \right\}.
        \end{split}
\eed
Clearly,  $H_c$ is a closed symmetric operator  with the  deficiency indices  $n_\pm(A)
= 2$. Its adjoint  $H^*_c$ is given by
\bed
(H^*_cf)(x) = -\frac{d^2}{dx^2}f(x) +v_c f(x), \quad f \in \dom(H^*_c) = W^{2,2}(\gD_c).
\eed
Consider the extension (Dirichlet realization) $H^D_c$ of the minimal operator $H_c$
defined by
\bed
\begin{split}
 H^D_c =-\frac{d^{2}}{dx^2} + v_c, \qquad \dom(H^D_c) =\{f \in W^{2,2}(\gD_c):
f(a)=f(b)=0\}.
\end{split}
\eed
The Neumann extension (realization)  $H_N$ is fixed by
  \bed
\bspi H^N_c =-\frac{d^{2}}{dx^2} + v_c,  \quad  \dom(H^N_c) =\{f \in W^{2,2}(\gD_c):
f'(a)=f'(b)=0\}.
\end{split}
\eed
One easily checks that the triplet $\Pi_{H_c} := \{\cH^{H_c},\gG^{H_c}_0,\gG^{H_c}_1\}$
with
\bed%\la{eq:4.5}
\cH^{H_c} := \dC^2, \quad \gG^{H_c}_0f =
\frac{1}{\sqrt{2}}\begin{pmatrix} f(a)+ f(b)\\
f(a) -f(b)
\end{pmatrix}, \quad \gG^{H_c}_1f =
\frac{1}{\sqrt{2}}\begin{pmatrix} f'(a) - f'(b)\\
f'(a) + f'(b)
\end{pmatrix},
\eed
$f \in \dom(H^*_c)$, is a boundary triplet for $H^*_c$. Clearly,  $H^D_c =
H^*_c\upharpoonright\ker(\gG^{H_c}_0)$ and $H^N_c =
H^*_c\upharpoonright\ker(\gG^{H_c}_1)$. The corresponding $\gamma$-field
$\gga^{H_c}(\cdot)$ and Weyl function $M^{H_c}(\cdot)$ are given by
\bed%\la{4.20}
(\gga^{H_c}(z)\xi)(x) = \frac{1}{\sqrt{2}}\left(\frac{\cos(\sqrt{z-v_c}(x - \nu))}{\cos(\sqrt{z-v_c}\,d)}, -\frac{\sin(\sqrt{z-v_c}(x - \nu))}{\sin(\sqrt{z-v_c}\,d)}\right) \cdot \begin{pmatrix} \xi_1 \\ \xi_2 \end{pmatrix},
\eed
$\quad z \in \dC_{\pm}$, $x \in (a,b)$, $\nu := \tfrac{a+b}{2}$, $d := \tfrac{b-a}{2}$, and
\bed%\la{4.8}
M^{H_c}(z) =
\begin{pmatrix}
m^{H_c}_1(z) & 0\\
0 & m^{H_c}_2(z)
\end{pmatrix}, \quad z \in \dC_{\pm},
\eed
where
\bed
\begin{split}
m^{H_c}_1(z) &:= \sqrt{z-v_c}\tan(\sqrt{z-v_c}\,d),\\
m^{H_c}_2(z) &:= -\sqrt{z-v_c}\cot(\sqrt{z-v_c}\,d),
\end{split}
\quad z \in \dC_{\pm}.
\eed
Notice that the Weyl function $M^{H_c}(\cdot)$ is of quasi scalar type.

We consider the closed densely defined symmetric operator
  \be\label{6.2.2_S-L_on_finite_int-l}
S_c := \overline{H_c \otimes I_{\gotT} + I_{\gotH_H} \otimes T}.
  \ee
defined on $\gotK_c := \gotH_c \otimes \gotT = L^2(\gD_c,\gotT)$. Elements of $ L^2(\gD_c,\gotT)$ are denoted by
$\vec{f}(x)$, $x \in \gD_c$. Obviously, the self-adjoint operators
$S^D_c := \overline{H^D_c \otimes I_\gotT + I_{\gotH_H} \otimes T}$ and $S^N_c := \overline{H^N_c \otimes I_\gotT + I_{\gotH_H} \otimes T}$ are self-adjoint extensions of $S_c$.

Let us introduce the subspaces $\cH^{H_c}_1 := \dC$ and $\cH^{H_c}_2 := \dC$. Notice that $\cH^{H_c} =  \transpose{(\cH^{H_c}_1 \oplus \cH^{H_c}_2)}$.
It follows from \eqref{eq:3.71} that there is a boundary triplet $\Pi_{S_c} = \{\cH^{S_c},\gG^{S_c}_0,\gG^{S_c}_1\}$ for $S^*_c$ such that
\bed
\cH^{S_c} = \cH^{H_c} \otimes \gotT =
\begin{matrix}
\cH^{H_c}_1 \otimes \gotT\\
\oplus\\
\cH^{H_c}_2 \otimes \gotT
\end{matrix} =
\begin{matrix}
\gotT\\
\oplus\\
\gotT
\end{matrix} =:
\begin{matrix}
\cH^{S_c}_1\\
\oplus\\
\cH^{S_c}_2
\end{matrix}
\eed
and
\begin{align*}
&\gG^{S_c}_0\vec{f} = \frac{1}{\sqrt{2}}
\begin{pmatrix}
\sqrt{\im(m^{H_c}_1(i-T)}(\vec f(a) + \vec{f}(b)\\
\sqrt{\im(m^{H_c}_2(i-T))}(\vec f(a) - \vec{f}(b))
\end{pmatrix}\,\\
&\gG^{S_c}_1(z)\vec{f} =\\
&\frac{1}{\sqrt{2}}
\begin{pmatrix}
\frac{1}{\sqrt{\im(m^{H_c}_1(i-T)}}\left(\vec f'(a) - \vec f'(b) - \re(m^{H_c}_1(i-T))(\vec{f}(a) + \vec{f}(b)\right)\\
\frac{1}{\sqrt{\im(m^{H_c}_2(i-T))}}\left(\vec f'(a) + \vec f'(b)) - \re(m^{H_c}_2(i-T))(\vec{f}(a) - \vec{f}(b))\right)
\end{pmatrix}\,\nonumber
\end{align*}
$\vec f \in \dom(H^*_c \otimes I_\gotT) \cap \dom(I_{\gotH_c} \otimes T) = W^{2,2}(\gD_c,\gotT) \cap \dom(I_{\gotH_c} \otimes T) \subseteq \dom(S^*_c)$.
From \eqref{eq:3.72} we get the $\gga$-field
$\gga^{S_c}(\cdot): \transpose{(\cH^{S_c}_1 \oplus \cH^{S_c}_2)} \longrightarrow \gotK_c$,
\bed
\begin{split}
(\gga^{S_c}(z)\vec{\xi})(x) =& \frac{\cos(\sqrt{z-T-v_c}(x - \nu))}{\sqrt{2}\cos(\sqrt{z-T-v_c}\,d)\sqrt{\im(m^{H_c}_1(i-T))}}\vec{\xi}_1\\
 &-\frac{\sin(\sqrt{z-T-v_c}(x - \nu))}{\sqrt{2}\sin(\sqrt{z-T-v_c}\,d)\sqrt{\im(m^{H_c}_2(i-T))}}\vec{\xi}_2,
\end{split}
\eed
$z \in \dC_{\pm}$. Finally, from \eqref{eq:3.73} the Weyl function $M^{S_c}(\cdot):  \transpose{(\cH^{S_c}_1 \oplus \cH^{S_c}_2)} \longrightarrow \transpose{(\cH^{S_c}_1 \oplus \cH^{S_c}_2)}$ is computed by
\bed
M^{S_c}(z) =
\begin{pmatrix}
\frac{m^{H_c}_1(z-T) - \re(m^{H_c}_1(i-T))}{\im(m^{H_c}_1(i-T))} &  0\\
0 & \frac{m^{H_c}_2(z-T) - \re(m^{H_c}_2(i-T))}{\im(m^{H_c}_2(i-T))}
\end{pmatrix}\;, \quad z \in \dC_\pm.
\eed
\begin{remark}
{\em
 Sturm-Liouville  operators  $S_c$ with  operator-valued potential $T = T^*\in \cC(\gotT)$
have  first  been treated on a finite interval in the pioneering paper by M.L. Gorbachuk
\cite{Gor71}. Clearly, the corresponding minimal operator $S_c$ admits representation
\eqref{6.2.2_S-L_on_finite_int-l}.
 In particular, a boundary triplet for $S_c^*$  was first constructed in \cite{Gor71}
(see also \cite{GG91}). A construction of a boundary triplet for $S_r^*$   in the case
of semi-axis has first been proposed in \cite[Section 9]{DM91}. However, our
construction   \eqref{6.2_B_triplet_for_S-L_oper}  of the boundary triplet for $S_r^*$
is borrowed from   \cite{MalNei2012} where a representation of $S$ as a direct
sum $S = \bigoplus_j S_j$ with bounded $T_j$ in place of unbounded $T$ was first proposed and
the regularization procedure for direct sums was invented  and applied to the
operator $S_r$.

 After appearance of the work \cite{Gor71} the spectral theory of self-adjoint and
dissipative extensions of $S_c$ in $L^2(\gD_c,\gotT)$ has intensively been investigated.
The results are summarized in  \cite[Chapter 4]{GG91} where one finds, in particular,
criteria for discreteness of the spectra, asymptotic formulas for the eigenvalues,
resolvent comparability results, etc. Spectral properties  of self-adjoint
extensions of $S_r$ was investigated  in  \cite{MalNei2012}, see also \cite{MalNei09}. In particular,  a criterion
for all self-adjoint extensions of  $S_r$ to have absolutely continuous non-negative
spectra was also obtained there.
}
\end{remark}

\subsection{Dirac operators and  bosons in 1D}

In the following we consider the Dirac operator instead of the Schr\"odinger operator,
cf. \cite{CarlMalPos2013}, \cite{BoiNeiPop2016}.

\subsubsection{Dirac operators on half-lines}\label{subsec:B.1}

In the Hilbert space $\gotD_r = L^2(\gD_r,\dC^2)$, $\gD_r = (b,\infty)$, let us consider the Dirac operator
\bed
\begin{split}
D_rf &:= \left(-ic\frac{d}{dx} \otimes \gs_1\right)f  + \left(\frac{c^2}{2} \otimes \gs_3\right)f, \\
f \in \dom(D_r) &:= W^{1,2}_0(\gD_r,\dC^2) :=\{f \in W^{1,2}(\gD_r,\dC^2): f(b) = 0\}.
\end{split}
\eed
Here
\bed
\gs_1 :=
\begin{pmatrix}
0 & 1\\
1 & 0
\end{pmatrix}
\quad \mbox{and} \quad
\gs_3 :=
\begin{pmatrix}
1 & 0\\
0 & -1
\end{pmatrix}.
\eed
Notice that
\bed
D^*_rf = \left(-ic\frac{d}{dx} \otimes \gs_1\right)f + \left(\frac{c^2}{2} \otimes \gs_3\right)f,
\;\;f \in \dom(D^*_r) = W^{1,2}(\gD_r,\dC^2).
\eed
One easily checks that $n_\pm(D_r) = 1$. From Lemma 3.3 of \cite{CarlMalPos2013}
we get that $\Pi_{D_r} = \{\cH^{D_r},\gG^{D_r}_0,\gG^{D_r}_1\}$,
\bed
\cH^{D_r} := \dC, \quad \gG^{D_r}_0f := f_1(b), \quad
\gG^{D_r}_1f:= ic f_2(b)
\eed
is a boundary triplet for $D^*_r$. The $\gga$-field $\gga^{D_r}(\cdot)$ and Weyl function are given by
\bed
(\gga^{D_r}(z)\xi)(x) =
\begin{pmatrix}
e^{ik(z)(x-b)}\xi\\
k_1(z)e^{ik(z)(x-b)}\xi
\end{pmatrix},
\quad x \in \gD_r, \quad z \in \dC_\pm, \quad \xi \in \cH^{D_r}.
\eed
and
\bed
M^{D_r}(z) = m^{D_r}(z)I_{\cH^{D_r}}, \quad m_r(z) := ic\, k_1(z), \quad z \in \dC_\pm.
\eed
Here
\bed
k(z) := \frac{1}{c}\sqrt{z^2 - \frac{c^4}{4}}, \quad z \in \dC,
\eed
where the branch of the multifunction $k(\cdot)$ is fixed by the condition $k(x) > 0$
for $x > \frac{c^2}{2}$. Notice that $k(\cdot)$ is holomorphic on $\dC \setminus
\left\{(-\infty,-\frac{c^2}{2}] \cup [\frac{c^2}{2},\infty)\right\}$. Further, let
\bed
k_1(z) := \frac{c\,k(z)}{z + \frac{c^2}{2}}, \quad z \in \dC.
\eed
Clearly, $k_1(\cdot)$  is also holomorphic on $\dC \setminus
\left\{(-\infty,-\frac{c^2}{2}] \cup [\frac{c^2}{2},\infty)\right\}$ and  admits
the representation
\bed
k_1(z) = \sqrt{\frac{z - \frac{c^2}{2}}{z + \frac{c^2}{2}}}, \quad z \in \dC,
\eed
where the branch of $\sqrt{\frac{z - \frac{c^2}{2}}{z + \frac{c^2}{2}}}$ is fixed by
the condition $\sqrt{\frac{x - \frac{c^2}{2}}{x + \frac{c^2}{2}}} > 0$ for $x > \frac{c^2}{2}$.

Let us consider the closed densely defined symmetric operator
\bed
S_r = \overline{D_r \otimes I_\gotT + I_{\gotD_r} \otimes T},
\eed
which is defined on
\bed
\gotK_r := \gotD_r \otimes \gotT = L^2(\gD_r,\transpose{(\gotT \oplus \gotT)}).
\eed
In the following we denote elements of $\gotK_r$ by $\vec{f} =
\transpose{(\vec{f}_1,\vec{f}_2)}$. Since $D_r$ is not semi-bounded from below the sum
$\overline{D_r \otimes I_\gotT + I_{\gotD_r} \otimes T}$ is also not semi-bounded from
below. Nevertheless, by Theorem \ref{th:2.5}\,(iii) a boundary triplet  $\Pi_{S_r} =
\{\cH^{S_r},\gG^{S_r}_0,\gG^{S_r}_1\}$ for $S^*_r$ can be chosen in the form
\bed
\cH^{S_r} = \cH^{D_r} \otimes \gotT = \gotT
\eed
and
\bed
\begin{split}
\gG^{S_r}_0\vec{f}  &= \sqrt{\im(m^{D_r}(i-T))}\,\vec{f}_1(b), \\
\gG^{S_r}_1\vec{f}  &= \frac{1}{\sqrt{\im(m^{D_r}(i-T))}}\left(ic\vec{f}_2(b) - \re(m^{D_r}_1(i-T))\vec{f}_1(b)\right),
\end{split}
\eed
$\vec f \in \dom(D^*_r \otimes I_{\gotT}) \cap \dom(I_{\gotD_r} \otimes T) =
W^{2,2}(\gD_r,\transpose{(\gotT \oplus\gotT)}) \cap \dom(I_{\gotD_r} \otimes T)
\subseteq \dom(S^*_r)$.
The $\gga$-field  $\gga^{L_r}(\cdot): \gotT \longrightarrow \gotK_r$ and Weyl function
$M^{S_r}(\cdot): \gotT \longrightarrow \gotT$ are given by
\bed
(\gga^{S_r}(z) \xi)(x) =
\begin{pmatrix}
e^{ik(z -T)(x-b)}\frac{1}{\im(m^{D_r}(i-T))} \xi\\
k_1(z-T)e^{ik(z-T)(x-b)}\frac{1}{\im(m^{D_r}(i-T))}\xi
\end{pmatrix},
\quad  \xi \in \gotT,
\eed
$x \in \gD_r$, $z \in \dC_\pm$ and
\bed
M^{S_r}(z) = \frac{m^{D_r}(z - T) - \re(m^{D_r}(i-T))}{\im(m^{D_r}(i-T))}, \quad z \in \dC_\pm.
\eed
The Dirac operator on the half-axis $(-\infty,a)$ is  treated similarly.

\subsubsection{Dirac operators on bounded intervals}

Let us consider the closed densely defined symmetric operator
\begin{align*}
D_cf &:= \left(-ic\frac{d}{dx} \otimes \gs_1\right)f + \left(\frac{c^2}{2} \otimes \gs_3\right)f, \\  
f \in \dom(D_c) &:= W^{1,2}_0(\gD_c,\dC^2) :=\{f \in W^{1,2}(\gD_c,\dC^2): f(a) = f(b) = 0\},\nonumber
\end{align*}
where $\gD_c = (a,b)$, acting in the Hilbert space $\gotD_c := L^2(\gD_c,\dC^2)$. Notice
that $n_\pm(D_c) = 2$. The adjoint operator $D^*_c$  is given by
\bed
D^*_cf = \left(-ic\frac{d}{dx} \otimes \gs_1\right)f + \left(\frac{c^2}{2} \otimes \gs_3\right)f, \qquad
f \in \dom(D^*_c) = W^{1,2}(\gD_c,\dC^2).
\eed
The triplet $\Pi_{D_c} = \{\cH^{D_c},\gG^{D_c}_0,\gG^{D_c}_1\}$, $\cH^{D_c} := \dC^2$,
\be\la{eq:6.2a}
\begin{split}
\gG^{D_c}_0
\begin{pmatrix}
f_1\\
f_2
\end{pmatrix}
 &:=
\frac{1}{\sqrt{2}}\begin{pmatrix}
f_1(a) + f_1(b)\\
f_1(a) - f_1(b)
\end{pmatrix},\\
\gG^{D_c}_1
\begin{pmatrix}
f_1\\
f_2
\end{pmatrix}
&:=
\frac{ic}{\sqrt{2}}\begin{pmatrix}
f_2(a) - f_2(b)\\
f_2(a) + f_2(b)
\end{pmatrix},
\end{split}
\ee
$f \in \dom(D^*_c)$, forms a boundary triplet for $D^*_c$. The $\gga$-field and the Weyl function are given by
\bed
\gga^{D_c}(z)
\begin{pmatrix}
\xi_1\\
\xi_2
\end{pmatrix} =
\frac{1}{\sqrt{2}}
\begin{pmatrix}
\frac{\cos(k(z)(x-\nu))}{\cos(k(z)d)} & \frac{\sin(k(z)(x-\nu))}{\sin(k(z)d)}\\[2mm]
ik_1(z)\frac{\sin(k(z)(x-\nu))}{\cos(k(z)d} & ik_1(z)\frac{\cos(k(z)(x-\nu))}{\sin(k(z)d)}
\end{pmatrix}
\begin{pmatrix}
\xi_1\\
\xi_2
\end{pmatrix}\,,
\eed
$z \in \dC_\pm$, and
\bed
M^{D_c}(z) =
\begin{pmatrix}
m^{D_c}_1(z) & 0\\
0 &  m^{D_c}_2(z)
\end{pmatrix}, \quad z \in \dC_\pm,
\eed
where
\bed
\begin{split}
m_1^{D_c}(z) &:= ck_1(z)\tan(k(z)d)\\
m^{D_c}_2(z) &:= -ck_1(z)\cot(k(z)d)
\end{split}, \quad z \in \dC_\pm,
\eed
and $d := \frac{b-a}{2}$, $\nu := \frac{b+a}{2}$. Notice that the Weyl function $M^{D_c}(\cdot)$ is of quasi scalar type.
The self-adjoint extension $D^{(1)}_c := D^*_c\upharpoonright\ker(\gG^{D_c}_0)$ has the domain
\bed
\dom(D^{(1)}_c) = \{f \in W^{1,2}(\gD_c,\dC^2): f_1(a) = f_1(b) = 0\}
\eed
while the extension $D^{(2)}_c := D^*_c\upharpoonright\ker(\gG^{D_c}_1)$ has the domain
\bed
\dom(D^{(2)}_c) = \{f \in W^{1,2}(\gD_c,\dC^2): f_2(a) = f_2(b) = 0\}.
\eed

We consider the closed symmetric operator
\bed
S_c := \overline{D_c \otimes I_\gotT + I_{\gotD_c} \otimes T}
\eed
which is defined on $\gotK_c := \gotD_c \otimes \gotT = L^2(\gD_c,\transpose{(\gotT \oplus \gotT)})$.
In the following we denote elements of of $\gotK_c$ by $\vec{f}$. In particular, we use the notation
\bed
\vec{f} =
\begin{pmatrix}
\vec{f}_1\\
\vec{f}_2
\end{pmatrix}, \quad \vec{f}_j \in L^2(\gD_c,\gotT), \quad j =1,2.
\eed

Let us construct the boundary triplet $\Pi_{S_c} = \{\cH^{S_c},\gG^{S_c}_0,\gG^{S_c}_1\}$ for $S^*_c$.
Since the  Weyl function $M^{D_c}(\cdot)$ is of quasi scalar type we follow Remark \ref{rem:3.9}.
To this end we introduce the subspaces $\cH^{D_c}_1 := \dC$ and $\cH^{D_c}_2 = \dC$. This yields $\cH^{S_c}_1 = \gotT$
and $\cH^{S_c}_2 = \gotT$ as well as
\bed
\cH^{S_c} =
\begin{matrix}
\cH^{S_c}_1\\
\oplus\\
\cH^{S_c}_2
\end{matrix} =
\begin{matrix}
\gotT\\
\oplus\\
\gotT
\end{matrix}\,.
\eed
Furthermore, we have
\bed
\gG^{S_c}_0\vec{f} =
\frac{1}{\sqrt{2}}
\begin{pmatrix}
\sqrt{\im(m^{D_c}_1(i-T))}(\vec{f}_1(a) + \vec{f}_1(b))\\
\sqrt{\im(m^{D_c}_2(i-T))}(\vec{f}_1(a) - \vec{f}_1(b))
\end{pmatrix}
\eed
and
\begin{align*}
\gG^{S_c}_1&\vec{f} =\\
&\frac{ic}{\sqrt{2}}
\begin{pmatrix}
\frac{1}{\sqrt{\im(m^{D_c}_1(i-T))}}\left(\vec{f}_2(a) - \vec{f}_2(b) - \re(m^{D_c}_1(i-T))(\vec{f}_1(a) + \vec{f}_1(b)\right)\\[1ex]
\frac{1}{\sqrt{\im(m^{D_c}_2(i-T))}}\left(\vec{f}_2(a) + \vec{f}_2(b) - \re(m^{D_c}_2(i-T))(\vec{f}_1(a) - \vec{f}_1(b)\right)
\end{pmatrix}\,,\nonumber
\end{align*}
$\vec{f} \in \dom(D^*_c \otimes I_\gotT) \cap \dom(I_{\gotD_c} \otimes T)$. The $\gga$-field $\gga^{S_c}(\cdot): \cH^{S_c} \longrightarrow \gotK_c$
is computed by
\begin{align*}
(\gga^{S_c}&(z)\vec{\xi})(x) =\\
&\frac{1}{\sqrt{2}}
\begin{pmatrix}
\frac{\cos(k(z - T)(x-\nu))}{\cos(k(z - T)d)\sqrt{\im(m^{D_c}_1(i-T))}} & -\frac{\sin(k(z - T)(x - \nu))}{\sin(k(z-T)d)\sqrt{\im(m^{D_c}_2(i-T))}}\\[2ex]
i\frac{k_1(z-T)\sin(k(z-T)(x-\nu))}{\cos(k(z-T)d)\sqrt{\im(m^{D_c}_1(i-T))}} & i\frac{k_1(z-T)\cos(k(z-T)(x-\nu))}{\sin(k(z-T)d)\sqrt{\im(m^{D_c}_2(i-T))}}
\end{pmatrix}
\begin{pmatrix}
\xi_1\\
\xi_2
\end{pmatrix},
\end{align*}
$z \in \dC_{\pm}$.
The Weyl function $M^{S_c}(\cdot): \cH^{S_c} \longrightarrow \cH^{S_c}$ is given by
\bed
M^{S_c}(z) =
\begin{pmatrix}
\frac{m^{D_c}_1(z-T) - \re(m^{D_c}_1(i-T))}{\im(m^{D_c}_1(i-T))} & 0\\
0 & \frac{m^{D_c}_2(z-T) - \re(m^{D_c}_2(i-T))}{\im(m^{D_c}_2(i-T))},
\end{pmatrix},  \quad z \in \dC_{\pm}.
\eed
\begin{remark}
{\em
Let us note that for the Dirac operator $D_c$  on a bounded interval the boundary triplet
$\Pi_{D_c}\wh{\otimes}I_\gotT = \{\wh{\cH^S},\wh{\gG^S_0},\wh{\gG^S_1}\}$ where
$\wh{\cH^S} := \dC^2 \otimes \gotT$, $\wh{\gG^S_0} := \gG^{D_c}_0 \wh\otimes I_\gotT$ and
 $\wh{\gG^S_1} := \gG^{D_c}_1 \wh\otimes I_\gotT$ is already a boundary triplet for $S^*_c$ where
$S_c = D_c \otimes I_\gotT + I_{\gotD_c}\otimes T$, see \cite{BoiNeiPop2016}. In other words it is not necessary to regularize
the boundary triplet $\Pi_{D_c}\wh{\otimes}I_\gotT$.
}
\end{remark}

\section{A model for electronic transport through a boson cavity}

Let us propose a simple model describing the electronic transport through an optical cavity, cf. \cite{NeidWilhelmZag2014,NeidWilhelmZag2015}. We
consider the Hilbert space $\gotH = L^2(\dR_-) \oplus L^2(\dR_+)$ where $\dR_- =
(-\infty,0)$ and $\dR_+ = (0,\infty)$. On the subspaces $\gotH_l := L^2(\dR_-)$ and
$\gotH_r := L^2(\dR_+)$ we consider the closed symmetric operators $H_l =
-\frac{d^2}{dx^2} + v_l$ and  $H_r = -\frac{d^2}{dx^2} + v_r$ of Subsection
\ref{subsec:4.A.1}. We set $\gotH := \gotH_l \oplus \gotH_r = L^2(\dR)$ and $H := H_l
\oplus H_r$. Notice that $H$ can be regarded as the symmetric operator
\bed
A = -\frac{d^2}{dx^2} + v(x), \qquad v(x) :=
\begin{cases}
v_l & x \in \dR_-\\
v_r & x \in \dR_+
\end{cases}
\eed
with domain $\dom(A) = W^{2,2}_0(\dR) := \{f \in W^{2,2}(\dR): f(0) = f'(0) = 0\}$. The operator $A$ is symmetric and
has deficiency indices $n_\pm(A) = 2$. For simplicity we assume that
\bed
0 \le v_r \le v_l.
\eed
Another one is the extension $A^D = H^D_l \oplus H^D_r$ where $H^D_l$ and $H^D_r$ are
the extensions of $H_l$ and $H_r$, respectively, with Dirichlet boundary conditions at
zero. One easily checks that the triple $\Pi_A = \{\cH^A,\gG^A_0,\gG^A_1\}$ with
\be\la{eq:7.1}
\cH^A := \begin{matrix}
\cH^{H_l}\\
\oplus\\
\cH^{H_r}
\end{matrix} =
\begin{matrix}
\dC \\
\oplus \\
\dC\; ,
\end{matrix} \qquad
\gG^A_0 f:=
\begin{pmatrix}
f(-0)\\
f(+0)
\end{pmatrix}, \qquad
\gG^A_1 :=
\begin{pmatrix}
-f'(-0)\\
f'(+0)
\end{pmatrix},
\ee
defines a boundary triplet for $A^*$, cf. Subsection \ref{subsec:4.A.1}. The Weyl function $M^A(z)$ of the boundary triplet $\Pi_A$ is given by
\bed
M^A(z) =
\begin{pmatrix}
m^{H_l}(z) & 0\\
0 & m^{H_r}(z)
\end{pmatrix} =
\begin{pmatrix}
i\sqrt{z-v_l} & 0\\
0 & i\sqrt{z - v_r}
\end{pmatrix}, \qquad z \in \rho(H^D).
\eed
where $A^D = A_0 := A^*\upharpoonright\ker(\gG^A_0)$.

Any other self-adjoint extension of $A$ is given by a self-adjoint relation $\gT = G(B)$ in $\dC^2$, where $B$ is a self-adjoint operator given by
\bed
B =
\begin{pmatrix}
\ga & \gga\\
\overline{\gga} & \gb
\end{pmatrix},
\eed
cf. Proposition \ref{prop2.1}. The self-adjoint extension $A_B := A^*\upharpoonright\ker(\gG^A_1 - B\gG^A_0)$ corresponds to the boundary conditions
\bed
\begin{split}
f'(-0) &= -\ga f(-0) - \gga f(+0)\\
f'(+0) &= \overline{\gga} f(-0) + \gb f(+0)
\end{split}, \quad f \in \dom(A^*) = W^{2,2}(\dR_-) \oplus W^{2,2}(\dR_+).
\eed
If the matrix $B$ is diagonal, then there is no coupling between the left and right quantum system, i.e.
the operator $A_B$ decomposes into a direct sum of two self-adjoint operators acting on $\gotH_l$ and
$\gotH_r$, respectively. If $\gga \not= 0$, then in some sense the left and right system interact.

Let us view the point zero as a quantum dot or quantum cavity. In particular, the Hilbert space $\cH^A = \dC^2$ is viewed as the  state space of the quantum dot and the self-adjoint operator $B$ as the Hamiltonian of the dot. The Hamiltonian $B$ describes a two level system to which we are going to couple to bosons. The state space of the bosons is the Hilbert space $\gotT = l_2(\dN_0)$, $\dN_0 := \{0,1,2\ldots\}$. The boson operator $T$ is given by
\bed%\la{eq:4.4}
\begin{split}
T\vec \xi &= T\{\xi_k\}_{k\in\dN_0} = \{k\xi_k\}_{k \in \dN_0},\\
\vec \xi &= \{\xi_k\}_{k\in\dN_0}  \in \dom(T) := \{\{\xi_k\}_{k\in\dN_0} \in l_2(\dN_0): \{k\xi_k\}_{k \in \dN_0}\in l_2(\dN_0)\}.
\end{split}
\eed
The Hamiltonian $T$ describes a system of bosons which do not interact mutually.
The number of bosons is not fixed and varies from zero to infinity.
The Hilbert space $\gotT$ has a natural basis given by $e_k = \{\gd_{kj}\}_{j\in\dN_0}$.
Let us introduce the creation and annihilation operator
$b^*$ and $b$, respectively, defined by
\bed
b^*e_k = \sqrt{k+1}e_{k+1}, \quad k \in \dN_0, \quad \mbox{and} \quad be_k = \sqrt{k}e_{k-1}, \quad k \in \dN_0,
\eed
where $e_{-1} = 0$. One easily checks that $T = b^*b$.

Let us consider the  compound system $\{\cH^C,C\}$ consisting of the two level quantum
system $\{\cH^A,B\}$ and of the boson system $\{l_2(\dN_0),T\}$. Its state space is
given by $\cH^C := \cH^A \otimes l_2(\dN_0) = l_2(\dN_0,\cH^A) = l_2(\dN_0,\dC^2)$ and
the compound Hamiltonian  $C := \overline{B \otimes I_\gotT + I_{\cH^A}\otimes T}$
where obviously $\dom(C) = \dom(I_{\cH^A}\otimes T)$. The Hamiltonian $C$ does not
describe any interaction between the system $\{\cH^A,B\}$ and the bosons. To introduce
such an interaction we consider the so-called Jaynes-Cumming Hamiltonian,
cf. \cite{jaynes1963}. To this end  we consider  the eigenvalues $\gl^B_0$ and $\gl^B_1$
of B and the corresponding normalized eigenvectors $e^B_0$ and $e^B_1$. We assume that
$\gl^B_0 < \gl^B_1$. Notice that $B$ admits the representation
\bed
B = \gl^B_0 (\cdot,e^B_0)e^B_0  + \gl^B_1(\cdot,e^B_1)e^B_1.
\eed
Let us define the matrices
\bed
\begin{split}
\gs^B_+e^B_0 &= e^B_1, \qquad \gs^B_+e^B_1 = 0\\
\gs^B_-e^B_0 &= 0, \qquad \gs^B_-e^B_1 = e^B_0.
\end{split}
\eed
One easily checks that
\bed
B = \gl^B_1 \gs^B_+\gs^B_- + \gl^B_0 \gs^B_-\gs^B_+.
\eed
We set
\bed V_{JC} := \gs^B_+ \otimes b  + \gs^B_- \otimes b^*, \quad \dom(V_{JC}) =
\dom(I_{\cH^A} \otimes \sqrt{T}), \eed
and define the Jaynes-Cummings Hamiltonian $C_{JC}$ by  setting
\bed
C_{JC} = B \otimes I_\gotT+ I_{\cH^A} \otimes T + \gt V_{JC}, \quad \gt \in \dR.
\eed
One easily checks that the perturbation $V_{JC}$ is infinitesimally small with respect
to $I_{\cH^A} \otimes T$ which yields that $C_{JC}$ is self-adjoint with domain
$\dom(C_{JC}) = \dom(I_{\cH^A} \otimes T)$.

Let us consider the closed symmetric operator
\bed 
S := A \otimes I_\gotT + I_\gotH \otimes T
\eed
in the Hilbert space $\gotK := \gotH \otimes \gotT$.  Setting
\bed
\begin{split}
\gotK_l &:= \gotH_l \otimes \gotT, \qquad S_l := H_l \otimes I_\gotT + I_{\gotH_l} \otimes T, \\
\gotK_r &:= \gotH_r \otimes \gotT, \qquad S_r := H_r \otimes I_\gotT + I_{\gotH_r} \otimes T,
\end{split}
\eed
we obtain
\bed
\gotK = \gotK_l \oplus \gotK_r \quad \mbox{and} \quad S = S_l \oplus S_r.
\eed
It is desirable  to define the  operator $\wt S = \wt S^*$ describing the point
contact of the quantum system $\{\gotH,A_B\}$  with  $A_B :=
A^*\upharpoonright\ker(\gG^A_1 - B\gG^A_0)$, to the boson reservoir  treating  $C_{JC}$
as the boundary operator  with respect to a triplet $\Pi_S = \Pi_A \wh \otimes I_\gotT$,
i.e. by setting  $\wt S := S_{C_{JC}}$. However,  the triplet $\Pi_S = \Pi_A \wh\otimes
I_\gotT = \Pi_{H_l} \wh\otimes I_\gotT \oplus \Pi_{H_r} \wh\otimes I_\gotT$ is in general not a
boundary triplet for $S^*$ if $T$ is unbounded. Therefore the above treatment is
incorrect. To get a boundary triplet for $S^*$ one  has  to regularize the triplet $\Pi_S$ in accordance
with Theorem~\ref{th:2.5}. This  leads to the boundary triplet $\wt \Pi_S = \{\cH^S,\wt
\gG^S_0,\gG^S_1\} = \Pi_{S_l} \oplus \Pi_{S_r}$, where $\Pi_{S_r}$ and $\Pi_{S_l}$ are
the normalized boundary triplets \eqref{6.2_B_triplet_for_S-L_oper} and \eqref{eq:6.3},
respectively. The corresponding Weyl function $M^S(\cdot)$ is given by
\bed
M^S(z) =
\begin{pmatrix}
M^{S_l}(z) & 0\\
0 & M^{S_r}(z)
\end{pmatrix}, \quad z \in \rho(S_0), \quad S_0 = S^D_l \oplus S^D_r,
\eed
where $M^{S_r}(\cdot)$ and $M^{S_l}(\cdot)$ are defined by \eqref{eq:6.4a} and
\eqref{eq:6.7}, respectively. Moreover,  $S^D_l = H^D_l \otimes I_\gotT + I_{\gotH_l}
\otimes T$ and $S^D_r = H^D_r \otimes I_\gotT + I_{\gotH_l} \otimes T$.

The boundary operator  $\wt C_{JC}$  corresponding to the extension  $\wt S :=
S_{C_{JC}}$  in the boundary triplet $\wt \Pi_S$  is the following  regularization of
$C_{JC}:$
   \be\la{eq:6.1}
\wt C_{JC} := R^{-1}(C_{JC} - Q)R^{-1}, \quad \dom(\wt C_{JC}) = \dom(T^{3/2}),
   \ee
with
  \be\la{eq:7.2_Q}
 Q :=
\begin{pmatrix}
Q_l & 0\\
0& Q_r
\end{pmatrix}
=
\begin{pmatrix}
\re(m^{H_l}(i-T)) & 0\\
0 & \re(m^{H_r}(i-T))
\end{pmatrix}
\ee
and
\be\la{eq:7.3_R} R :=
\begin{pmatrix}
R_l & 0\\
0& R_r
\end{pmatrix} =
\begin{pmatrix}
\sqrt{\im(m^{H_l}(i-T))} & 0\\
0 & \sqrt{\im(m^{H_r}(i-T))}
\end{pmatrix}\; ,
\ee
where $m^{H_r}(\cdot)$ and $m^{H_l}(\cdot)$ are given by \eqref{eq:6.1a} and \eqref{eq:6.4}.

Next we show  that the operator $\wt C_{JC}$ is well-defined and self-adjoint.
We set
\bed
\wt C := R^{-1}(C - Q)R^{-1} \quad \mbox{and} \quad \wt V_{JC} := R^{-1} V_{JC} R^{-1}
\eed
and consider the representation  $\wt C_{JC} = \wt B + \wt T + \gt\wt V_{JC} =\wt C +
\gt\wt V_{JC}$ where
\bed
\wt C := \wt B + \wt T \quad \mbox{and} \quad \wt B :=R^{-1} (B \otimes I_\gotT) R^{-1},
\quad \wt T := R^{-1}(I_{\cH^A} \otimes T - Q)R^{-1}.
  \eed
It follows with account of  \eqref{eq:7.2_Q} and \eqref{eq:7.3_R}  that
  \bed
\begin{split}
\wt T := R^{-1}&(I_{\cH^A} \otimes T - Q)R^{-1} \\
&=
\begin{pmatrix}
R^{-1}_l(T - Q_l)R^{-1}_l & 0\\
0 & R^{-1}_r(T - Q_r)R^{-1}_r
\end{pmatrix}
=:
\begin{pmatrix}
\wt T_l & 0\\
0 & \wt T_r
\end{pmatrix}.
\end{split}
\eed
Let us introduce the operators
\bed
\begin{split}
Z_l &:= \sqrt{\sqrt{I_\gotT + (T+v_l)^2}+ T+ v_l} \ge I_\gotTß, \\
Z_r &:=\sqrt{\sqrt{I_\gotT + (T+v_r)^2}+ T+ v_r} \ge I_\gotT\,.
\end{split}
\eed
Clearly, these operators are  self-adjoint and
 $\dom(Z_l)= \dom(Z_r)= \dom(T^{1/2})$. A straightforward computation shows that
\bed
R =\frac{1}{\sqrt[4]{2}}
\begin{pmatrix}
Z^{-1/2}_l & 0\\
0 & Z^{-1/2}_r
\end{pmatrix}
\eed
and
\bed
Q = -\frac{1}{\sqrt{2}}
\begin{pmatrix}
 Z_l & 0\\
0 &  Z_r
\end{pmatrix}.
\eed
It follows that
\bed
\wt T =
\begin{pmatrix}
\sqrt{2}T Z_l + Z^2_l & 0\\
0 & \sqrt{2}T Z_r + Z^2_r
\end{pmatrix}
\eed
It is easily seen that $\dom(\wt T) = \dom(T^{3/2}) \oplus \dom(T^{3/2})$ and  $\wt T =
\wt T^*$. Moreover, it satisfies $\wt T \ge I_{\cH^A \otimes \gotT}$. The operator $\wt
B = R^{-1} (B \otimes I_\gotT) R^{-1}$ takes the form
\bed
\wt B = \sqrt{2}
\begin{pmatrix}
\ga Z_l & \gga Z^{1/2}_l Z^{1/2}_r\\
\overline{\gga}Z^{1/2}_r Z^{1/2}_l & \gb Z_r
\end{pmatrix}\,.
\eed
This operator is symmetric on the natural domain $\dom(T^{1/2}) \oplus
\dom(T^{1/2})$. Moreover, one easily checks that $\wt B$ is infinitesimally small with
respect to $\wt T$. Hence the sum $\wt C := \wt T + \wt B$ is self-adjoint on the domain
$\dom(T^{3/2}) \oplus \dom(T^{3/2})$. Furthermore, a straightforward computation
shows that the operator  $\wt V_{JC}$ is symmetric on $\dom(T)$. Moreover,  $\wt V_{JC}$
is infinitesimally small with respect to $\wt T$. Therefore the operator
sum $\wt C_{JC} = \wt B + \wt T + \wt V_{JC}$ is well-defined on $\dom(T^{3/2}) \oplus
\dom(T^{3/2})$ and self-adjoint.

According to \eqref{eq:3.57} the boundary triplet $\wt \Pi_S := \{\cH^S,\wt \gG^S_0,\wt \gG^S_1\}$
takes the form
   \be\la{eq:7.4_b_trip}
\wt\gG^S_0 f = R (\;\gG^H_0 \wh \otimes I_\gotT) f, \quad \mbox{and} \quad \wt \gG^S_1f
= R^{-1}(\gG^H_1 \wh \otimes I_\gotT - Q\; \gG^H_0 \wh \otimes I_\gotT)f,
  \ee
for $f \in \gotD = \dom(S^*) \cap \dom(I_\gotH \otimes T)$.
Hence the Hamiltonian $\wt S$ describing the contact to the reservoir is given by
\be\la{eq:7.5}
\begin{split}
\wt S = S_{\wt C_{JC}} &:= S^*\upharpoonright\dom(S_{\wt C_{JC}}),\\
\dom(S_{\wt C_{JC}}) &:= \{f\in\dom(S^*): \wt \gG^S_1f = \wt C_{JC}\wt \gG^S_0f\}.
\end{split}
\ee
Inserting \eqref{eq:7.4_b_trip} into \eqref{eq:7.5} one rewrites the last relation as
\be\la{eq:6.2}
\dom(S_{\wt C_{JC}}) \cap \gotD =\{ f\in\gotD: \gG^H_1 \wh \otimes I_\gotT f = C_{JC}\;\gG^H_0 \wh \otimes I_\gotT f\}
\ee
which coincides with the original  boundary condition.

The system $\{\gotK,S_{\wt C_{JC}}\}$ can be regarded as the Jaynes-Cummings model coupled to leads. It describes the electronic transport through a dot or cavity where the electrons interact with bosons. Notice that the interaction  of electrons to bosons is restricted only to one point.

\begin{remark}
{\em
Let us make some comments.

\item[\;\;(i)]
The system $\{\gotK,S_0\}$, $S_0 := S^*\upharpoonright\ker(\gG^S_0) = A^D \otimes I_\gotT + I_{\gotH_H}\otimes T$ describes a situation where the left system $\{\gotK_l,S^D_l\}$, $S^D_l := S^*_l\upharpoonright\ker(\gG^{S_l}_0)$, and right system $\{\gotK_r,S^D_r\}$, $S^D_r := S^*_r\upharpoonright\ker(\gG^{S_r}_0)$ are completely decoupled.

\item[\;\;(ii)]

If $\gt = 0$, then $\wt C_{JC} = \wt C$. The system $\{\gotK,S_{\wt C}\}$ describes a
situation where the left and right systems $\{\gotK_l,S^D_l\}$ and $\{\gotK_r,S^D_r\}$
are coupled by a point interaction at zero but not by a boson-electron interaction. In
particular, if $B$ is diagonal, then again the left and right systems
$\{\gotK_l,S^D_l\}$ and $\{\gotK_r,S^D_r\}$ are decoupled.

\item[\;\;(iii)]  If $\gt \not= 0$, then the system $\{\gotK,S_{\wt C_{JC}}\}$ can be viewed as fully
coupled: the left and right systems  are coupled by point interaction at zero as well as
by a boson-electron interaction at zero.

\item[\;\;(iv)] If $\gt \not= 0$ and $B$ is diagonal,  then the left and right systems are only coupled by the boson-electron interaction at zero.

\item[\;\;(v)] The model above can be viewed as a simple model of a solar cell or a light emitting diode (LED).

\item[\;\;(vi)] Using the result of subsection \ref{subsec:B.1} one can introduce a similar model where the Schr\"odinger operators $H_l$ and $H_r$ are replaced by Dirac operators $D_l$ and $D_r$, respectively.

\item[\;\;(vii)] Finally, we  mention  that the self-adjoint operators $C_{JC}$ and  $\wt C_{JC}$
have  Jacobi structure  in a basis formed by the orthogonal systems  $\{e^B_0 \otimes
e_k,e^B_1 \otimes e_k\}_{k\in\dN_0}$ in $\cH^A \otimes \gotT$. To see this we note first
that the subspaces spanned by $\{e^B_0 \otimes e_{k+1}\}_{k \in \dN_0}$ and $\{e^B_1
\otimes e_k\}_{k \in \dN_0}$, respectively,  leave invariant the self-adjoint operators
$T + \gt V_{JC}$ and $\wt T + \gt \wt V_{JC}$. Secondary, the operators $B \otimes
I_\gotT$ and $\wt B$ leave invariant the subspace spanned by the systems  $\{e^B_0
\otimes e_k\}_{k \in \dN_0}$ and $\{e^B_1 \otimes e_k\}_{k \in \dN_0}$,  respectively.
Both statements  ensures the Jacobi structure of both operators $C_{JC}$ and
$\wt C_{JC}$.
}
\end{remark}

\section*{Acknowledgments}
This work was financially supported by the Ministry
of Education and Science of the Russian Federation
(the Agreement number No. 02.A03.21.0008),
by  the Government
of the Russian Federation (grant 074-U01), by grant MK-5161.2016.1
of the President of the Russian Federation,  by DFG grants NE~1439/3-1 and BR~1686/3-1, and
by grant 16-11-10330 of the Russian Science Foundation.
The preparation of the paper was supported by the European Research Council via ERC-2010-AdG no 267802
(``Analysis of Multiscale Systems Driven by Functionals'').
I.~P. and A.~B. thank WIAS (Berlin)
for hospitality and I.~P. also TU Clausthal. H.~N. thanks  the ITMO University of St. Petersburg (Russia)
for financial support and hospitality.

%\bibliographystyle{plain}
%\bibliography{operator}

\def\cprime{$'$} \def\cprime{$'$} \def\cprime{$'$} \def\cprime{$'$}
  \def\cprime{$'$} \def\cprime{$'$} \def\cprime{$'$}
  \def\lfhook#1{\setbox0=\hbox{#1}{\ooalign{\hidewidth
  \lower1.5ex\hbox{'}\hidewidth\crcr\unhbox0}}} \def\cprime{$'$}
  \def\cprime{$'$} \def\cprime{$'$} \def\cprime{$'$} \def\cprime{$'$}
  \def\cprime{$'$} \def\cprime{$'$} \def\cprime{$'$} \def\cprime{$'$}
  \def\cprime{$'$} \def\lfhook#1{\setbox0=\hbox{#1}{\ooalign{\hidewidth
  \lower1.5ex\hbox{'}\hidewidth\crcr\unhbox0}}} \def\cprime{$'$}

\section*{Addresses:}

A.A.~Boitsev,\\ 
St. Petersburg National Research University
of Information Technologies, Mechanics and Optics,
49 Kronverkskiy, St. Petersburg 197101, Russia\\
E-mail: boitsevanton@gmail.com\\

\noindent
J.F.~Brasche\\
Institut f\"ur Mathematik, 
TU Clausthal,
Erzstr. 1, D-38678 Clausthal-Zellerfeld, Germany\\
E-mail:johannes.brasche@tu-clausthal.de\\

\noindent
M.M.~Malamud\\
Institute of Applied Mathematics and Mechanics,
NAS Ukraine,\\
Peoples Friendship University of Russia (RUDN University),
6 Miklukho-Maklaya Street, Moscow, 117198, Russia\\
E-mail: malamud3m@gmail.com\\

\noindent
H.~Neidhardt,
Weierstrass Institute of Applied Analysis and Stoachstics,
 Mohrenstr. 39, D-10117 Berlin, Germany\\
E-mail: hagen.neidhardt@wias-berlin.de\\

\noindent
I.Yu.~Popov\\
St. Petersburg National Research University
of
Information Technologies, Mechanics and Optics,
49 Kronverkskiy, St. Petersburg 197101, Russia\\
E-mail: popov1955@gmail.com

\end{document}